\algnewcommand\algorithmicswitch{\textbf{switch}}
\algnewcommand\algorithmiccase{\textbf{case}}
\algnewcommand\algorithmicassert{\texttt{assert}}
\algnewcommand\Assert[1]{\State \algorithmicassert(#1)}%
\newtheorem{definition}{Definition}[section]
\newtheorem{lemma}{Lemma}[section]
\newtheorem{corollary}{Corollary}[section]
\newtheorem{theorem}{Theorem}[section]
\newtheorem{remark}{Remark}[section]
\newtheorem{example}{Example}[section]
\newtheorem{problem}{Problem}[section]
\begin{document}
\title{\huge Temporal Logic Trees for Model Checking and Control Synthesis of Uncertain Discrete-time Systems}
\author{Yulong Gao, Alessandro Abate,  Frank J. Jiang, Mirco Giacobbe, Lihua Xie, and Karl H. Johansson
\thanks{This work is supported by the Knut and Alice Wallenberg Foundation, the Swedish Strategic Research Foundation, the Swedish Research Council, and the  Wallenberg AI, Autonomous Systems and Software Program (WASP) funded by the Knut and Alice Wallenberg Foundation.}
\thanks{Y. Gao, F. J. Jiang, and K. H. Johansson are
with the Division of Decision and Control
Systems, Royal Institute of Technology, Stockholm, Sweden
        {\tt\small yulongg@kth.se, frankji@kth.se,  kallej@kth.se}}%
\thanks{A. Abate and M. Giacobbe are with the Department of Computer Science, University of Oxford, United Kingdom
        {\tt\small alessandro.abate@cs.ox.ac.uk, mirco.giacobbe@cs.ox.ac.uk}}%
\thanks{Y. Gao and L. Xie are with
the School of Electrical and Electronic Engineering, Nanyang Technological University, Singapore
        {\tt\small ygao009@ntu.edu.sg, elhxie@ntu.edu.sg}}
}


\maketitle

\begin{abstract}
We propose algorithms for performing model checking and control synthesis for discrete-time uncertain  systems under linear temporal logic (LTL) specifications.
We construct temporal logic trees (TLT) from LTL formulae via reachability analysis. In contrast to automaton-based methods, the construction of the TLT is  abstraction-free for infinite systems, that is, we do not construct discrete abstractions of the infinite systems.
Moreover, for a given transition system and an LTL formula, we prove that there exist both a universal TLT and an existential TLT via minimal and maximal reachability analysis, respectively.
We show that the universal TLT is an underapproximation for the LTL formula and the existential TLT is an overapproximation. We provide sufficient conditions and necessary conditions to verify whether a  transition  system  satisfies  an  LTL  formula by using the TLT  approximations.
As a major contribution of this work,
for a controlled transition system and an LTL formula, we prove that  a controlled TLT can be constructed from the LTL formula via control-dependent reachability analysis.   Based on the controlled TLT, we design an online control synthesis algorithm, under which a set of feasible control inputs can be generated at each time step. We also prove that this algorithm is recursively feasible. We illustrate the proposed methods for both finite and infinite systems and highlight the generality and online scalability with two simulated examples.
\end{abstract}



\ifCLASSOPTIONcaptionsoff
  \newpage
\fi


\section{Introduction}

In the recent past the integration of computer science and control theory has promoted the development of new areas such as embedded systems design~\cite{Astrom2013}, hybrid systems theory ~\cite{Tabuada2009book}, and, more recently,  cyber-physical systems~\cite{Alurbook2015}.  Given a model of a dynamical process and a specification (i.e., a description of desired properties), two fundamental problems arise:
\begin{itemize}
  \item \emph{model checking}: automatically verifying whether the behavior of the model  satisfies the given specification;
  \item \emph{formal control synthesis}:  automatically designing controllers (inputs to the system) so that the behavior of the model provably satisfies the given specification.
\end{itemize}
Both problems are of great interest in disparate and diverse applications, such as robotics, transportation systems, and safety-critical embedded system design.  However, they  are challenging problems when considering dynamical systems affected by uncertainty, and in particular uncertain infinite (uncountable) systems under complex, temporal logic specifications.
In this paper, we provide solutions to the model checking and formal control synthesis problems, for discrete-time uncertain systems under linear temporal logic (LTL) specification.

\vspace{-0.2cm}
\subsection{Related Work}
In general, LTL formulae are expressive enough to capture many important properties, e.g., safety (nothing bad will ever happen), liveness (something good will eventually
happen), and more complex combinations of Boolean and temporal statements \cite{Baier2008}.


In the area of formal verification, a dynamical process is by and large modeled as a  finite transition system.
A typical approach to both model checking and control synthesis for a finite transition system and an LTL formula leverages automata theory.
It is known that each LTL formula can be transformed to an equivalent automaton \cite{Vardi1996}.
The model checking problem can be solved by verifying whether the intersection of the trace set of the transition system and the set of accepted languages of the automaton expressing the negation of the LTL formula is empty, or not \cite{Baier2008}.
 The control synthesis problem can be solved by the following steps: (1) translate the LTL formula into a deterministic automaton; (2) build a ``product automaton'' between the transition system and the obtained  automaton; (3) transform the product automaton into a game \cite{Rabin1969}; (4) solve the game \cite{Emerson1985,Piterman2006,Horn2005}; and (5) map the solution into a control strategy. 

In recent years, the study of model checking and control synthesis for dynamical systems with continuous (uncountable) spaces, which extends the standard setup in formal verification, has attracted significant attention within the control community. This has enabled the formal control synthesis for  interesting properties, which are more complex than the usual control objectives such as stability and set invariance. In order to adapt automaton-based methods to infinite systems, \emph{abstraction} plays a central role in both model checking and control synthesis, which entails:
(1) to abstract an infinite system to a finite transition system; (2) to conduct automaton-based model checking or control synthesis for the finite transition system; (3) if a solution is found, to map it back to the infinite system; otherwise, to refine the finite transition system and repeat the steps above.

In order to show the correctness of the solution  obtained from the abstracted finite system over the infinite system, an equivalence or inclusion
 relation between the abstracted finite system and the infinite system needs to be established  \cite{Girard2007}. Relevant notions include (approximate) bisimulations  and simulations. These relations and their variants have been explored for  systems that are incrementally (input-to-state)
stable~\cite{Girard2010,Zamani2014}, or systems with similar properties~\cite{Zamani2012}.  Recent work \cite{Yu2019} shows that the condition of approximate simulation  can be relaxed to controlled globally
asymptotic or practical stability with respect to a given set for nonlinear  systems. We remark that such condition holds for only a small class of systems in practice.

Based on abstractions, the problem of model checking for infinite systems has been studied in~\cite{TabuadaHSCC2003,YordanovAuto2013}. In~\cite{TabuadaHSCC2003}, it is shown that  model checking for  discrete-time, controllable, linear systems from LTL formulae is decidable through an equivalent finite abstraction.
In~\cite{YordanovAuto2013}, the authors study the problem of verifying whether a linear system  with additive uncertainty from some initial states satisfies a fragment of LTL formulae, which can be transformed to a deterministic B\"{u}chi automaton. The key idea is to use a formula-guided method to construct and refine a finite system abstracted from the linear system and guarantee their equivalence.
Along the same line, the problem of control synthesis has also been widely studied for linear systems~\cite{YordanovTAC2012}, nonlinear systems~\cite{Meyer2019}, stochastic systems~\cite{Haesaert2018}, hybrid systems~\cite{Karaman2008}, and stochastic hybrid systems~\cite{Cauchi2019}. The applications of control synthesis under LTL specifications include  single-robot control in dynamic environments~\cite{Ulusoy2014}, multi-robot control~\cite{Guo2016}, and transportation control~\cite{Coogan2016}.

Beyond automata-based methods, alternative attempts have been made for specific model classes. Receding horizon methods  are used to design controllers under LTL for deterministic linear systems~\cite{Ding2014} and uncertain linear systems~\cite{Wongpiromsarn2012}.  The control of Markov decision processes under LTL is considered  in~\cite{Ding2014} and further applied to multi-robot coordination in~\cite{Schillinger2019}. Control synthesis for dynamical systems has been extended also to other specifications like signal temporal logic  (STL)~\cite{Lindemann2019}, and probabilistic computational tree logic (PCTL)~\cite{Kwiatkowska2011}.
Interested readers may refer to the tutorial paper~\cite{BeltaCDC2016} and the book~\cite{Beltabook2017} for  detailed discussions.

\vspace{-0.2cm}
\subsection{Motivations}
Although the last two decades have witnessed a great progress on model checking and control synthesis for infinite systems from both theoretical and practical perspectives, there are some inherent restrictions in the dominant automaton-based methods.

First, abstraction from infinite systems to finite systems suffers from the curse of dimensionality:
abstraction techniques usually  partition the state space, and transitions are constructed via  reachability analysis. The computational complexity  increases exponentially with the system dimension.
Many works are dedicated to improving the computational efficiency by using overapproximation for (mixed) monotone systems \cite{Coogan2016}, or by exploiting the structure of the uncertainty \cite{Cauchi2019}. However, another issue with abstraction techniques is that only systems with ``good properties'' (e.g., incremental stability, or smooth dynamics) might admit finite abstractions with guarantees, which limits their generality.

Second, there are few results for handling general LTL formulae when an infinite system comes with uncertainty (e.g., bounded disturbance, or additive noise). In most contributions on control synthesis of uncertain systems, fragments of LTL formulae (e.g., bounded LTL or \emph{co-safe} LTL) are usually taken into account \cite{Sessa2018,Hashimoto2019}. As mentioned before, the LTL formulae are defined over infinite trajectories and it is  difficult to control  uncertainties propagating along such trajectories. This restriction results from conservative over-approximation in the computation  of forward reachable sets, which is widely used for abstraction, and which leads to  information loss when used with automaton-based methods.

Third, current methods usually lack online scalability. In many applications, full \emph{a priori} knowledge of a specification cannot be obtained. For example,  consider an automated vehicle required to move from some initial position to some destination without colliding into any obstacle (e.g., other vehicles and pedestrians). Since the trajectories of other vehicles and pedestrians cannot be accurately  predicted, we cannot in advance define a specification that captures all the possibilities during the navigation process. Thus, offline design of automaton-based methods is significantly restricted.

Finally, the controller obtained from automaton-based methods usually only contains a single control policy.
In some applications, e.g., human-in-the-loop control~\cite{Inoue2018,GaoIFAC2020}, a set of feasible control inputs are needed to provide more degrees of freedom in the actual implementation. For example, \cite{Inoue2018} studies a control problem where humans are given a higher priority than the automated system in the decision making process. A controller is designed to provide a set of admissible control inputs with enough degrees of freedom to allow the human
operator to easily complete the task.

\vspace{-0.2cm}
\subsection{Contributions}

Motivated by the above, this paper studies LTL model checking and reachability-based control synthesis for discrete-time uncertain systems. There are many results for reachability analysis on infinite systems \cite{BlanchiniBook2007,LygerosAuto1999} and the computation of both forward and backward reachable sets has been widely studied  \cite{ChenTAC2018,AlthoffTAC2014,Mitchell2011}. The connection between STL and reachability analysis is studied in \cite{Chen2018}, which inspires our work. The main contributions of this paper are three-fold:

(1) We construct tree structures from  LTL formulae via reachability analysis over dynamical models. We denote the  tree structure as a temporal logic tree (TLT). The connection between TLT and LTL is shown to hold for both uncertain finite and infinite models. The construction of the TLT is  abstraction-free for infinite systems and admits online implementation, as demonstrated in Section~\ref{Sec:Example}.
More specifically, given a system and an LTL formula, we prove that both a universal TLT and an existential TLT can be constructed  for the LTL formula via minimal and maximal reachability analysis, respectively (Theorems~\ref{The:UniTLTLTL} and \ref{The:ExisTLTLTL}). We also show that the universal TLT is an underapproximation for the LTL formula and the existential TLT is an overapproximation for the LTL formula. Our formulation does not restrict the generality of LTL formulae.

(2)
We provide a method for model checking of discrete-time dynamical systems using TLTs. We provide sufficient conditions to verify whether a transition system satisfies an LTL formula by using universal TLTs for under-approximating the satisfaction set,
or alternatively using existential TLTs for  over-approximating the violation set (Theorem~\ref{The:SufMC}). Dually, we   provide necessary conditions by using existential TLTs for over-approximating the satisfaction set,
or alternatively using universal TLTs for  under-approximating the violation set (Theorem~\ref{The:NecMC}).

(3) As a core and novel contribution of this work, we detail an approach for online control synthesis for a controlled transition system to guarantee that the controlled system will satisfy the specified LTL formula. Given a control system and an LTL formula, we construct a controlled TLT  (Theorem~\ref{The:ConTLTLTL}). Based on the obtained TLT, we design an online control synthesis algorithm, under which a set of feasible control inputs is generated at each time step (Algorithm 3). We prove that this algorithm is recursively feasible (Theorem~\ref{The:RecFea}). We provide applications to show the scalability of our methods.

\vspace{-0.2cm}
\subsection{Organization}
The remainder of the paper is organized as follows. In Section~\ref{Sec:Preliminaries}, we define the notion of transition system, recall the problem of reachability analysis, and provide preliminaries on LTL specifications. In Section~\ref{Sec:TLT}, we introduce TLT structures and  show how to construct  a TLT from a given LTL formula.
In Section~\ref{Sec:MoChe}, we solve the LTL model checking problem through the constructed TLT. Section~\ref{Sec:ConSyn} solves the LTL control synthesis problem. 
In Section~\ref{Sec:Example}, we illustrate
the effectiveness of our approaches with two numerical examples. In
Section~\ref{Sec:Conclusion}, we conclude the paper with a discussion about
our work and future directions.


\begin{figure}
\centering
\begin{subfigure}[t]{.22\textwidth}
	\includegraphics[width=\linewidth]{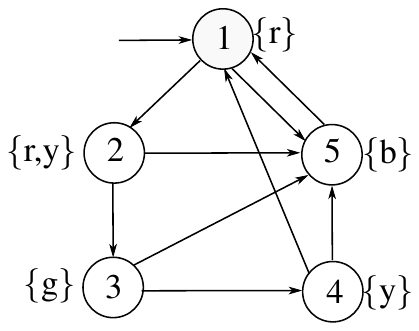}
\caption{}
\end{subfigure}\quad
\begin{subfigure}[t]{.1\textwidth}
	\includegraphics[width=\linewidth]{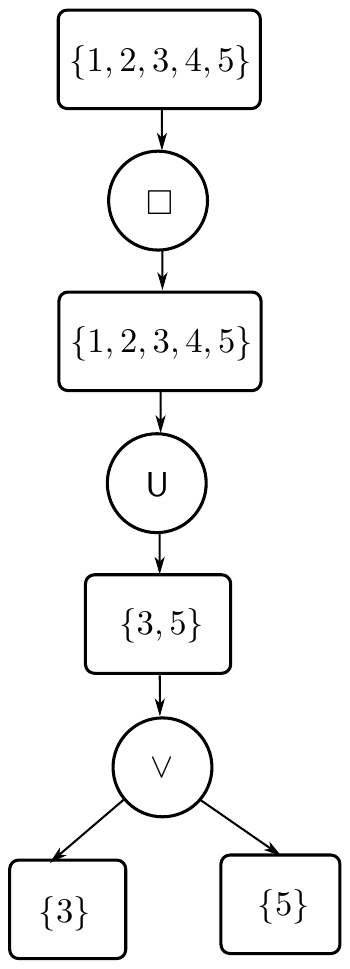}
\caption{}
\end{subfigure}
	\caption{\footnotesize  (a) A transition system illustrating a traffic light example. Labels are shown aside the states. The initial state is denoted by an incoming edge. (b) A TLT corresponding to an LTL formula $\varphi=\Box \diamondsuit  (g \vee b)$ for this system. Note that $\diamondsuit \varphi=\text{true} \cup \varphi$.}
	\label{Fig:trafficlight}
	\vspace{-0.5cm}
\end{figure}

\vspace{-0.2cm}
\section{Preliminaries}\label{Sec:Preliminaries}
This section will first introduce  transition systems and then recall reachability analysis and LTL.

\vspace{-0.2cm}
\subsection{Transition System}
\begin{definition}\label{Def:TS}
  A transition system $\mathsf{TS}$ is a tuple $\mathsf{TS}=(\mathbb{S},\rightarrow,\mathbb{S}_0,\mathcal{AP},L)$  consisting of
  \begin{itemize}
    \item a set $\mathbb{S}$ of states;
    \item a transition relation $\rightarrow\in \mathbb{S}\times \mathbb{S}$ \footnote{Here, the transition relation is not a functional relation, but instead for some state $x$, there may exist two different states $x'$ and $x''$ such that $x\rightarrow x'$ and $x\rightarrow x''$ hold. For notational simplicity, we use $\rightarrow\in \mathbb{S}\times \mathbb{S}$, rather than $\rightarrow\in \mathbb{S}\times 2^\mathbb{S}$. The same claim holds for the controlled transition systems in Section~\ref{Sec:ConSyn}.};
    \item a  set $\mathbb{S}_0\subseteq \mathbb{S}$ of initial states;
    \item a  set $\mathcal{AP}$ of atomic propositions;
    \item a labelling function $L: \mathbb{S} \rightarrow 2^{\mathcal{AP}}$.
  \end{itemize}
\end{definition}

\begin{definition}
  A transition system $\mathsf{TS}$ is said to be finite if $|\mathbb{S}|< \infty$ and $|\mathcal{AP}|< \infty$.
\end{definition}

\begin{definition}\label{Def:DTS}
  For $x\in \mathbb{S}$, the set $\mathsf{Post}(x)$ of direct successors of $x$ is defined by $\mathsf{Post}(x)=\{x'\in \mathbb{S}\mid x\rightarrow x'\}$. 
\end{definition}

\begin{definition}
  A transition system $\mathsf{TS}$ is said to be deterministic if $|\mathbb{S}_0|=1$ and    $|\mathsf{Post}(x)|=1$, $\forall x\in \mathbb{S}$.
\end{definition}

\begin{definition}
  (Trajectory \footnote{Notice that a trajectory $\bm{p}=x_0x_1\ldots x_kx_{k+1}\ldots$ is different from a \emph{trace}, which is the sequence of corresponding sets of atomic propositions, and is denoted by $L(x_0)L(x_1)\ldots L(x_k)L(x_{k+1})\ldots$.}) For a transition system $\mathsf{TS}$, an infinite trajectory $\bm{p}$ starting from $x_0$ is a sequence of states $\bm{p}=x_0x_1\ldots x_kx_{k+1}\ldots$ such that $\forall k\in \mathbb{N}$, $x_{k+1}\in \mathsf{Post}(x_k)$.
\end{definition}

 Denote by $\mathsf{Trajs}(x_0)$ the set of infinite trajectories starting from $x_0$. Let $\mathsf{Trajs}(\mathsf{TS})=\cup_{x\in\mathbb{S}_0}\mathsf{Trajs}(x)$.
For a trajectory $\bm{p}$, the $k$-th state is denoted by $\bm{p}[k]$, i.e., $\bm{p}[k]=x_k$ and the $k$-th prefix is denoted by $\bm{p}[..k]$, i.e., $\bm{p}[..k]= x_0\ldots x_k$.

\begin{example}\label{Exa:TraLig}
A traffic light can be red, green, yellow or
black (not working). The traffic light might stop working at any
time. After it has been repaired, it turns red. Initially, the light is
red. An illustration of such a traffic light is shown in Fig.~\ref{Fig:trafficlight}(a).  We can model the traffic light as a transition system $\mathsf{TS}=(\mathbb{S},\rightarrow,\mathbb{S}_0,\mathcal{AP},L)$:
  \begin{itemize}
    \item $\mathbb{S}=\{1,2,3,4,5\}$;
    \item $\rightarrow=\{(1, 2), (2, 3), (3, 4), (4, 1), (1, 5), (2, 5), (3, 5), \\ (4, 5), (5, 1)\}$;
    \item $\mathbb{S}_0=\{1\}$;
    \item $\mathcal{AP}=\{r, y, g, b\}$;
    \item $L = \{1\rightarrow\{r\},  2\rightarrow\{r,y\}, 3\rightarrow\{g\}, 4\rightarrow\{y\}, \\ 5\rightarrow\{b\}\}$. \qed
  \end{itemize}
\end{example}

\begin{remark}
We can rewrite the following  discrete-time autonomous system  as an infinite transition system:
\vspace{-0.15cm}
\begin{eqnarray*}
\mathsf{S}:
\begin{cases}
  x_{k+1}=f(x_k,w_k), \\
  y_k=g(x_k),
\end{cases}
\end{eqnarray*}
where $x_k\in\mathbb{R}^{n_x}$, $w_k\in \mathbb{R}^{n_w}$, $y_k\in 2^{\mathcal{O}}$, $f: \mathbb{R}^{n_x} \times \mathbb{R}^{n_w}\rightarrow \mathbb{R}^{n_x}$, and $g:\mathbb{R}^{n_x}\rightarrow 2^{\mathcal{O}}$. Here, $\mathcal{O}$ denotes the set of  observations. At each time instant $k$,  the disturbance $w_k$ belongs to a compact set $\mathbb{W}\subset \mathbb{R}^{n_w}$. Denote by $\mathsf{Ini}\subseteq\mathbb{R}^{n_x}$ the set of  initial states.
If $\mathcal{O}$ is finite, the system $\mathsf{S}$ can be rewritten as an infinite transition system $\mathsf{TS}_{\mathsf{S}}=(\mathbb{S},  \rightarrow,\mathbb{S}_0,\mathcal{AP},L)$ with
\begin{itemize}
  \item $\mathbb{S}=\mathbb{R}^{n_x}$;
   \item $\forall x,x'\in \mathbb{S}$,   $x \rightarrow x'$ if and only if there exists $w\in \mathbb{W}$ such that $x'=f(x,w)$;
  \item $\mathbb{S}_0=\mathsf{Ini}$;
  \item  $\mathcal{AP}=\mathcal{O}$;
  \item $L=g$.\qed
\end{itemize}
\end{remark}

\subsection{Reachability Analysis}
This subsection specifies the reachability analysis for a transition system $\mathsf{TS}$.
We first define the minimal reachable set and the maximal reachable set.
\begin{definition}\label{Def:miniReach}
   Consider a transition system $\mathsf{TS}$ and two sets $\Omega_1, \Omega_2 \subseteq \mathbb{S}$. The $k$-step minimal reachable set from $\Omega_1$ to $\Omega_2$ is defined as
   \begin{eqnarray*}
    &&\hspace{-0.7cm}\mathcal{R}^{\rm{m}}(\Omega_1,\Omega_2,k)=\Big\{x_0\in \mathbb{S}   \mid \forall \bm{p}\in \mathsf{Trajs}(x_0), \ {\rm s.t.},\\
    &&\hspace{-0.5cm}   \bm{p}[..k]=x_0\ldots x_{k}, \forall i\in\mathbb{N}_{[0,k-1]}, x_i\in \Omega_1, x_k\in\Omega_2 \Big\}.
    \end{eqnarray*}
    The minimal reachable set from $\Omega_1$ to $\Omega_2$ is defined as
    \begin{eqnarray*}
    &&\hspace{-0.5cm}\mathcal{R}^{\rm{m}}(\Omega_1,\Omega_2)= \bigcup_{k\in \mathbb{N}} \mathcal{R}^{\rm{m}}(\Omega_1,\Omega_2,k).
    \end{eqnarray*}
\end{definition}
\begin{lemma}\label{Lem:miniReach}
  For two sets $\Omega_1, \Omega_2 \subseteq \mathbb{S}$, define
   \begin{eqnarray*}
  &&\mathbb{Q}_{k+1}=\{x\in \Omega_1 \mid   \mathsf{Post}(x)\subseteq \mathbb{Q}_k\} \cup \mathbb{Q}_k , \\
  && \mathbb{Q}_0=\Omega_2.
  \end{eqnarray*}
  Then, $\mathcal{R}^{\rm{m}}(\Omega_1,\Omega_2)=\lim_{k\rightarrow \infty }\mathbb{Q}_k$.
\end{lemma}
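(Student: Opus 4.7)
The plan is to establish the equality $\mathcal{R}^{\rm m}(\Omega_1,\Omega_2)=\lim_{k\to\infty}\mathbb{Q}_k$ in two stages: first, observe that the iterates form a monotone non-decreasing chain so that the limit is well-defined as the union $\bigcup_{k\in\mathbb{N}}\mathbb{Q}_k$; second, prove by induction on $k$ the finite-horizon identity $\mathbb{Q}_k=\bigcup_{i=0}^{k}\mathcal{R}^{\rm m}(\Omega_1,\Omega_2,i)$. Passing to the union over $k$ on both sides then yields the claim by Definition~\ref{Def:miniReach}.

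The monotonicity is immediate from the recursion, which explicitly unions the new candidates with $\mathbb{Q}_k$; hence $\mathbb{Q}_k\subseteq\mathbb{Q}_{k+1}$ and $\lim_{k\to\infty}\mathbb{Q}_k=\bigcup_{k\in\mathbb{N}}\mathbb{Q}_k$. For the inductive identity, the base case $k=0$ holds because the quantifier $\forall i\in\mathbb{N}_{[0,-1]}$ is vacuous, so $\mathcal{R}^{\rm m}(\Omega_1,\Omega_2,0)=\Omega_2=\mathbb{Q}_0$. For the inductive step, in direction $\supseteq$ I would take $x\in\mathcal{R}^{\rm m}(\Omega_1,\Omega_2,k+1)\setminus\Omega_2$, observe that $x\in\Omega_1$, and note that truncating any trajectory from $x$ by its first state produces a trajectory from any $x'\in\mathsf{Post}(x)$ witnessing $x'\in\mathcal{R}^{\rm m}(\Omega_1,\Omega_2,k)\subseteq\mathbb{Q}_k$, whence $x\in\mathbb{Q}_{k+1}$. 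For $\subseteq$, any $x\in\mathbb{Q}_{k+1}\setminus\mathbb{Q}_k$ satisfies $x\in\Omega_1$ with $\mathsf{Post}(x)\subseteq\mathbb{Q}_k=\bigcup_{i=0}^{k}\mathcal{R}^{\rm m}(\Omega_1,\Omega_2,i)$ by the induction hypothesis; prepending $x$ to any trajectory from a successor $x'$ then produces a trajectory from $x$ that meets $\Omega_2$ at some index $j\le k+1$ while staying in $\Omega_1$ beforehand, placing $x$ in the union $\bigcup_{i=0}^{k+1}\mathcal{R}^{\rm m}(\Omega_1,\Omega_2,i)$.

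I expect the main obstacle to be the non-uniformity in the step at which different trajectories from a common state reach $\Omega_2$: distinct successors $x'\in\mathsf{Post}(x)$ may each lie in a different slice $\mathcal{R}^{\rm m}(\Omega_1,\Omega_2,i)$, so a single $k$ cannot in general be chosen simultaneously for all branches. This forces the induction to be carried out at the level of the \emph{cumulative} union $\bigcup_{i=0}^{k}\mathcal{R}^{\rm m}(\Omega_1,\Omega_2,i)$ rather than against the individual slice $\mathcal{R}^{\rm m}(\Omega_1,\Omega_2,k)$, and requires care that the one-step prefix operation maps $\bigcup_{i=0}^{k}\mathcal{R}^{\rm m}(\Omega_1,\Omega_2,i)$ into $\bigcup_{i=0}^{k+1}\mathcal{R}^{\rm m}(\Omega_1,\Omega_2,i)$. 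Once this bookkeeping is in place the conclusion is immediate: $\lim_{k\to\infty}\mathbb{Q}_k=\bigcup_{k\in\mathbb{N}}\bigcup_{i=0}^{k}\mathcal{R}^{\rm m}(\Omega_1,\Omega_2,i)=\bigcup_{k\in\mathbb{N}}\mathcal{R}^{\rm m}(\Omega_1,\Omega_2,k)=\mathcal{R}^{\rm m}(\Omega_1,\Omega_2)$.
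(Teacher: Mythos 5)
Your overall route is the same as the paper's: the paper's proof also rests on the identity $\mathbb{Q}_k=\bigcup_{i\in\mathbb{N}_{[0,k]}}\mathcal{R}^{\rm m}(\Omega_1,\Omega_2,i)$ (which it declares ``easy to see'') and then passes to the limit, so your plan of establishing that identity by induction is faithful to the intended argument, and your $\supseteq$ half of the inductive step is sound. The problem is the $\subseteq$ half, and it is exactly the obstacle you flagged yourself: you correctly observe that distinct successors $x'\in\mathsf{Post}(x)$ may witness membership in different slices $\mathcal{R}^{\rm m}(\Omega_1,\Omega_2,i)$, but passing to the cumulative union $\bigcup_{i=0}^{k}\mathcal{R}^{\rm m}(\Omega_1,\Omega_2,i)$ does not repair this. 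Each slice is defined by a universal quantifier over \emph{all} trajectories with a \emph{single} $i$: every trajectory must be in $\Omega_2$ at exactly step $i$ and in $\Omega_1$ before. Knowing that every trajectory from $x$ reaches $\Omega_2$ at \emph{some} trajectory-dependent step $j\le k+1$ therefore does not place $x$ in any single slice, hence not in their union; the existential over $i$ cannot be pushed inside the universal over trajectories. Concretely, take $\mathbb{S}=\{a,b,c,d,e\}$ with $\mathsf{Post}(a)=\{b,c\}$, $\mathsf{Post}(b)=\{e\}$, $\mathsf{Post}(c)=\{d\}$, $\mathsf{Post}(d)=\{d\}$, $\mathsf{Post}(e)=\{e\}$, $\Omega_1=\{a,c\}$, $\Omega_2=\{b,d\}$. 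Then $\mathbb{Q}_2=\{a,b,c,d\}$, but $a\notin\mathcal{R}^{\rm m}(\Omega_1,\Omega_2,i)$ for every $i$ (for $i=1$ the branch through $c$ fails $x_1\in\Omega_2$; for $i=2$ the branch through $b$ fails $x_1\in\Omega_1$), so $\bigcup_{i=0}^{2}\mathcal{R}^{\rm m}(\Omega_1,\Omega_2,i)=\{b,c,d\}\subsetneq\mathbb{Q}_2$.

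So the inductive identity---and with it the lemma, read literally against Definition~\ref{Def:miniReach}---fails for nondeterministic systems; only the inclusion $\mathcal{R}^{\rm m}(\Omega_1,\Omega_2)\subseteq\lim_{k\to\infty}\mathbb{Q}_k$ survives. This is not a defect you introduced: the paper's own proof asserts the same identity without justification, and the point where your argument breaks is precisely the step the paper skips before invoking Knaster--Tarski. The repair is to read the $k$-step minimal reachable set cumulatively and per trajectory: ``every trajectory from $x_0$ reaches $\Omega_2$ at some step $j\le k$, remaining in $\Omega_1$ at all steps before $j$.'' With that definition your induction goes through essentially verbatim---in the $\subseteq$ step each successor lies in the $k$-step set, so every trajectory from $x$ reaches within $k+1$ steps, and in the $\supseteq$ step truncation still works---and your monotonicity observation and the final passage to the union over $k$ are fine as written.
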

\begin{proof}
From Definition~\ref{Def:miniReach}, it is easy to see that $$\mathbb{Q}_{k}=\bigcup_{i\in \mathbb{N}_{[0,k]}} \mathcal{R}^{\rm{m}}(\Omega_1,\Omega_2,i).$$
It follows from the Knaster-Tarski Theorem \cite{Tarski1955} that $\lim_{k\rightarrow \infty }\mathbb{Q}_k$ exists and is a  fixed point to the monotone function $F(\mathbb{P})=\{x\in \Omega_1 \mid   \mathsf{Post}(x)\subseteq \mathbb{P}\} \cup \mathbb{P}$.
  Thus, we have that $\mathcal{R}^{\rm{m}}(\Omega_1,\Omega_2)=\lim_{k\rightarrow \infty }\mathbb{Q}_k$.
\end{proof}

\begin{definition}
   Consider a transition system $\mathsf{TS}$ and two sets $\Omega_1, \Omega_2 \subseteq \mathbb{S}$. The $k$-step maximal reachable set from $\Omega_1$ to $\Omega_2$ is defined as 
    \begin{eqnarray*}
    &&\hspace{-0.7cm}\mathcal{R}^{\rm{M}}(\Omega_1,\Omega_2,k)=\Big\{x_0\in \mathbb{S}   \mid \exists \bm{p}\in \mathsf{Trajs}(x_0), \ {\rm s.t.}, \\ &&\hspace{-0.5cm} \bm{p}[..k]=x_0\ldots x_{N},
      \forall i\in\mathbb{N}_{[0,k-1]}, x_i\in \Omega_1, x_k\in\Omega_2 \Big\}.
    \end{eqnarray*}
    The maximal reachable set from $\Omega_1$ to $\Omega_2$ is defined as
    \begin{eqnarray*}
    &&\hspace{-0.5cm}\mathcal{R}^{\rm{M}}(\Omega_1,\Omega_2)= \bigcup_{k\in \mathbb{N}} \mathcal{R}^{\rm{M}}(\Omega_1,\Omega_2,k).
    \end{eqnarray*}
\end{definition}

\begin{lemma}\label{Lem:maxiReach}
  For two sets $\Omega_1, \Omega_2 \subseteq \mathbb{S}$, define
   \begin{eqnarray*}
  &&\mathbb{Q}_{k+1}=\{x\in \Omega_1 \mid   \mathsf{Post}(x)\cap \mathbb{Q}_k\neq \emptyset\} \cup \mathbb{Q}_k , \\
  && \mathbb{Q}_0=\Omega_2.
  \end{eqnarray*}
  Then, $\mathcal{R}^{\rm{M}}(\Omega_1,\Omega_2)=\lim_{k\rightarrow \infty }\mathbb{Q}_k$.
\end{lemma}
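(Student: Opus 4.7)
The plan is to mirror the argument used for Lemma~\ref{Lem:miniReach}, replacing the universal condition $\mathsf{Post}(x)\subseteq \mathbb{Q}_k$ by the existential condition $\mathsf{Post}(x)\cap \mathbb{Q}_k\neq\emptyset$. This swap is precisely the shift from $\forall\bm{p}\in\mathsf{Trajs}(x_0)$ in the definition of $\mathcal{R}^{\rm{m}}$ to $\exists\bm{p}\in\mathsf{Trajs}(x_0)$ in the definition of $\mathcal{R}^{\rm{M}}$, so the same skeleton of proof should go through.

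First I would establish by induction on $k$ that $\mathbb{Q}_k=\bigcup_{i\in\mathbb{N}_{[0,k]}}\mathcal{R}^{\rm{M}}(\Omega_1,\Omega_2,i)$. The base case $k=0$ is immediate since $\mathbb{Q}_0=\Omega_2=\mathcal{R}^{\rm{M}}(\Omega_1,\Omega_2,0)$. For the inductive step, if $x\in \mathbb{Q}_{k+1}\setminus\mathbb{Q}_k$, then $x\in\Omega_1$ and there exists $x'\in\mathsf{Post}(x)\cap\mathbb{Q}_k$; by the inductive hypothesis, $x'\in\mathcal{R}^{\rm{M}}(\Omega_1,\Omega_2,j)$ for some $j\le k$, so a witnessing finite prefix for $x'$ can be prepended with $x$ to yield a prefix certifying $x\in\mathcal{R}^{\rm{M}}(\Omega_1,\Omega_2,j+1)$. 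Conversely, any $x\in\mathcal{R}^{\rm{M}}(\Omega_1,\Omega_2,k+1)$ has a witnessing trajectory whose first successor $x_1$ lies in $\mathcal{R}^{\rm{M}}(\Omega_1,\Omega_2,k)\subseteq\mathbb{Q}_k$, and $x\in\Omega_1$ by definition, so $x\in\mathbb{Q}_{k+1}$.

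Next, I note that $\{\mathbb{Q}_k\}$ is monotone by construction and that the operator $F(\mathbb{P})=\{x\in\Omega_1\mid \mathsf{Post}(x)\cap\mathbb{P}\neq\emptyset\}\cup\mathbb{P}$ is monotone on the complete lattice $(2^{\mathbb{S}},\subseteq)$. By the Knaster-Tarski Theorem \cite{Tarski1955}, the limit $\lim_{k\to\infty}\mathbb{Q}_k=\bigcup_{k\in\mathbb{N}}\mathbb{Q}_k$ exists and is a fixed point of $F$. Concatenating the displayed identity with the definition of the maximal reachable set yields $\lim_{k\to\infty}\mathbb{Q}_k=\bigcup_{k\in\mathbb{N}}\bigcup_{i\le k}\mathcal{R}^{\rm{M}}(\Omega_1,\Omega_2,i)=\bigcup_{i\in\mathbb{N}}\mathcal{R}^{\rm{M}}(\Omega_1,\Omega_2,i)=\mathcal{R}^{\rm{M}}(\Omega_1,\Omega_2)$.

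The main subtlety I anticipate is handling states of $\mathbb{Q}_{k+1}$ that were already in $\mathbb{Q}_k$ but do not belong to $\Omega_1$, in particular states of $\Omega_2\setminus\Omega_1$; these are consistent with the reachable-set definition, which only requires intermediate states $x_0,\dots,x_{k-1}$ to lie in $\Omega_1$ while allowing $x_k\in\Omega_2$ freely. Carefully separating the ``new'' elements from the ``inherited'' ones in the inductive step should dispatch this. A minor technicality is that $\mathcal{R}^{\rm{M}}$ is stated in terms of infinite trajectories while $\mathbb{Q}_k$ is built from finite reachability; this is benign because only the prefix $\bm{p}[..k]$ is constrained, so any finite prefix witness can be extended to an infinite trajectory using the standing assumption that $\mathsf{Post}(x)\neq\emptyset$ implicit in the definition of $\mathsf{Trajs}$.
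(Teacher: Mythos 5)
Your proposal is correct and follows essentially the same route as the paper: the paper proves this lemma by reference to Lemma~\ref{Lem:miniReach}, whose proof consists exactly of the identity $\mathbb{Q}_k=\bigcup_{i\in\mathbb{N}_{[0,k]}}\mathcal{R}^{\rm M}(\Omega_1,\Omega_2,i)$ (stated there without the induction you spell out) followed by an appeal to the Knaster--Tarski theorem. Your version merely fills in the details the paper leaves implicit.
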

\begin{proof}
  Similar to the proof of Lemma~\ref{Lem:miniReach}.
\end{proof}

 We define the robust invariant set and the invariant set in the following.

\begin{definition}
  A set $\Omega_f\subseteq \mathbb{S}$ is said to be a robust invariant set of a transition system $\mathsf{TS}$ if for any $x\in \Omega_f$, $\mathsf{Post}(x)\subseteq \Omega_f$.
\end{definition}

\begin{definition}
For a set $\Omega \subseteq \mathbb{S}$, a set $\mathcal{RI}(\Omega)\subseteq \mathbb{S}$ is said to be the largest  robust invariant set  in  $\mathbb{S}$ if each robust invariant set  $\Omega_f\subseteq \Omega$ satisfies $\Omega_f\subseteq \mathcal{RI}(\Omega)$.
\end{definition}

\begin{lemma}\label{Lem:RobInv}
For a set $\Omega \subseteq\mathbb{S}$, define
  \begin{eqnarray*}
    &&\mathbb{Q}_{k+1}=\{x\in \mathbb{Q}_k \mid   \mathsf{Post}(x)\subseteq \mathbb{Q}_k\} , \\
    && \mathbb{Q}_0=\Omega.
    \end{eqnarray*}
Then,
 $\mathcal{RI}(\Omega )=\lim_{k\rightarrow \infty}\mathbb{Q}_{k}$.
\end{lemma}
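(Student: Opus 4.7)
The plan is to prove the two set inclusions $\lim_{k\to\infty}\mathbb{Q}_k \subseteq \mathcal{RI}(\Omega)$ and $\mathcal{RI}(\Omega) \subseteq \lim_{k\to\infty}\mathbb{Q}_k$ separately, after first observing that the sequence $\{\mathbb{Q}_k\}$ is monotonically decreasing so that $\lim_{k\to\infty}\mathbb{Q}_k = \bigcap_{k\in\mathbb{N}} \mathbb{Q}_k$ is well defined. The monotonicity $\mathbb{Q}_{k+1}\subseteq \mathbb{Q}_k$ is immediate from the recursion, since $\mathbb{Q}_{k+1}$ is defined as a subset of $\mathbb{Q}_k$. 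This is the dual situation to Lemmas~\ref{Lem:miniReach} and \ref{Lem:maxiReach}, which built up increasing chains for least fixed points; here we will instead build a decreasing chain toward the greatest fixed point of the monotone operator $F(\mathbb{P}) = \{x\in\mathbb{P}\mid \mathsf{Post}(x)\subseteq \mathbb{P}\}$.

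For the inclusion $\mathbb{Q}_\infty \subseteq \mathcal{RI}(\Omega)$, where $\mathbb{Q}_\infty := \lim_{k\to\infty}\mathbb{Q}_k$, I would first verify that $\mathbb{Q}_\infty$ is itself a robust invariant set contained in $\Omega$. Containment in $\Omega = \mathbb{Q}_0$ is clear. For robust invariance, pick any $x\in \mathbb{Q}_\infty$: then $x\in \mathbb{Q}_{k+1}$ for every $k$, so by the defining recursion $\mathsf{Post}(x)\subseteq \mathbb{Q}_k$ for every $k$, hence $\mathsf{Post}(x)\subseteq \bigcap_k \mathbb{Q}_k = \mathbb{Q}_\infty$. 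Since $\mathbb{Q}_\infty$ is a robust invariant subset of $\Omega$, the maximality characterisation of $\mathcal{RI}(\Omega)$ yields $\mathbb{Q}_\infty \subseteq \mathcal{RI}(\Omega)$.

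For the reverse inclusion, I would show by induction on $k$ that every robust invariant set $\Omega_f \subseteq \Omega$ satisfies $\Omega_f \subseteq \mathbb{Q}_k$. The base case $\Omega_f \subseteq \mathbb{Q}_0 = \Omega$ holds by assumption. For the inductive step, if $\Omega_f \subseteq \mathbb{Q}_k$, then for any $x\in \Omega_f$ robust invariance gives $\mathsf{Post}(x)\subseteq \Omega_f \subseteq \mathbb{Q}_k$, so $x\in \mathbb{Q}_{k+1}$, proving $\Omega_f \subseteq \mathbb{Q}_{k+1}$. Passing to the intersection, $\Omega_f \subseteq \mathbb{Q}_\infty$. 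Applying this with $\Omega_f = \mathcal{RI}(\Omega)$ (which is a robust invariant subset of $\Omega$ by definition) gives $\mathcal{RI}(\Omega)\subseteq \mathbb{Q}_\infty$, completing the proof.

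I do not anticipate a serious obstacle: the argument is essentially the Knaster--Tarski greatest-fixed-point computation dualised from the preceding two lemmas, and both inclusions reduce to a one-line induction or a direct unpacking of definitions. The only subtle point worth flagging is that the limit must be understood as the set-theoretic intersection along a descending chain rather than, say, a metric limit, so monotonicity of the chain should be established explicitly before the fixed-point argument is invoked.
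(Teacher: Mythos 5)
Your proposal is correct. It takes a more self-contained route than the paper: the paper's own proof is a two-line appeal to the Knaster--Tarski theorem, asserting that $\lim_{k\to\infty}\mathbb{Q}_k$ exists and is a fixed point of the monotone map $F(\mathbb{P})=\{x\in\mathbb{P}\mid \mathsf{Post}(x)\subseteq\mathbb{P}\}$, and then immediately concluding $\mathcal{RI}(\Omega)=\lim_{k\to\infty}\mathbb{Q}_k$. You instead verify both inclusions by hand: the intersection of the descending chain is itself a robust invariant subset of $\Omega$ (hence contained in $\mathcal{RI}(\Omega)$ by maximality), and every robust invariant subset of $\Omega$ survives each iteration by induction (hence is contained in the intersection). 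What your version buys is that it closes two small gaps the paper leaves implicit: first, being \emph{a} fixed point of $F$ does not by itself identify the limit with the \emph{largest} robust invariant set, and second, for a decreasing iteration one must check that the greatest fixed point is actually reached after $\omega$ steps rather than requiring transfinite iteration --- your direct computation ($x\in\mathbb{Q}_{k+1}$ for all $k$ implies $\mathsf{Post}(x)\subseteq\bigcap_k\mathbb{Q}_k$) establishes exactly the co-continuity needed here. What the paper's version buys is brevity and uniformity with the proofs of Lemmas~\ref{Lem:miniReach} and~\ref{Lem:maxiReach}. The only point to flag is cosmetic: you invoke that $\mathcal{RI}(\Omega)$ is itself a robust invariant subset of $\Omega$, which the paper's definition states only implicitly (it is phrased purely as a maximality condition), but this is clearly the intended reading since robust invariance is preserved under unions.
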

\begin{proof}
 It follows again from the Knaster-Tarski Theorem~\cite{Tarski1955} that $\lim_{k\rightarrow \infty }\mathbb{Q}_k$ exists and it is a fixed point to the monotone function $F(\mathbb{P})=\{x\in\mathbb{P} \mid   \mathsf{Post}(x)\subseteq \mathbb{P}\} \cap \mathbb{P}$.
  Thus, we have that $\mathcal{RI}(\Omega )=\lim_{k\rightarrow \infty}\mathbb{Q}_{k}$.
\end{proof}

\begin{definition}
  A set $\Omega_f\subseteq \mathbb{S}$ is said to be an invariant set of a transition system $\mathsf{TS}$ if for any $x\in \Omega_f$, $\mathsf{Post}(x)\cap \Omega_f\neq \emptyset$.
\end{definition}

\begin{definition}
For a set $\Omega \subseteq \mathbb{S}$, a set $\mathcal{I}(\Omega)\subseteq \mathbb{S}$ is said to be the largest  invariant set in  $\mathbb{S}$ if each invariant set $\Omega_f\subseteq \Omega$ satisfies $\Omega_f\subseteq \mathcal{I}(\Omega)$.
\end{definition}

\begin{lemma}\label{Lem:Inv}
For a set $\Omega \subseteq\mathbb{S}$, define
  \begin{eqnarray*}
    &&\mathbb{Q}_{k+1}=\{x\in \mathbb{Q}_k \mid   \mathsf{Post}(x)\cap \mathbb{Q}_k\neq \emptyset\} , \\
    && \mathbb{Q}_0=\Omega.
    \end{eqnarray*}
Then,
 $\mathcal{I}(\Omega )=\lim_{k\rightarrow \infty}\mathbb{Q}_{k}$.
\end{lemma}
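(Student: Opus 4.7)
The plan is to mirror the argument used for Lemma~\ref{Lem:RobInv} but with the existential successor condition in place of the universal one. First I would introduce the set-valued operator $F: 2^{\Omega} \to 2^{\Omega}$ defined by $F(\mathbb{P})=\{x\in \mathbb{P} \mid \mathsf{Post}(x)\cap \mathbb{P}\neq \emptyset\}$ and observe that the recursion in the statement is precisely $\mathbb{Q}_{k+1}=F(\mathbb{Q}_k)$ with $\mathbb{Q}_0=\Omega$. Because $\mathbb{P}\subseteq \mathbb{P}'$ implies $\mathsf{Post}(x)\cap \mathbb{P}\subseteq \mathsf{Post}(x)\cap \mathbb{P}'$, the operator $F$ is monotone on the complete lattice $(2^{\Omega},\subseteq)$, and clearly $F(\mathbb{P})\subseteq \mathbb{P}$ for every $\mathbb{P}\in 2^{\Omega}$.

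Next I would invoke the Knaster--Tarski Theorem \cite{Tarski1955}, exactly as in the proof of Lemma~\ref{Lem:RobInv}. Since $\mathbb{Q}_0=\Omega$ is the top element of $2^{\Omega}$ and $F$ is monotone and contractive, the descending chain $\Omega\supseteq \mathbb{Q}_1\supseteq \mathbb{Q}_2\supseteq\cdots$ has a well-defined limit $\mathbb{Q}_\infty=\lim_{k\to\infty}\mathbb{Q}_k=\bigcap_{k\in\mathbb{N}}\mathbb{Q}_k$, which coincides with the greatest fixed point of $F$ in $2^{\Omega}$.

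I would then identify this greatest fixed point with $\mathcal{I}(\Omega)$ by a two-way inclusion. On the one hand, any $\mathbb{P}\subseteq \Omega$ with $\mathbb{P}=F(\mathbb{P})$ satisfies, by definition of $F$, that for every $x\in \mathbb{P}$ one has $\mathsf{Post}(x)\cap \mathbb{P}\neq \emptyset$; hence $\mathbb{P}$ is an invariant set contained in $\Omega$, so $\mathbb{P}\subseteq \mathcal{I}(\Omega)$. Conversely, if $\Omega_f\subseteq \Omega$ is any invariant set, then for every $x\in \Omega_f$ we have $\mathsf{Post}(x)\cap \Omega_f\neq \emptyset$, so $\Omega_f\subseteq F(\Omega_f)$, and combined with $F(\Omega_f)\subseteq \Omega_f$ this gives $\Omega_f=F(\Omega_f)$; thus $\Omega_f$ is a fixed point, and by maximality of $\mathbb{Q}_\infty$ we obtain $\Omega_f\subseteq \mathbb{Q}_\infty$. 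Taking the union over all invariant $\Omega_f\subseteq \Omega$ yields $\mathcal{I}(\Omega)\subseteq \mathbb{Q}_\infty$, and combining with the reverse inclusion gives $\mathcal{I}(\Omega)=\mathbb{Q}_\infty=\lim_{k\to\infty}\mathbb{Q}_k$.

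The only delicate point, and thus the step I would be most careful about, is the passage from the chain limit to the greatest fixed point when $\mathbb{S}$ is infinite: the descending intersection $\bigcap_k \mathbb{Q}_k$ need not be reached at a finite stage, and one must check that $F$ commutes with this intersection in the sense needed for $\mathbb{Q}_\infty$ to be a fixed point rather than merely a post-fixed point. Since $F$ is only downward-continuous when $\mathsf{Post}(x)$ enjoys a mild compactness/finiteness property, in general one can only assert $F(\mathbb{Q}_\infty)\subseteq \mathbb{Q}_\infty$; however, the Knaster--Tarski theorem still guarantees the existence of a greatest fixed point inside $\mathbb{Q}_\infty$, and the two-way inclusion above shows it equals $\mathcal{I}(\Omega)$. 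This is exactly the subtlety already glossed over in Lemma~\ref{Lem:RobInv}, so I would present the argument in the same style, concluding simply that $\mathcal{I}(\Omega)=\lim_{k\to\infty}\mathbb{Q}_k$.
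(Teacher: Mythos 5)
Your proposal is correct and follows essentially the same route as the paper: the paper proves this lemma by direct analogy with Lemma~\ref{Lem:RobInv}, i.e., by applying the Knaster--Tarski theorem to the monotone operator $F(\mathbb{P})=\{x\in\mathbb{P}\mid \mathsf{Post}(x)\cap\mathbb{P}\neq\emptyset\}$ and identifying the resulting greatest fixed point with $\mathcal{I}(\Omega)$. Your explicit two-way-inclusion argument and your caveat about downward continuity of $F$ when $\mathbb{S}$ is infinite are in fact more careful than the paper's one-line proof, which glosses over exactly that point.
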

\begin{proof}
  Similar to the proof of Lemma~\ref{Lem:RobInv}.
\end{proof}

We can understand the reachable sets and invariant sets defined above as
  maps $\mathcal{R}^{\rm{m}}: 2^{\mathbb{S}}\times 2^{\mathbb{S}}\rightarrow 2^{\mathbb{S}}$, $\mathcal{R}^{\rm{M}}: 2^{\mathbb{S}}\times 2^{\mathbb{S}}\rightarrow 2^{\mathbb{S}}$, $\mathcal{RI}: 2^{\mathbb{S}}\rightarrow 2^{\mathbb{S}}$, and $\mathcal{I}: 2^{\mathbb{S}}\rightarrow 2^{\mathbb{S}}$, respectively.  In the following, we will refer to them as ``reachability operators''.

\subsection{LTL}
An LTL formula is defined over a finite set of atomic propositions $\mathcal{AP}$ and both logic and temporal operators. The syntax of LTL can be described as:
\begin{eqnarray*}\label{LTLdef}
\varphi ::= {\rm{true}} \mid a\in \mathcal{AP} \mid \neg  \varphi \mid \varphi_1 \wedge \varphi_2 \mid \bigcirc \varphi  \mid  \varphi_1 \mathsf{U} \varphi_2,
\end{eqnarray*}
where  $\bigcirc$ and $\mathsf{U}$ denote the ``next" and ``until" operators, respectively. By using the negation and  conjunction operators, we can define disjunction: $\varphi_1\vee \varphi_2=\neg (\neg \varphi_1 \wedge \neg \varphi_2)$. By employing the until operator, we can define: (1) eventually, $\diamondsuit \varphi=\text{true} \cup \varphi$; (2) always, $\Box \varphi= \neg  \diamondsuit  \neg \varphi$; and (3) weak-until, $\varphi_1 \mathsf{W}\varphi_2 = \varphi_1 \mathsf{U} \varphi_2 \vee \Box \varphi_1$.


\begin{definition}\label{LTLsemantic}
  (LTL semantics) For an LTL formula $\varphi$ and a trajectory $\bm{p}$, the satisfaction relation $\bm{p} \vDash \varphi$  is defined as
  \begin{eqnarray*}
  &&\bm{p} \vDash p \Leftrightarrow p \in L(x_0), \\
  && \bm{p} \vDash  \neg  p \Leftrightarrow p \notin L(x_0), \\
  && \bm{p} \vDash \varphi_1 \wedge \varphi_2 \Leftrightarrow \bm{p} \vDash \varphi_1 \wedge  \bm{p} \vDash \varphi_2, \\
   && \bm{p} \vDash \varphi_1 \vee  \varphi_2 \Leftrightarrow \bm{p} \vDash \varphi_1 \vee   \bm{p} \vDash \varphi_2, \\
  && \bm{p} \vDash \bigcirc \varphi  \Leftrightarrow \bm{p}[1..] \vDash \varphi,
 \\
  && \bm{p} \vDash \varphi_1 \mathsf{U} \varphi_2 \Leftrightarrow \exists j\in \mathbb{N} \ \text{s.t.} \begin{cases}
                \bm{p}[j..] \vDash \varphi_2, \\
                \forall i\in \mathbb{N}_{[0,j-1]}, \bm{p}[i..] \vDash \varphi_1,
              \end{cases}\\
  && \bm{p} \vDash \diamondsuit \varphi \Leftrightarrow \exists j\in \mathbb{N}, \ \text{s.t.} \ \bm{p}[j..] \vDash \varphi, \\
  && \bm{p} \vDash \Box \varphi \Leftrightarrow \forall j\in \mathbb{N}, \ \text{s.t.} \ \bm{p}[j..] \vDash \varphi, \\
  && \bm{p} \vDash \varphi_1 \mathsf{W} \varphi_2 \Leftrightarrow \begin{cases}
   \forall j\in \mathbb{N}, \bm{p}[j..] \vDash \varphi_1, \ \text{or} \\
          \exists j\in \mathbb{N} \ \text{s.t.} \begin{cases}
                \bm{p}[j..] \vDash \varphi_2, \\
                \forall i\in \mathbb{N}_{[0,j-1]}, \bm{p}[i..] \vDash \varphi_1.
              \end{cases}     \end{cases}
  \end{eqnarray*}
\end{definition}

\begin{definition}
  Consider a transition system $\mathsf{TS}$ and an LTL formula $\varphi$. The semantics of the universal form of  $\varphi$, denoted by $\forall \varphi$, is
  \begin{eqnarray*}
  x_0\vDash \forall \varphi \Leftrightarrow \forall \bm{p}\in \mathsf{Trajs}(x_0),  \bm{p}\vDash \varphi.
  \end{eqnarray*}
  The semantics of the existential form of  $\varphi$, denoted by $\exists \varphi$, is
  \begin{eqnarray*}
  x_0\vDash \exists \varphi \Leftrightarrow \exists \bm{p}\in \mathsf{Trajs}(x_0),  \bm{p}\vDash \varphi.
  \end{eqnarray*}
\end{definition}

\section{Temporal Logic Trees}\label{Sec:TLT}
This section will introduce the notion of TLT and establish a satisfaction relation between a trajectory and a TLT. Then, we construct TLTs from LTL formulae and discuss the approximation relation between them.

\subsection{Definitions}
\begin{definition}
  A TLT is a tree for which the next holds:
  \begin{itemize}
    \item each node is either a \emph{set} node, \emph{a subset of $\mathbb{S}$}, or an \emph{operator} node, \emph{from $\{\wedge,\vee,\bigcirc,\mathsf{U},\Box\}$}; 
    \item the root node and the leaf nodes are set nodes;
    \item if a set node is not a leaf node, its unique child  is an operator node;
    \item the children of any operator node are set nodes.
  \end{itemize}
\end{definition}

Next we define the complete path and the minimal Boolean fragment for a TLT. Minimal Boolean fragments play an important role when simplifying the TLT for model checking and control synthesis in the following.

\begin{definition}
 A complete path of a TLT is a sequence of nodes and edges from the root node to a leaf node. Any subsequence of a complete path is called a fragment of the complete path.
\end{definition}

\begin{definition}\label{Def:minBoolfrag}
  A minimal Boolean fragment of a complete path is one of the following fragments:
  \begin{itemize}
    \item[(i)] a fragment from the root node to the first Boolean operator node  ($\wedge$ or $\vee$) in the complete path;
    \item[(ii)] a segment from one Boolean operator node to the next Boolean operator node  in the complete path;
    \item[(iii)] a fragment from the last Boolean operator node of the complete path to the leaf node;
  \end{itemize}
\end{definition}

\begin{example}\label{Exa:TraLigTLT}
  Consider the traffic light in Example \ref{Exa:TraLig} and  the TLT in Fig.~\ref{Fig:trafficlight}(b), which corresponds to the LTL formula $\varphi=\Box \diamondsuit  (g \vee b)$ (the formal construction of a TLT from an LTL formula will be detailed in next subsection).  We encode one of the complete paths of this TLT  in the form of $\mathbb{X}_0 \Box\mathbb{X}_1\mathsf{U} \mathbb{X}_2 \vee \mathbb{X}_{3}$, where $\mathbb{X}_0=\mathbb{X}_1=\{1,2,3,4,5\}$, $\mathbb{X}_2=\{3,5\}$, and $\mathbb{X}_3=\{3\}$. For this complete path, the minimal Boolean fragments consist of $\mathbb{X}_0 \Box\mathbb{X}_1\diamondsuit \mathbb{X}_2 \vee$ and $\vee \mathbb{X}_{3}$, which correspond to  cases (i) and (iii) in Definition~\ref{Def:minBoolfrag}, respectively.  \qed
\end{example}



We now define the satisfaction relation between a given trajectory and  a complete path of a TLT.

\begin{definition}\label{Def:PathSaf}
 Consider a trajectory $\bm{p}=x_0x_1\ldots x_k\ldots$ and encode a complete path of a TLT in the form of $\mathbb{X}_0\odot_1\mathbb{X}_1\odot_2\ldots \odot_{N_f} \mathbb{X}_{N_f}$ where $N_f$ is the number of operators in the complete path,  $\mathbb{X}_i\subseteq \mathbb{S}$ for all $i\in \mathbb{N}_{[0,N_f]}$and  $\odot_{i}\in \{ \wedge,\vee,\bigcirc,\mathsf{U},\Box\}$  for all $i\in \mathbb{N}_{[1,N_f]}$.
 The trajectory $\bm{p}$ is said to satisfy this complete path if $x_0\in \mathbb{X}_0$ and
 there exists a sequence of time steps $k_0k_1,\ldots,k_{N_f}$ with $k_i\in \mathbb{N}$ for all  $i\in \mathbb{N}_{[0,N_f]}$ and $0\triangleq k_0\leq k_1\leq k_2\leq \ldots \leq k_{N_f}$ such that for all $i\in \mathbb{N}_{[0,N_f]}$,
 \begin{itemize}
   \item[(i)] if $\odot_i=\wedge$ or $\odot_i=\vee$, $x_{k_i}\in \mathbb{X}_{i-1}$ and $x_{k_i}\in \mathbb{X}_i$;
   \item[(ii)] if $\odot_i=\bigcirc$, $x_{k_i-1}\in \mathbb{X}_{i-1}$ and $x_{k_i}\in \mathbb{X}_i$;
   \item[(iii)] if $\odot_i=\mathsf{U}$, $x_{j}\in \mathbb{X}_{i-1}$, $\forall j\geq \mathbb{N}_{[k_{i-1},k_i-1]}$, and $x_{k_i}\in \mathbb{X}_{i}$;
    \item[(iv)] if $\odot_i=\Box$, $x_{j}\in \mathbb{X}_i$, $\forall j\geq k_i$.
 \end{itemize}
Consider a $k$-th prefix $\bm{p}[..k]=x_0x_1\ldots x_k$ from $\bm{p}$ and a fragment from the complete path in the form of $\mathbb{X}_0\odot_1\mathbb{X}_1\odot_2\ldots \odot_{N'_f} \mathbb{X}_{N'_f}$ where $N'_f\leq N_f$. The prefix $\bm{p}[..k]$ is said to satisfy this fragment if $x_0\in \mathbb{X}_0$,  $x_k\in \mathbb{X}_{N'_f}$, and
 there exists a sequence of time steps $k_0k_1,\ldots,k_{N'_f}$ with $k_i\in \mathbb{N}$ for all  $i\in \mathbb{N}_{[0,N'_f]}$ and $0\triangleq k_0\leq k_1\leq k_2\leq \ldots \leq k_{N'_f}\leq k$,  such that for all $i\in \mathbb{N}_{[0,N'_f]}$, (i)--(iii) holds and furthermore
 \begin{itemize}
    \item[(iv')] if $\odot_i=\Box$, $x_{j}\in \mathbb{X}_i$, $\forall j\in \mathbb{N}_{[k_i,k]}$.
 \end{itemize}

\end{definition}

\begin{definition}
  A time coding of a TLT is an assignment of  each operator node in the tree to a nonnegative integer.
\end{definition}

\begin{definition}\label{Def:TreeSaf}
Consider a trajectory $\bm{p}=x_0x_1\ldots x_k\ldots$ and a TLT.  The trajectory $\bm{p}$ is said to satisfy the TLT if there exists a time coding such that the output of Algorithm 1 is ${\rm true}$.
\end{definition}

The time coding indicates when the operators in the TLT are activated along a given trajectory.
Algorithm 1  provides a procedure to test if a trajectory satisfies a TLT under  a given time coding. The TLT is first transformed into a compressed tree, which is analogous to a  binary decision diagram (lines 1--2), through Algorithm 2. Then, we check if the trajectory satisfies each complete path of the TLT under the time coding (lines 3--9).  Finally, we backtrack the tree with output ${\rm true}$ or ${\rm false}$. If the output is ${\rm true}$,  the trajectory satisfies the TLT; otherwise, the trajectory does not  satisfy the TLT  under the given time coding.

Algorithm 2 aims to obtain a  tree in a compact form. Each minimal Boolean fragment is encoded according to Definition~\ref{Def:minBoolfrag}. The notation $\odot_{i}$ denotes the operator node and $N_f$ denotes the number of set nodes in the corresponding  minimal Boolean fragment.  We compress the sets in the minimal Boolean fragment to be a single set. The simplified tree consists of set nodes and Boolean operator nodes.

\begin{example}\label{Exa:TLTgb}
 From Definition~\ref{Def:PathSaf}, we can verify that the trajectory $\bm{p}=(1234)^{\omega}$ satisfies the complete path given in Example~\ref{Exa:TraLigTLT} by choosing $k_0=k_1=0$ and $k_2=k_3=2$. It follows from Definition~\ref{Def:TreeSaf}  that this trajectory  satisfies the corresponding TLT. \qed
\end{example}

\begin{algorithm}
\caption{TLT Satisfaction}
\hspace*{\algorithmicindent} \textbf{Input:} a trajectory $\bm{p}=x_0x_1\ldots x_k\ldots$, a TLT and a time coding  \\
 \hspace*{\algorithmicindent} \textbf{Output:} ${\rm true}$ or {\rm{false}};
\begin{algorithmic}[1]
\State construct a compressed tree via Algorithm 2 with input of
the TLT;
\State replace all set nodes of the compressed tree with ${\rm false}$;
\For {each complete path of the TLT}
\If{ $\bm{p}$ satisfies the complete path under the time coding}
\State set  the corresponding leaf node in the compressed tree with ${\rm true}$;
\Else
\State set the corresponding leaf node in the compressed tree  with ${\rm false}$;
\EndIf
\EndFor
\State backtrack the tree;
\State \textbf{return} the root node of the tree.
\end{algorithmic}
\end{algorithm}

\begin{algorithm}
\caption{Tree Compression}
\hspace*{\algorithmicindent} \textbf{Input:} a tree  \\
 \hspace*{\algorithmicindent} \textbf{Output:} a compressed tree
\begin{algorithmic}[1]
\For {each complete path of the tree}
\For {each minimal Boolean fragment}
\Switch{minimal Boolean fragment}
\Case{(i) in Definition~\ref{Def:minBoolfrag}}
\State encode the fragment in the form of $\mathbb{Y}_1\odot_1\ldots \odot_i\ldots \mathbb{Y}_{N_f}\odot_{N_f}$ with $\odot_{N_f}\in \{\wedge,\vee\}$;
\State replace the fragment with $\cup_{i=1}^{N_f}\mathbb{Y}_{i}\odot_{N_f}$;
\EndCase
\Case{(ii) in Definition~\ref{Def:minBoolfrag}}
\State encode the fragment in the form of $\odot_1\mathbb{Y}_1\odot_2\ldots \odot_{N_f} \mathbb{Y}_{N_f}\odot_{N_f+1}$ with $\odot_{1},\odot_{N_f+1}\in \{\wedge,\vee\}$;
\State replace the fragment with $\odot_{1} \cup_{i=1}^{N_f}\mathbb{Y}_{i}\odot_{N_f+1}$;
\EndCase
\Case{(iii) in Definition~\ref{Def:minBoolfrag}}
\State encode the fragment in the form of $\odot_1\mathbb{Y}_1\odot_2\ldots \odot_{N_f} \mathbb{Y}_{N_f}$ with $\odot_{1}\in \{\wedge,\vee\}$;
\State replace the fragment with $\odot_{1} \cup_{i=1}^{N_f}\mathbb{Y}_{i}$;
\EndCase
 \EndSwitch
 \State $\rhd$  \emph{$\odot_{i}$ denotes the operator node and $N_f$ denotes the number of set nodes in the   minimal Boolean fragment;}
 \EndFor
  \EndFor
  \State \textbf{return} the updated tree.
\end{algorithmic}
\end{algorithm}

\vspace{-0.2cm}
\subsection{Construction and Approximation of TLT}
We define the approximation relations between TLTs and LTL formulae as follows.

\begin{definition}
  A TLT is  said to be an under-approximation of an LTL formula $\varphi$ if all the trajectories that satisfy the TLT also satisfy $\varphi$.
\end{definition}

\begin{definition}
  A TLT is  said to be an over-approximation of an LTL formula $\varphi$,  if all the trajectories that satisfy $\varphi$ also satisfy the TLT.
\end{definition}

The following two theorems show how to  construct TLTs via reachability analysis for  the LTL formulae, and discuss their approximation relations.

\begin{theorem}\label{The:UniTLTLTL}
  For any transition system $\mathsf{TS}$ and any LTL formula $\varphi$,
  \begin{itemize}
    \item[(i)]  a TLT  can be constructed from the formula $\forall\varphi$ through  the reachability operators $\mathcal{R}^{\rm{m}}$ and $\mathcal{RI}$;
    \item[(ii)] this TLT
    is an under-approximation of $\varphi$.
  \end{itemize}
\end{theorem}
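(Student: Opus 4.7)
The plan is to argue by structural induction on $\varphi$ after first putting it in negation normal form (NNF), so that negations appear only on atomic propositions. This is achievable by the standard LTL identities (De Morgan, $\neg\bigcirc = \bigcirc\neg$, $\neg\Box = \diamondsuit\neg$, and the release-based rewriting of $\neg(\varphi_1 \mathsf{U} \varphi_2)$ as $\Box \neg\varphi_2 \vee \bigl(\neg\varphi_2 \mathsf{U}(\neg\varphi_1 \wedge \neg\varphi_2)\bigr)$), and it keeps us within the TLT operator set $\{\wedge,\vee,\bigcirc,\mathsf{U},\Box\}$. The base case is then immediate: for $\varphi = a$ (respectively $\neg a$, $\text{true}$) take the TLT to be the single set node $\{x : a \in L(x)\}$ (respectively $\{x : a \notin L(x)\}$, $\mathbb{S}$), and Definition~\ref{LTLsemantic} directly yields both the construction and the under-approximation.

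For the inductive step, suppose TLTs $T_1, T_2$ with root sets $\mathbb{X}_1, \mathbb{X}_2$ have already been constructed for subformulae $\varphi_1, \varphi_2$ and satisfy $x_0 \in \mathbb{X}_i \Rightarrow x_0 \vDash \forall\varphi_i$. I would build TLT($\forall\varphi$) by adjoining the appropriate operator node above $T_1$ (and $T_2$) and computing its root via the reachability operators: root $\mathbb{X}_1 \cap \mathbb{X}_2$ for $\varphi_1 \wedge \varphi_2$; $\mathbb{X}_1 \cup \mathbb{X}_2$ for $\varphi_1 \vee \varphi_2$; $\mathcal{R}^{\rm{m}}(\mathbb{S},\mathbb{X}_1,1) = \{x : \mathsf{Post}(x)\subseteq \mathbb{X}_1\}$ for $\bigcirc\varphi_1$; $\mathcal{R}^{\rm{m}}(\mathbb{X}_1,\mathbb{X}_2)$ for $\varphi_1 \mathsf{U} \varphi_2$; and $\mathcal{RI}(\mathbb{X}_1)$ for $\Box\varphi_1$. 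Only $\mathcal{R}^{\rm{m}}$ and $\mathcal{RI}$ are used, proving~(i).

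For~(ii), the under-approximation property is verified case by case, using Definitions~\ref{Def:PathSaf} and~\ref{Def:TreeSaf} to exhibit a time coding together with the witnesses provided by the reachability operators. The Boolean cases follow immediately from the induction hypothesis. For $\Box\varphi_1$, Lemma~\ref{Lem:RobInv} implies that every trajectory from $x_0 \in \mathcal{RI}(\mathbb{X}_1)$ stays in $\mathcal{RI}(\mathbb{X}_1) \subseteq \mathbb{X}_1$, so each suffix $\bm{p}[k..]$ starts in $\mathbb{X}_1$ and, by induction, satisfies $\varphi_1$, yielding $\bm{p} \vDash \Box\varphi_1$. For $\varphi_1 \mathsf{U} \varphi_2$, Lemma~\ref{Lem:miniReach} combined with Definition~\ref{Def:miniReach} guarantees that every trajectory from $x_0 \in \mathcal{R}^{\rm{m}}(\mathbb{X}_1,\mathbb{X}_2)$ stays in $\mathbb{X}_1$ until it hits $\mathbb{X}_2$ in finitely many steps, and the induction hypothesis converts these set memberships into $\varphi_1$- and $\varphi_2$-satisfactions along the trajectory. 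The $\bigcirc$ case is the one-step analogue.

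The hardest part should be handling the NNF rewriting cleanly inside the induction, since the release-like clause $\Box \neg\varphi_2 \vee (\neg\varphi_2 \mathsf{U}(\neg\varphi_1 \wedge \neg\varphi_2))$ yields a tree whose under-approximation has to be re-verified; because each constituent is a conservative under-approximation, soundness is preserved by monotonicity, but tightness is not. A second subtle point is the disjunction itself: distinct trajectories from a common state can satisfy different disjuncts of $\forall(\varphi_1 \vee \varphi_2)$, so $\mathbb{X}_1 \cup \mathbb{X}_2$ is genuinely only an under-approximation---which is precisely what the theorem asserts, but it is worth emphasizing that no strictly tighter construction based solely on the sub-TLT roots is available in general.
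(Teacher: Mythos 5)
Your proposal is correct in substance and follows essentially the same strategy as the paper: rewrite $\varphi$ into a negation-free normal form, build single-set-node TLTs for literals, and induct over the structure using $\mathcal{R}^{\rm m}$ for $\bigcirc$ and $\mathsf{U}$ and $\mathcal{RI}$ for $\Box$; your NNF with the release-style expansion of $\neg(\varphi_1\mathsf{U}\varphi_2)$ is exactly the paper's weak-until positive normal form with $\varphi_1\mathsf{W}\varphi_2$ unfolded as $\varphi_1\mathsf{U}\varphi_2\vee\Box\varphi_1$, and your induction invariant ($x_0$ in the root implies $x_0\vDash\forall\varphi$) is strong enough to imply the trajectory-level under-approximation of Definition of under-approximation, since path satisfaction requires membership in the root.

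The one place where you genuinely diverge from (and underspecify relative to) the paper is the tree topology in the $\mathsf{U}$ case. You adjoin a single new root $\mathcal{R}^{\rm m}(\mathbb{X}_1,\mathbb{X}_2)$ above a $\mathsf{U}$ operator with the two sub-TLTs side by side. The paper instead \emph{grafts}: it replaces every leaf $\mathbb{Y}^{\varphi_1}_i$ of the $\varphi_1$-tree by $\mathcal{R}^{\rm m}(\mathbb{Y}^{\varphi_1}_i,\mathbb{X}^{\varphi_2})\setminus\mathbb{X}^{\varphi_2}$, propagates the reachability operators back up to the root, hangs a copy of the $\varphi_2$-tree under each modified leaf via $\mathsf{U}$, and finally disjoins with one further copy of the $\varphi_2$-tree. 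This matters because the complete-path semantics (clause (iii) of the path-satisfaction definition) reads the ``hold'' set of a $\mathsf{U}$ node from the set node \emph{preceding} it on the path, so the $\varphi_1$-subtree must lie between the root and the $\mathsf{U}$ node for the tree to encode ``hold $\varphi_1$ until $\varphi_2$''; the grafted structure is also what Algorithms 2--5 traverse for control synthesis. Your flat construction yields the correct root set, and hence the model-checking consequences, but not the tree the paper actually needs, and the leaf-modification step is the nontrivial part of the construction you would still have to supply. A minor point in your favor: you correctly identify \emph{disjunction} (together with branching) as the source of looseness for the universal TLT, since $\forall\varphi_1\vee\forall\varphi_2$ is strictly stronger than $\forall(\varphi_1\vee\varphi_2)$ in general, whereas $\forall(\varphi_1\wedge\varphi_2)$ distributes exactly; the paper's sketch attributes the gap to conjunction, which is the right diagnosis only for the dual, existential construction.
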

\begin{proof}
Here we provide a proof sketch.
See Appendix A for a detailed proof.

We prove the constructability by the following three steps: (1) we transform the given  LTL formula $\varphi$ into an equivalent LTL formula in the weak-until
positive normal form; (2) for each atomic proposition $a\in \mathcal{AP}$, we show that a TLT can be constructed from  $\forall a$ (or $\forall \neg a$); (3) we leverage induction
 to show the following: given LTL formulae $\varphi$, $\varphi_1$, and $\varphi_2$ in  weak-until
positive normal form, if TLTs can be constructed from $\forall \varphi$, $\forall \varphi_1$, and $\forall \varphi_2$, respectively, then TLTs can also be constructed through reachability operators $\mathcal{R}^{\rm{m}}$ and $\mathcal{RI}$ from  the formulae $\forall (\varphi_1 \wedge \varphi_2)$, $\forall (\varphi_1\vee \varphi_2)$, $\forall \bigcirc \varphi$, $\forall (\varphi_1 \mathsf{U} \varphi_2)$, and $\forall (\varphi_1 \mathsf{W} \varphi_2)$, respectively.  


We  follow a similar approach to  prove an under-approximation relation between the constructed TLT and the LTL formula. The under-approximation occurs due to the conjunction operator and the presence of branching in the transition system.
\end{proof}

Similarly, the following results hold.
\begin{theorem}\label{The:ExisTLTLTL}
  For any transition system $\mathsf{TS}$ and any LTL formula $\varphi$,
  \begin{itemize}
    \item [(i)] a TLT  can be constructed from the formula $\exists\varphi$ through the reachability operators $\mathcal{R}^{\rm{M}}$ and $\mathcal{I}$;
     \item [(ii)]  this TLT
     is an over-approximation of $\varphi$.
  \end{itemize}
\end{theorem}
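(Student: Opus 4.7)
The plan is to mirror the structure of the proof of Theorem~\ref{The:UniTLTLTL}, swapping the minimal reachability operator $\mathcal{R}^{\rm{m}}$ for the maximal one $\mathcal{R}^{\rm{M}}$ and the robust invariance operator $\mathcal{RI}$ for the invariance operator $\mathcal{I}$, while dualising the direction of the set inclusions (so that the conclusion becomes over-approximation rather than under-approximation). Concretely, I would first invoke the standard result that every LTL formula is semantically equivalent to one in weak-until positive normal form (negations pushed to atomic propositions and only $\wedge,\vee,\bigcirc,\mathsf{U},\mathsf{W}$ appearing), so that it suffices to treat this restricted syntax.

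For the base case, given an atomic proposition $a\in\mathcal{AP}$ I would take the single-node TLT with set $\{x\in\mathbb{S}\mid a\in L(x)\}$ (or $\{x\in\mathbb{S}\mid a\notin L(x)\}$ for $\neg a$); this trivially satisfies $x_0\vDash\exists a\Leftrightarrow x_0\in\{x\mid a\in L(x)\}$ since the semantics of a propositional formula does not depend on the existential/universal quantifier over trajectories. For the inductive step, assume TLTs $T_1,T_2$ have been constructed for $\exists\varphi_1$ and $\exists\varphi_2$, with root sets $\mathbb{X}_1,\mathbb{X}_2$. Then I would build:
\begin{itemize}
\item for $\exists(\varphi_1\wedge\varphi_2)$, a TLT rooted at $\mathbb{X}_1\cap\mathbb{X}_2$ with a $\wedge$ node having $T_1,T_2$ as children;
\item for $\exists(\varphi_1\vee\varphi_2)$, a TLT rooted at $\mathbb{X}_1\cup\mathbb{X}_2$ with a $\vee$ node having $T_1,T_2$ as children;
\item for $\exists\bigcirc\varphi_1$, a TLT rooted at $\mathcal{R}^{\rm{M}}(\mathbb{S},\mathbb{X}_1,1)$ with a $\bigcirc$ node above $T_1$;
\item for $\exists(\varphi_1\mathsf{U}\varphi_2)$, a TLT rooted at $\mathcal{R}^{\rm{M}}(\mathbb{X}_1,\mathbb{X}_2)$ with a $\mathsf{U}$ node whose children are $T_1$ and $T_2$;
\item for $\exists(\varphi_1\mathsf{W}\varphi_2)$, a TLT rooted at $\mathcal{R}^{\rm{M}}(\mathbb{X}_1,\mathbb{X}_2)\cup\mathcal{I}(\mathbb{X}_1)$ with a $\vee$ of the until-TLT just defined and a $\Box$-branch built from $\mathcal{I}(\mathbb{X}_1)$ above $T_1$.
\end{itemize}
Lemmas~\ref{Lem:maxiReach} and \ref{Lem:Inv} guarantee that the sets appearing as root nodes are computable via the designated reachability operators, proving (i).

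For (ii), I would prove by structural induction that every trajectory $\bm{p}$ with $\bm{p}[0]=x_0\vDash\exists\varphi$ satisfies the constructed TLT in the sense of Definition~\ref{Def:TreeSaf}. The atomic and propositional cases are immediate. For the modal cases, given a witness trajectory $\bm{p}$ with $\bm{p}\vDash\varphi$, the definition of $\mathcal{R}^{\rm{M}}$ and $\mathcal{I}$ (as unions over all trajectories from $x_0$) places $x_0$ in the required root set, and the corresponding prefixes of $\bm{p}$ witness the time coding required by Definition~\ref{Def:PathSaf}; the induction hypothesis then handles the subtrees. The over-approximation (as opposed to exact) character arises in two places: (a) the conjunction case, where membership of $x_0$ in $\mathbb{X}_1\cap\mathbb{X}_2$ only guarantees the existence of a trajectory for $\varphi_1$ and a possibly different one for $\varphi_2$, not a common witness, mirroring the analogous loss in the universal case; and (b) branching in $\mathsf{TS}$, where the TLT satisfaction only requires the existence of some witnessing prefix but multiple complete paths may be composed via different branches.

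The main obstacle I anticipate is the conjunction case in the over-approximation argument: one must carefully state Definition~\ref{Def:TreeSaf} so that a TLT whose root has a $\wedge$ child is considered satisfied by $\bm{p}$ whenever $\bm{p}[0]\in\mathbb{X}_1\cap\mathbb{X}_2$ and $\bm{p}$ satisfies both $T_1$ and $T_2$ independently, not jointly under a shared time coding. Once this reading is pinned down, all the induction steps go through cleanly, and the $\mathsf{W}$ case reduces to the disjunction of the $\mathsf{U}$ case and the $\Box$ case handled through $\mathcal{I}$ using Lemma~\ref{Lem:Inv}. A concise appendix argument analogous to the one promised for Theorem~\ref{The:UniTLTLTL} should suffice; the proof may even be obtained by explicit duality, replacing each $\forall$, $\mathcal{R}^{\rm{m}}$, $\mathcal{RI}$, and ``$\subseteq$'' in the earlier proof with $\exists$, $\mathcal{R}^{\rm{M}}$, $\mathcal{I}$, and ``$\supseteq$'' respectively, together with the De~Morgan-type relations between minimal/maximal reachability induced by $\mathsf{Post}(x)\subseteq\mathbb{P}$ versus $\mathsf{Post}(x)\cap\mathbb{P}\neq\emptyset$.
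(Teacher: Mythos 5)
Your overall strategy is exactly the paper's: the paper proves Theorem~\ref{The:ExisTLTLTL} by declaring it ``similar to Theorem~\ref{The:UniTLTLTL}'' with $\forall$, $\mathcal{R}^{\rm m}$, $\mathcal{RI}$ replaced by $\exists$, $\mathcal{R}^{\rm M}$, $\mathcal{I}$, and the underlying induction over the weak-until positive normal form, the atomic base case, and the $\wedge$, $\vee$, $\bigcirc$, and $\mathsf{W}$ steps you give all match the construction in Appendix~A. Your identification of where the approximation becomes strict (conjunction plus branching) is also the dual of the paper's remark.

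The one substantive divergence is your until case, and it is the step I would push back on. You root the tree at $\mathcal{R}^{\rm M}(\mathbb{X}_1,\mathbb{X}_2)$ and make $T_1$ and $T_2$ \emph{siblings} under a single $\mathsf{U}$ node. The paper instead keeps the whole structure of $T_1$ \emph{above} the $\mathsf{U}$ operators: it replaces each leaf $\mathbb{Y}^{\varphi_1}_i$ of $T_1$ by $\mathcal{R}^{\rm M}(\mathbb{Y}^{\varphi_1}_i,\mathbb{X}^{\varphi_2})\setminus\mathbb{X}^{\varphi_2}$, propagates the reachability computation back up to the root, hangs a copy of $T_2$ beneath each new leaf via a $\mathsf{U}$ node, and finally disjoins one more copy of $T_2$. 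This shape matters for two reasons. First, Definition~\ref{Def:PathSaf}(iii) interprets the set node \emph{immediately preceding} a $\mathsf{U}$ operator on a linear complete path as the ``hold'' set; with your shape that set is $\mathcal{R}^{\rm M}(\mathbb{X}_1,\mathbb{X}_2)$ rather than anything reflecting the internal structure of $\varphi_1$, so when $\varphi_1$ is itself compound (e.g.\ contains its own temporal operators) the tree no longer encodes the requirement that $\varphi_1$ hold at every step before the witness time. Second, the complete-path and compression machinery (Definition~\ref{Def:minBoolfrag}, Algorithms~1--2) implicitly assumes that only Boolean operator nodes have multiple children; a $\mathsf{U}$ node with two heterogeneous children $T_1,T_2$ produces complete paths whose backtracking semantics is not defined. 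Your root set is still a correct over-approximation of the satisfaction set of $\exists(\varphi_1\mathsf{U}\varphi_2)$, so part (i) and the root-level claim survive, but to get part (ii) as stated for the whole tree you need the paper's grafted shape (and the same fix then feeds into your $\mathsf{W}$ case, which is otherwise handled correctly as $\varphi_1\mathsf{U}\varphi_2\vee\Box\varphi_1$ with $\mathcal{I}$ supplying the $\Box$ branch).
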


\begin{proof}
  The proof of the first part is similar to that of Theorem~\ref{The:UniTLTLTL} by replacing the universal quantifier $\forall$ and   the reachability  operators  $\mathcal{R}^{\rm{m}}$ and $\mathcal{RI}$ with the existential quantifier $\exists$ and the operators $\mathcal{R}^{\rm{M}}$ and $\mathcal{I}$, respectively. Also, the proof of the second part is similar to that of Theorem~\ref{The:UniTLTLTL} by following the definition of the maximal reachability analysis.
\end{proof}

We call the constructed  TLT of $\forall\varphi$ the  \emph{universal TLT} of $\varphi$ and the  TLT of $\exists\varphi$ the  \emph{existential TLT} of $\varphi$. We remark that the constructed TLT  is not unique: this is because an LTL formula can have different equivalent expressions (e.g, normal forms). Despite this, the approximation relations between an LTL formula and the corresponding TLT still hold.

The following corollary shows that the approximation relation between TLTs and LTL formulae are tight for deterministic transition systems.
\begin{corollary}\label{Cor:TLTApproxdeter}
  For any deterministic transition system $\mathsf{TS}$ and any LTL formula $\varphi$,
  the universal TLT and the existential TLT of $\varphi$ are identical.
\end{corollary}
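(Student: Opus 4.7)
The plan is to exploit the determinism assumption to collapse the four reachability operators in pairs on $\mathsf{TS}$, and then to appeal to the fact that the inductive constructions in the proofs of Theorems~\ref{The:UniTLTLTL} and~\ref{The:ExisTLTLTL} follow the very same recipe on the weak-until positive normal form of $\varphi$, differing only in which of these operators they invoke. The key observation is that, since $|\mathsf{Post}(x)|=1$ for every $x\in\mathbb{S}$, for every subset $\mathbb{Q}\subseteq \mathbb{S}$ we have
\[
\mathsf{Post}(x)\subseteq \mathbb{Q} \quad \Longleftrightarrow \quad \mathsf{Post}(x)\cap \mathbb{Q}\neq \emptyset.
\]
Substituting this equivalence into the fixed-point iterations in Lemmas~\ref{Lem:miniReach}--\ref{Lem:Inv} shows that the two recursions generate identical sequences $\{\mathbb{Q}_k\}$, and passing to the limit yields
\[
\mathcal{R}^{\rm{m}}(\Omega_1,\Omega_2)=\mathcal{R}^{\rm{M}}(\Omega_1,\Omega_2), \qquad \mathcal{RI}(\Omega)=\mathcal{I}(\Omega),
\]
for all $\Omega_1,\Omega_2,\Omega\subseteq \mathbb{S}$.

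I would then proceed by structural induction on the weak-until positive normal form of $\varphi$, mirroring the inductive construction used to prove Theorems~\ref{The:UniTLTLTL} and~\ref{The:ExisTLTLTL}. The base case consists of atomic propositions, where both constructions attach the same labelling-based set node (namely $L^{-1}(a)$ or its complement). For the inductive step, the two constructions agree verbatim on the Boolean connectives $\wedge,\vee$, and on the temporal connectives $\bigcirc,\mathsf{U},\mathsf{W}$ they differ solely through the choice between $\mathcal{R}^{\rm{m}}/\mathcal{R}^{\rm{M}}$ and $\mathcal{RI}/\mathcal{I}$. Since the equality of these operators has just been established, each inductive step produces the same set nodes and the same operator nodes on both sides, so the two trees coincide level by level.

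The main obstacle is making the word \emph{identical} unambiguous: the paper itself flags that TLTs are not unique across equivalent LTL representations, so one could in principle get syntactically different trees from different normal forms. I would handle this by fixing a single weak-until positive normal form of $\varphi$ and a single structural recursion, and by comparing the two resulting TLTs under this shared choice; under that convention, the operator-equality argument above upgrades from semantic equivalence to strict syntactic equality of the universal and existential TLTs.
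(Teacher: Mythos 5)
Your proposal is correct and takes essentially the same route as the paper: the paper's proof likewise observes that determinism collapses $\mathcal{R}^{\rm{m}}$ with $\mathcal{R}^{\rm{M}}$ and $\mathcal{RI}$ with $\mathcal{I}$ (via Lemmas~\ref{Lem:miniReach}--\ref{Lem:Inv}) and then concludes that the same construction procedure yields identical trees. Your explicit singleton-$\mathsf{Post}$ equivalence and the remark about fixing one normal form merely spell out details the paper leaves implicit.
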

\begin{proof}
  If the system is deterministic, it follows from Lemmas~\ref{Lem:miniReach}--\ref{Lem:maxiReach} and Lemmas~\ref{Lem:RobInv}--\ref{Lem:Inv} that for any $\Omega_1, \Omega_2 \subseteq \mathbb{S}$ and $\Omega \subseteq \mathbb{S}$, $\mathcal{R}^{\rm{m}}(\Omega_1,\Omega_2)=\mathcal{R}^{\rm{M}}(\Omega_1,\Omega_2)$ and $\mathcal{RI}(\Omega)=\mathcal{I}(\Omega)$. Then, by the same construction procedure, we have that the constructed universal TLT  is the same as the  constructed existential TLT.
\end{proof}

\begin{remark}
Computation of reachable sets plays a central role in the construction of the TLT. The computation of reachable sets is not the focus of this paper. Interested readers are referred to relevant results   \cite{ChenTAC2018,AlthoffTAC2014,Mitchell2011} and associated computational tools, e.g., the multi-parametric toolbox \cite{MPT3} and the Hamilton-Jacobi toolbox \cite{Mitchell2005}.  \qed


\end{remark}
\begin{example}
  Consider the traffic light in Example \ref{Exa:TraLig} and  the LTL formula $\varphi=\Box \diamondsuit  (g \vee b)$  in Example~\ref{Exa:TLTgb} again. We follow the proof of Theorem~\ref{The:UniTLTLTL} to show the correspondence between $\forall \varphi$ and the TLT in Fig.~\ref{Fig:trafficlight}(b):
  \begin{itemize}
  \item [(1)] the  universal TLT of $g$ is a single set node, i.e., $\{3\}$ and the  universal TLT of $b$ is also a single set node, i.e,, $\{5\}$;
  \item [(2)] the root node of the  universal TLT of $g \vee b$ is the union of $\{3\}$ and  $\{5\}$, i.e., $\{3,5\}$;
   \item [(3)] the root node of the  universal TLT of $\diamondsuit (g \vee b)$ is $\mathcal{R}^{\rm{m}}(\mathbb{S},\{3,5\})=\{1,2,3,4,5\}$;
    \item [(4)]   the root node of the  universal TLT of $\Box\diamondsuit (g \vee b)$ is
       $\mathcal{RI}(\{1,2,3,4,5\})=\{1,2,3,4,5\}$.
\end{itemize}

We can follow the same steps in the proof of Theorem~\ref{The:ExisTLTLTL} to construct the existential TLT of $\varphi$, which is the same as the universal TLT of $\varphi$ for the system in Example~\ref{Exa:TraLig}.\qed
\end{example}

%

\section{Model Checking via TLT}\label{Sec:MoChe}
This section focuses on the  model checking problem.
\begin{problem}\label{Pro:ModChe}
   Consider a transition system $\mathsf{TS}$ and an LTL formula $\varphi$. Verify whether $\mathsf{TS} \vDash \varphi$, i.e., $\forall x_0 \in  \mathbb{S}_0$, $x_0\vDash \forall \varphi$.
\end{problem}

Thanks to the approximation relations between the TLTs and the LTL formulae, we obtain the following lemma.
\begin{lemma}\label{Lemma:TLTApproxnondeter}
  For any transition system $\mathsf{TS}$ and any LTL formula $\varphi$,
   \begin{itemize}
    \item [(i)] $x_0\vDash \forall \varphi$ if $x_0$ belongs to the root node of the universal TLT of $\varphi$;
     \item [(ii)] $x_0\vDash \exists \varphi$ only if $x_0$ belongs to the root node of the existential TLT of $\varphi$.
  \end{itemize}
\end{lemma}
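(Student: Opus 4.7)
The plan is to derive this lemma directly from Theorems~\ref{The:UniTLTLTL} and~\ref{The:ExisTLTLTL}, by bridging the gap between ``root-node membership'' (a set-theoretic condition on a single state) and ``TLT satisfaction'' (a trajectory-level condition defined via Algorithm~1 and Definition~\ref{Def:TreeSaf}). Both directions rely on the same structural observation: by construction, the root node of the universal (resp.\ existential) TLT is precisely the set of states from which \emph{every} (resp.\ \emph{some}) trajectory satisfies the TLT. Once this is pinned down, each part of the lemma collapses to one application of the corresponding approximation theorem.

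For part (i), I would proceed by structural induction on the TLT built for $\forall\varphi$ in the proof of Theorem~\ref{The:UniTLTLTL}. The base case (atomic propositions) is immediate, since the root is a labelling-defined set and the corresponding trajectory condition is instantaneous. For the inductive step, each compound operator is generated by either the minimal reachability operator $\mathcal{R}^{\mathrm{m}}$ or the robust invariance operator $\mathcal{RI}$, and both operators are defined with a universal quantifier over successors: if $x_0\in \mathcal{R}^{\mathrm{m}}(\Omega_1,\Omega_2)$ then \emph{every} trajectory from $x_0$ stays in $\Omega_1$ until hitting $\Omega_2$, and if $x_0\in\mathcal{RI}(\Omega)$ then \emph{every} trajectory stays in $\Omega$. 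Combining this with the inductive hypothesis on the children, and selecting the time coding dictated by the first-hitting times along a given trajectory, one obtains that every $\bm{p}\in\mathsf{Trajs}(x_0)$ satisfies the TLT. Part (ii) of Theorem~\ref{The:UniTLTLTL} then gives $\bm{p}\vDash\varphi$ for every such $\bm{p}$, which is exactly $x_0\vDash\forall\varphi$.

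For part (ii), I would argue in the contrapositive direction that is most natural given the ``only if.'' Suppose $x_0\vDash\exists\varphi$, so there is some $\bm{p}\in\mathsf{Trajs}(x_0)$ with $\bm{p}\vDash\varphi$. Theorem~\ref{The:ExisTLTLTL}(ii) says the existential TLT is an over-approximation of $\varphi$, so $\bm{p}$ satisfies the TLT as well. By Definition~\ref{Def:PathSaf}, satisfaction of any complete path requires $x_0\in\mathbb{X}_0$, where $\mathbb{X}_0$ is the root node, and by Algorithm~1 the backtracking procedure that returns \textrm{true} begins by testing exactly this membership. Hence $x_0$ lies in the root node of the existential TLT, as claimed.

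The main obstacle will be the first step: the lemma is stated as a one-line fact, but it implicitly demands a clean correspondence between the static ``$x_0$ lies in the root set'' and the dynamic ``every trajectory from $x_0$ satisfies the TLT under some time coding.'' Carrying out the induction cleanly requires carefully tracking which time coding witnesses satisfaction at each inductive step (using the step-indexed reachable sets $\mathcal{R}^{\mathrm{m}}(\Omega_1,\Omega_2,k)$ of Definition~\ref{Def:miniReach} and the fixed-point characterisation in Lemma~\ref{Lem:miniReach}), and verifying that the compression performed by Algorithm~2 preserves this correspondence for the Boolean operator nodes. Once that bookkeeping is set up, the remaining steps are essentially a direct invocation of Theorems~\ref{The:UniTLTLTL} and~\ref{The:ExisTLTLTL}.
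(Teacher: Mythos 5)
Your proposal is correct and follows essentially the same route as the paper: both parts are reduced to the approximation results of Theorems~\ref{The:UniTLTLTL} and~\ref{The:ExisTLTLTL}. The paper's own proof is a two-sentence appeal to those theorems, leaving implicit the bridge you make explicit (that membership in the root node of the universal, resp.\ existential, TLT corresponds to every, resp.\ at least one, trajectory from $x_0$ satisfying the TLT); your added induction fills in that bookkeeping but does not change the argument.
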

\begin{proof}
The first result follows from  that the root node of the universal TLT is an under-approximation of  the satisfaction set of $\varphi$, as shown in Theorem~\ref{The:UniTLTLTL}. Dually, the second result follows from that the root node of the universal TLT is an over-approximation of  the satisfaction set of $\varphi$, shown in Theorem~\ref{The:ExisTLTLTL}.
\end{proof}

The next theorem provides two sufficient conditions for  solving Problem~\ref{Pro:ModChe}.

\begin{theorem}\label{The:SufMC}
For a transition system $\mathsf{TS}$ and an LTL formula $\varphi$,  $\mathsf{TS} \vDash \varphi$  \emph{if} one of the following conditions holds:
\begin{itemize}
  \item [(i)]   the initial state set $\mathbb{S}_0$ is a subset of the root node of the universal TLT for $\varphi$;
   \item [(ii)]  no initial state from $\mathbb{S}_0$ belongs to the root node of the existential TLT for $\neg\varphi$.
\end{itemize}
\end{theorem}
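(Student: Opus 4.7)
The plan is to derive both sufficient conditions as immediate consequences of Lemma~\ref{Lemma:TLTApproxnondeter}, which already packages the approximation relations established in Theorems~\ref{The:UniTLTLTL} and~\ref{The:ExisTLTLTL}. The goal in each case is to deduce, for every $x_0 \in \mathbb{S}_0$, that $x_0 \vDash \forall \varphi$, since by definition this is exactly what $\mathsf{TS} \vDash \varphi$ means.

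For condition (i), I would simply pick an arbitrary $x_0 \in \mathbb{S}_0$. By hypothesis, $x_0$ belongs to the root node of the universal TLT of $\varphi$. Lemma~\ref{Lemma:TLTApproxnondeter}(i) then yields $x_0 \vDash \forall \varphi$ directly. Since $x_0$ was arbitrary, $\mathsf{TS} \vDash \varphi$ follows.

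For condition (ii), the key is a semantic duality step: from Definition~\ref{LTLsemantic} together with the universal and existential semantics, one has
\begin{equation*}
x_0 \vDash \forall \varphi \quad \Longleftrightarrow \quad \neg\, (x_0 \vDash \exists \neg\varphi),
\end{equation*}
because a trajectory violates $\varphi$ iff it satisfies $\neg\varphi$. I would then pick an arbitrary $x_0 \in \mathbb{S}_0$; by the hypothesis of (ii), $x_0$ is not in the root node of the existential TLT of $\neg\varphi$. Taking the contrapositive of Lemma~\ref{Lemma:TLTApproxnondeter}(ii) applied to $\neg\varphi$ gives $x_0 \not\vDash \exists \neg\varphi$, and the duality above yields $x_0 \vDash \forall \varphi$. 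Quantifying over $\mathbb{S}_0$ concludes.

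There is no real obstacle here: the content is entirely in the earlier construction and approximation results, so the proof is essentially a two-step invocation of Lemma~\ref{Lemma:TLTApproxnondeter}. The only mildly delicate point is writing the duality between $\forall \varphi$ and $\exists \neg\varphi$ cleanly, but this is immediate from the semantics in Definition~\ref{LTLsemantic}. I would keep the argument short and explicit, presenting (i) and (ii) as two separate one-paragraph derivations so the role of the under-approximation (for $\varphi$) versus the over-approximation (for $\neg\varphi$) is transparent to the reader.
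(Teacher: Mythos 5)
Your proposal is correct and follows essentially the same route as the paper: condition (i) is a direct application of Lemma~\ref{Lemma:TLTApproxnondeter}(i), and condition (ii) uses the duality $\bm{p}\vDash\varphi \Leftrightarrow \bm{p}\nvDash\neg\varphi$ together with the contrapositive of Lemma~\ref{Lemma:TLTApproxnondeter}(ii) applied to $\neg\varphi$, exactly as in the paper's proof.
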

\begin{proof}
Condition (i) directly follows from the first result of Lemma~\ref{Lemma:TLTApproxnondeter}.
Let us next prove  condition (ii). It follows that
  \begin{eqnarray*}
  \mathsf{TS} \vDash \varphi \Leftrightarrow \forall \bm{p} \in  \mathsf{Trajs}(\mathsf{TS}), \bm{p}\vDash \varphi \Leftrightarrow \forall   \bm{p} \in  \mathsf{Trajs}(\mathsf{TS}), \bm{p}\nvDash \neg\varphi.
  \end{eqnarray*}
  From the second result of Lemma~\ref{Lemma:TLTApproxnondeter}, if $x_0$ does not belong to the root node of the existential TLT of $\neg\varphi$, we have $\bm{p}\nvDash\neg\varphi$, $\forall \bm{p}\in\mathsf{Trajs}(x_0)$. Thus, the condition (ii) is sufficient for verifying $\mathsf{TS} \vDash \varphi$.
\end{proof}

Similarly, we derive  two necessary conditions for solving the model checking problem.

\begin{theorem}\label{The:NecMC}
For a transition system $\mathsf{TS}$ and an LTL formula $\varphi$,  $\mathsf{TS} \vDash \varphi$ \emph{only if} one of the following conditions holds:
\begin{itemize}
  \item [(i)]   the initial state set $\mathbb{S}_0$ is a subset of the root node of the existential  TLT for $\varphi$;
   \item [(ii)]  no initial state from $\mathbb{S}_0$ belongs to the root node of the universal TLT for $\neg\varphi$.
\end{itemize}
\end{theorem}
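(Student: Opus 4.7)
The plan is to prove both necessary conditions by appealing directly to Lemma~\ref{Lemma:TLTApproxnondeter}, exactly mirroring the structure used in Theorem~\ref{The:SufMC}, but using the two implications of that lemma in the opposite direction. Since both conditions are of the form ``$\mathsf{TS}\vDash\varphi$ only if $X$'', in each case I would either argue the forward implication directly or pass to the contrapositive, whichever is cleaner.

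For condition~(i), I start from $\mathsf{TS}\vDash \varphi$, which unfolds to $x_0\vDash\forall\varphi$ for every $x_0\in\mathbb{S}_0$, i.e.\ $\bm{p}\vDash\varphi$ for every $\bm{p}\in\mathsf{Trajs}(x_0)$. Since trajectories are by definition infinite and $\mathsf{Trajs}(x_0)$ is nonempty for each initial state, the universal satisfaction implies existential satisfaction: $x_0\vDash\exists\varphi$. Applying the second clause of Lemma~\ref{Lemma:TLTApproxnondeter} (the ``only if'' direction for the existential TLT) then yields that $x_0$ lies in the root node of the existential TLT of $\varphi$. Since this holds for every $x_0\in\mathbb{S}_0$, the inclusion $\mathbb{S}_0\subseteq$ root of the existential TLT of $\varphi$ follows.

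For condition~(ii), I would argue by contrapositive. Suppose some $x_0\in\mathbb{S}_0$ belongs to the root node of the universal TLT of $\neg\varphi$. By the first clause of Lemma~\ref{Lemma:TLTApproxnondeter}, this gives $x_0\vDash\forall\neg\varphi$, meaning every trajectory $\bm{p}\in\mathsf{Trajs}(x_0)$ satisfies $\neg\varphi$ and hence violates $\varphi$. Then $\mathsf{Trajs}(x_0)\subseteq\mathsf{Trajs}(\mathsf{TS})$ contains trajectories not satisfying $\varphi$, so $\mathsf{TS}\nvDash\varphi$, which is the contrapositive of the desired statement.

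I do not anticipate any serious obstacle: the work has been moved entirely into Theorems~\ref{The:UniTLTLTL}--\ref{The:ExisTLTLTL} and Lemma~\ref{Lemma:TLTApproxnondeter}. The only point worth stating carefully is the tacit assumption that $\mathsf{Trajs}(x_0)\neq\emptyset$ for each $x_0\in\mathbb{S}_0$ (so that $\forall\varphi$ implies $\exists\varphi$ at the state level); this follows from the definition of an infinite trajectory, provided each state has at least one successor, which is the standard well-definedness assumption for transition systems of this kind and can be noted in one line.
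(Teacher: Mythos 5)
Your proposal is correct and follows essentially the same route as the paper, which proves Theorem~\ref{The:NecMC} simply by dualizing the argument of Theorem~\ref{The:SufMC} via the two clauses of Lemma~\ref{Lemma:TLTApproxnondeter}. Your explicit remark that $\mathsf{Trajs}(x_0)\neq\emptyset$ is needed to pass from universal to existential satisfaction is a worthwhile point of care that the paper leaves tacit.
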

\begin{proof}
Similar to Theorem~\ref{The:SufMC}.
\end{proof}



Notice that the approximation relations between the TLT and the LTL formula are tight for deterministic transition systems, as shown in Corollary~\ref{Cor:TLTApproxdeter}. In this case, the model checking problem can be tackled as follows.
\begin{corollary}
  For a deterministic transition system $\mathsf{TS}$ and an LTL formula $\varphi$,  $\mathsf{TS} \vDash \varphi$  if and only if the initial state set $\mathbb{S}_0$ is a subset of the root node of  the universal (or existential) TLT for $\varphi$.
\end{corollary}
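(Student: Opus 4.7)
The plan is to obtain this corollary as a direct consequence of the model-checking theorems established above, specialized to the deterministic setting via Corollary~\ref{Cor:TLTApproxdeter}. Concretely, I would invoke Theorem~\ref{The:SufMC}(i) for the ``if'' direction and Theorem~\ref{The:NecMC}(i) for the ``only if'' direction, and then observe that determinism collapses the two TLT approximations into a single tree, so the sufficient and necessary conditions coincide.

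For the ``if'' direction, assume $\mathbb{S}_0$ is a subset of the root node of the universal TLT for $\varphi$. Then Theorem~\ref{The:SufMC}(i) immediately yields $\mathsf{TS}\vDash\varphi$, with no use of determinism required.

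For the ``only if'' direction, assume $\mathsf{TS}\vDash\varphi$. Theorem~\ref{The:NecMC}(i) gives that $\mathbb{S}_0$ is contained in the root node of the \emph{existential} TLT for $\varphi$. To transfer this conclusion to the universal TLT (and thereby match the corollary's statement), I invoke Corollary~\ref{Cor:TLTApproxdeter}: since $\mathsf{TS}$ is deterministic, the universal and existential TLTs for $\varphi$ are identical, hence their root nodes coincide. Therefore $\mathbb{S}_0$ is also a subset of the root node of the universal TLT, which proves the ``only if'' direction and also shows that either wording (universal or existential) in the corollary's statement is correct.

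There is no real obstacle to overcome here: the deterministic case is essentially the degenerate case where $\mathcal{R}^{\rm{m}}=\mathcal{R}^{\rm{M}}$ and $\mathcal{RI}=\mathcal{I}$ (as already observed in the proof of Corollary~\ref{Cor:TLTApproxdeter}), so the under- and over-approximations from Theorems~\ref{The:UniTLTLTL} and~\ref{The:ExisTLTLTL} collapse into an exact characterization of the satisfaction set. The proof is therefore a short combination of three previously established results and need not be more than a few lines.
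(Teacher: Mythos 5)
Your proposal is correct and follows essentially the same route as the paper, whose own proof is the one-line remark that the corollary ``follows from Corollary~\ref{Cor:TLTApproxdeter}''; you simply make explicit the implicit appeals to Theorem~\ref{The:SufMC}(i) and Theorem~\ref{The:NecMC}(i) and the identification of the two root nodes under determinism. No gaps.
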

\begin{proof}
  Follows from Corollary~\ref{Cor:TLTApproxdeter}.
\end{proof}

The conditions in Theorems~\ref{The:SufMC}--\ref{The:NecMC} imply that one can directly do model checking by using TLTs, as shown in the following example.
\begin{example}
 Let us continue to  consider the traffic light and  the LTL formula $\varphi=\Box \diamondsuit  (g \vee b)$.  Let us verify whether $\mathsf{TS}\vDash \varphi$ by using the above method. Since the unique initial state $x_0$ belongs to the root node of the universal TLT of $\varphi$ shown in Fig.~\ref{Fig:trafficlight}(b), it follows from  condition (i) in Theorem~\ref{The:SufMC} that $\mathsf{TS}\vDash \varphi$. Next, we show how to use  condition (ii) to verify that $\mathsf{TS}\vDash \varphi$. \\
First of all, we have $\neg \varphi=\diamondsuit \Box (\neg g \wedge \neg b)$. Following the proof of Theorem~\ref{The:ExisTLTLTL}, we construct the existential TLT of $\neg \varphi$:
\begin{itemize}
  \item [(1)] the  existential TLT of $\neg g$ is a single set node, i.e., $\{1,2,4,5\}$ and the  existential TLT of $\neg b$ is also a single set node, i.e,, $\{1,2,3,4\}$;
  \item [(2)] the root node of the  existential TLT of $\neg g \wedge \neg b$ is the intersection of $\{1,2,4,5\}$ and  $\{1,2,3,4\}$, i.e., $\{1,2,4\}$;
   \item [(3)] the root node of the  existential TLT of $\Box (\neg g \wedge \neg b)$ is $\mathcal{I}(\{2,3,4,5\})=\emptyset$.
\end{itemize}
As the  existential TLT of $\neg \varphi$ is the empty set $\emptyset$, this implies that  condition (ii)  in Theorem~\ref{The:SufMC} holds and thus $\mathsf{TS}\vDash \varphi$.
 \qed
\end{example}

\section{Control Synthesis via TLT}\label{Sec:ConSyn}
This section will show how to use the TLT to do control synthesis. Before that, we will introduce the notion of controlled transition system and recall the controlled reachability analysis.
\subsection{Controlled Transition System}
\begin{definition}\label{Def:CTS}
  A controlled transition system $\mathsf{CTS}$ is a tuple $\mathsf{CTS}=(\mathbb{S}, \mathbb{U}, \rightarrow,\mathbb{S}_0,\mathcal{AP},L)$  consisting of
  \begin{itemize}
    \item a set $\mathbb{S}$ of states;
    \item a set $\mathbb{U}$ of control inputs;
    \item a transition relation $\rightarrow\in \mathbb{S}\times \mathbb{U}\times \mathbb{S}$;
    \item a set $\mathbb{S}_0$ of initial states;
    \item a set $\mathcal{AP}$ of atomic propositions;
    \item a labelling function $L: \mathbb{S} \rightarrow 2^{\mathcal{AP}}$.
  \end{itemize}
\end{definition}

\begin{definition}
  A controlled transition system $\mathsf{CTS}$ is said to be finite if $|\mathbb{S}|< \infty$, $|\mathbb{U}|< \infty$, and  $|\mathcal{AP}|< \infty$.
\end{definition}


\begin{definition}\label{Def:DTS}
  For $x\in \mathbb{S}$ and $u\in \mathbb{U}$, the set $\mathsf{Post}(x,u)$ of direct successors of $x$ under $u$ is defined by $\mathsf{Post}(x,u)=\{x'\in \mathbb{S}\mid x \xrightarrow{u} x'\}$.
\end{definition}

\begin{definition}
For $x\in \mathbb{S}$, the set $\mathbb{U}(x)$ of admissible control inputs at state $x$ is defined by $\mathbb{U}(x)=\{u\in \mathbb{U}\mid \mathsf{Post}(x,u)\neq\emptyset\}$.
\end{definition}


\begin{definition}\label{Def:policy}
  (Policy) For a controlled transition system $\mathsf{CTS}$,
 a \emph{policy} $\bm{\mu}=u_0u_1\ldots u_k\ldots$ is a sequence of maps $u_k: \mathbb{S}\rightarrow \mathbb{U}$. Denote by $\mathcal{M}$ the set of all policies.
\end{definition}

\begin{definition}
  (Trajectory) For a controlled transition system $\mathsf{CTS}$, an infinite \emph{trajectory} $\bm{p}$ starting from $x_0$ under a policy $\bm{\mu}=u_0u_1\ldots u_k\ldots$  is a sequence of states $\bm{p}=x_0x_1\ldots x_k\ldots$ such that $\forall k\in \mathbb{N}$, $x_{k+1}\in \mathsf{Post}(x_k,u_k(x_k))$.
Denote by $\mathsf{Trajs}(x_0,\bm{\mu})$ the set of infinite trajectories starting from $x_0$ under $\bm{\mu}$.
\end{definition}

\begin{example}\label{Exa:CTL}
 A controlled transition system $\mathsf{CTS}=(\mathbb{S}, \mathbb{U}, \rightarrow,\mathbb{S}_0,\mathcal{AP},L)$ is shown in Fig.~\ref{Fig:controlexample}(a), where
 \begin{itemize}
    \item $\mathbb{S}=\{s_1,s_2,s_3,s_4\}$;
    \item $\mathbb{U}=\{a_1,a_2\}$;
    \item $\rightarrow=\{(s_1, a_1, s_2), (s_1, a_1, s_3), (s_2, a_1, s_2), (s_2, a_1, s_3), \\ (s_2, a_1, s_3), (s_2, a_2, s_4), (s_3, a_1, s_2), (s_3, a_2, s_3), \\ (s_4, a_1, s_2), (s_4, a_1, s_4)\}$;
    \item $\mathbb{S}_0=\{s_1\}$;
    \item $\mathcal{AP}=\{o_1,o_2,o_3\}$;
    \item $L = \{s_1\rightarrow \{o_1\},  s_2\rightarrow\{o_2\}, s_3\rightarrow\{o_3\}, s_4\rightarrow\{o_2\}\}$. \qed
  \end{itemize}
\end{example}

\begin{remark}  \label{SSec:disConSys}
We express the following  discrete-time uncertain control system as an infinite controlled transition system:
\begin{eqnarray}\label{Eq:disConSys}
\mathsf{CS}:
\begin{cases}
  x_{k+1}=f(x_k,u_k,w_k), \\
  y_k=g(x_k),
\end{cases}
\end{eqnarray}
where $x_k\in\mathbb{R}^{n_x}$ and $u_k\in\mathbb{R}^{n_u}$, $w_k\in \mathbb{R}^{n_w}$, $y_k\in 2^{\mathcal{O}}$, $f: \mathbb{R}^{n_x}\times \mathbb{R}^{n_u}\times \mathbb{R}^{n_w}\rightarrow \mathbb{R}^{n_x}$, and $g:\mathbb{R}^{n_x}\rightarrow 2^{\mathcal{O}}$. Here, $\mathcal{O}$ denotes the set of  observations. At each time instant $k$, the control input $u_k$ is constrained by a compact set $\mathbb{U}_{\mathsf{CS}}\subset \mathbb{R}^{n_u}$  and the disturbance $w_k$ belongs to a compact set $\mathbb{W}\subset \mathbb{R}^{n_w}$. Denote by $\mathsf{Ini}\subseteq\mathbb{R}^{n_x}$ the set of the initial states.
If the observation set $\mathcal{O}$ is finite,  $\mathsf{CS}$ can be rewritten as an infinite controlled transition system,  $\mathsf{CTS}_{\mathsf{CS}}=(\mathbb{S}, \mathbb{U}, \rightarrow,\mathbb{S}_0,\mathcal{AP},L)$ where
\begin{itemize}
  \item $\mathbb{S}=\mathbb{R}^{n_x}$;
  \item $\mathbb{U}=\mathbb{U}_{\mathsf{CS}}$;
   \item $\forall x,x'\in \mathbb{S}$ and $\forall u\in \mathbb{U}$,  $x \xrightarrow{u} x'$ if and only if there exists $w\in \mathbb{W}$ such that $x'=f(x,u,w)$;
  \item $\mathbb{S}_0=\mathsf{Ini}$;
  \item  $\mathcal{AP}=\mathcal{O}$;
  \item $L=g$. \qed
\end{itemize}
\end{remark}

  \begin{figure}
\centering
\begin{subfigure}[t]{.25\textwidth}
	\includegraphics[width=\linewidth]{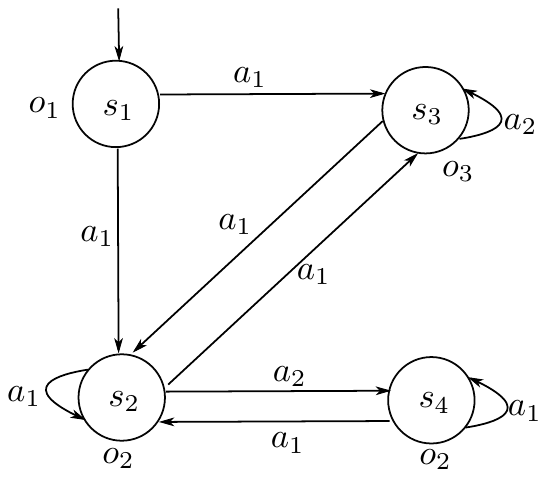}
\caption{}
\end{subfigure}\quad
\begin{subfigure}[t]{.09\textwidth}
	\includegraphics[width=\linewidth]{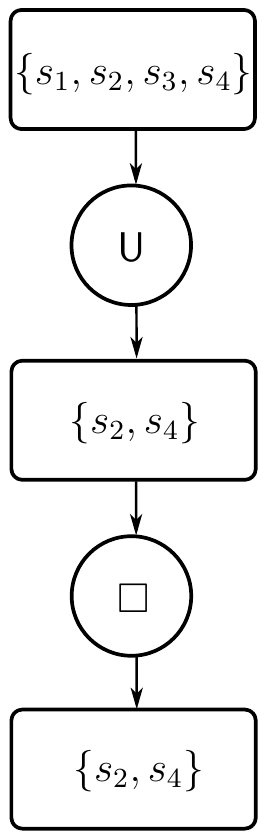}
\caption{}
\end{subfigure}
	\caption{\footnotesize (a) Graph description of a controlled transition system; (b) The controlled TLT of $\varphi= \diamondsuit  \Box o_2$ for the system.}
	\label{Fig:controlexample}
	\vspace{-0.2cm}
\end{figure}


\vspace{-0.2cm}
\subsection{Controlled Reachability Analysis}\label{Subsec:ConReaAna}

This subsection will develop reachability analysis of a controlled transition system $\mathsf{CTS}$.
\begin{definition}
   Consider a controlled transition system $\mathsf{CTS}$ and two sets $\Omega_1, \Omega_2 \subseteq \mathbb{S}$. The $k$-step controlled  reachable set from $\Omega_1$ to $\Omega_2$ is defined as
   \begin{eqnarray*}
    &&\hspace{-0.7cm}\mathcal{R}^{\rm{c}}(\Omega_1,\Omega_2,k)=\Big\{x_0\in \mathbb{S}   \mid \exists \bm{\mu}\in \mathcal{M} \ {\rm s.t.}, \ \forall \bm{p}\in \mathsf{Trajs}(x_0, \bm{\mu}), \\
    &&\hspace{-0.5cm}   \bm{p}[..k]=x_0\ldots x_{k}, \forall i\in\mathbb{N}_{[0,k-1]}, x_i\in \Omega_1, x_k\in\Omega_2 \Big\}.
    \end{eqnarray*}
    The controlled reachable set from $\Omega_1$ to $\Omega_2$ is defined as
    \begin{eqnarray*}
    &&\hspace{-0.5cm}\mathcal{R}^{\rm{c}}(\Omega_1,\Omega_2)= \bigcup_{k\in \mathbb{N}} \mathcal{R}^{\rm{c}}(\Omega_1,\Omega_2,k).
    \end{eqnarray*}
\end{definition}

\begin{lemma}\label{Lemma:ConReaSet}
  For two sets $\Omega_1, \Omega_2 \subseteq \mathbb{S}$, define
   \begin{eqnarray*}
  &&\hspace{-0.5cm}\mathbb{Q}_{k+1}=\{x\in \Omega_1 \mid  \exists u\in \mathbb{U}(x), \mathsf{Post}(x,u)\subseteq \mathbb{Q}_k\} \cup \mathbb{Q}_k , \\
  &&\hspace{-0.5cm}\mathbb{Q}_0=\Omega_2.
  \end{eqnarray*}
  Then, $\mathcal{R}^{\rm{c}}(\Omega_1,\Omega_2)=\lim_{k\rightarrow \infty }\mathbb{Q}_k$.
\end{lemma}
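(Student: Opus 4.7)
The plan is to mirror the proof of Lemma~\ref{Lem:miniReach}, adapting it to account for the existence of a controlling policy. The key identity to establish is
\[
\mathbb{Q}_k \;=\; \bigcup_{i\in\mathbb{N}_{[0,k]}} \mathcal{R}^{\rm c}(\Omega_1,\Omega_2,i),
\]
from which, together with the Knaster–Tarski theorem applied to the monotone operator $F(\mathbb{P}) = \{x\in\Omega_1 \mid \exists u\in\mathbb{U}(x),\ \mathsf{Post}(x,u)\subseteq\mathbb{P}\}\cup\mathbb{P}$, the conclusion $\mathcal{R}^{\rm c}(\Omega_1,\Omega_2)=\lim_{k\to\infty}\mathbb{Q}_k$ follows as in Lemma~\ref{Lem:miniReach}.

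I would prove the identity by induction on $k$. The base case $k=0$ is immediate: $\mathbb{Q}_0=\Omega_2=\mathcal{R}^{\rm c}(\Omega_1,\Omega_2,0)$, using the trivial (empty) prefix in the definition of the $0$-step controlled reachable set. For the inductive step, the $(\supseteq)$ direction is routine: if $x\in\mathcal{R}^{\rm c}(\Omega_1,\Omega_2,k+1)\setminus\mathbb{Q}_k$ is witnessed by a policy $\bm{\mu}=u_0 u_1\ldots$, then $x\in\Omega_1$ and the control $u_0(x)\in\mathbb{U}(x)$ satisfies $\mathsf{Post}(x,u_0(x))\subseteq \bigcup_{i=0}^{k}\mathcal{R}^{\rm c}(\Omega_1,\Omega_2,i)=\mathbb{Q}_k$ by the inductive hypothesis, whence $x\in\mathbb{Q}_{k+1}$.

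The $(\subseteq)$ direction is where the controlled setting requires extra care. Take $x\in\mathbb{Q}_{k+1}\setminus\mathbb{Q}_k$ and let $u\in\mathbb{U}(x)$ witness $\mathsf{Post}(x,u)\subseteq\mathbb{Q}_k$. By the inductive hypothesis, for every successor $x'\in\mathsf{Post}(x,u)$ there is an index $k_{x'}\le k$ and a policy $\bm{\mu}^{x'}$ under which all trajectories from $x'$ remain in $\Omega_1$ and enter $\Omega_2$ within $k_{x'}$ steps. I would then construct a single composite policy $\bm{\mu}=u_0 u_1\ldots$ by setting $u_0(x)=u$ and, for $j\ge 1$, defining $u_j(y)$ to be $u_{j-1}^{h(y,j)}(y)$, where the selector $h(y,j)$ picks some successor $x'$ for which $y$ lies on some admissible trajectory at time $j-1$ from $x'$ under $\bm{\mu}^{x'}$ (defaulting arbitrarily elsewhere). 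This yields $x\in\mathcal{R}^{\rm c}(\Omega_1,\Omega_2,k+1)$.

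The main obstacle is precisely this policy-stitching step: because policies are state-dependent sequences of maps $u_j:\mathbb{S}\to\mathbb{U}$, and because different successors $x'\in\mathsf{Post}(x,u)$ may possess different witnessing policies with different reaching horizons $k_{x'}$, one must verify that the selector $h$ can be chosen so that the resulting $\bm{\mu}$ drives \emph{every} trajectory from $x$ into $\Omega_2$ within $k+1$ steps while keeping all prefixes in $\Omega_1$. This works because the definition of $\mathcal{R}^{\rm c}$ quantifies over trajectories after fixing the policy, so any consistent choice at each state suffices, and trajectories that have already hit $\Omega_2$ need no further constraint. Once this combination is justified, the fixed-point argument via Knaster–Tarski closes the proof exactly as in Lemma~\ref{Lem:miniReach}.
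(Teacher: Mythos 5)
Your overall route---establish $\mathbb{Q}_k=\bigcup_{i\in\mathbb{N}_{[0,k]}}\mathcal{R}^{\rm c}(\Omega_1,\Omega_2,i)$ by induction and close with Knaster--Tarski---is exactly the paper's: its proof of this lemma literally reads ``similar to the proof of Lemma~\ref{Lem:miniReach}'', and that proof asserts the analogous identity without argument. Your base case and the $(\supseteq)$ direction are fine. The gap sits in the $(\subseteq)$ direction, at precisely the stitching step you flag as the main obstacle and then dismiss with ``trajectories that have already hit $\Omega_2$ need no further constraint.'' Under the paper's definition, $x\in\mathcal{R}^{\rm c}(\Omega_1,\Omega_2,k+1)$ requires one policy under which \emph{every} trajectory is in $\Omega_1$ at every step $0,\dots,k$ and in $\Omega_2$ at step exactly $k+1$. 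A successor $x'\in\mathsf{Post}(x,u)$ whose only admissible horizon is $k_{x'}<k$ forces its trajectories into $\Omega_2$ too early, and nothing guarantees they stay in $\Omega_1$ afterwards or can be returned to $\Omega_2$ at the common time $k+1$. Concretely, take states $\{a,b,c,d,e\}$ with a single control, $\mathsf{Post}(a)=\{b,c\}$, $\mathsf{Post}(c)=\{b\}$, $\mathsf{Post}(b)=\{d\}$, $\mathsf{Post}(d)=\mathsf{Post}(e)=\{e\}$, $\Omega_1=\{a,b,c\}$, $\Omega_2=\{d\}$: then $a\in\mathbb{Q}_3$, but $a\notin\mathcal{R}^{\rm c}(\Omega_1,\Omega_2,i)$ for every $i$, because the two branches reach $d$ at times $2$ and $3$ respectively and $d$'s successor $e$ lies outside $\Omega_1\cup\Omega_2$. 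So the identity your induction is built on is false under the literal definitions, and no choice of the selector $h$ repairs it.

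To be fair, this defect is inherited from the paper: the same example breaks the ``easy to see'' identity in Lemma~\ref{Lem:miniReach}. But since you single this step out as the crux and then claim it works, the justification must change. Two repairs are available. First, read (or redefine) the $k$-step set as ``there is a policy under which every trajectory reaches $\Omega_2$ at some trajectory-dependent step $j\le k$ while staying in $\Omega_1$ for all $i<j$''; this is the $\mathsf{U}$-semantics the TLT construction actually relies on. Second, with that reading, replace the successor-policy stitching---which also leaves unresolved the selector-consistency problem when a state is reachable at the same local time from two successors whose witnessing policies prescribe different controls---by the standard stratification argument: for $y\in\mathbb{Q}_i\setminus\mathbb{Q}_{i-1}$ with $i\ge 1$, fix $u(y)\in\mathbb{U}(y)$ with $\mathsf{Post}(y,u(y))\subseteq\mathbb{Q}_{i-1}$. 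The resulting stationary policy strictly decreases the stratum index along every trajectory until $\mathbb{Q}_0=\Omega_2$ is reached, which gives $\mathbb{Q}_k\subseteq$ (the corrected) $\mathcal{R}^{\rm c}(\Omega_1,\Omega_2,k)$ in one stroke and without any gluing of policies.
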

\begin{proof}
  Similar to the proof of Lemma~\ref{Lem:miniReach}.
\end{proof}

\begin{definition}
  A set $\Omega_f\subseteq \mathbb{S}$ is said to be a robust controlled invariant set (RCIS) of a transition system $\mathsf{TS}$ if for any $x\in \Omega_f$, there exists $u\in \mathbb{U}(x)$ such that $\mathsf{Post}(x,u)\subseteq \Omega_f$.
\end{definition}

\begin{definition}
For a set $\Omega \subseteq \mathbb{S}$, a set   $\mathcal{RCI}(\Omega)\subseteq \mathbb{S}$ is said to be the largest  RCIS in  $\mathbb{S}$ if each RCIS $\Omega_f\subseteq \Omega$ satisfies $\Omega_f\subseteq \mathcal{RCI}(\Omega)$.
\end{definition}

\begin{lemma}\label{Lemma:RCIS}
For a set $\Omega \subseteq\mathbb{S}$, define
  \begin{eqnarray*}
    &&\mathbb{Q}_{k+1}=\{x\in \mathbb{Q}_k \mid  \exists u\in \mathbb{U}(x),  \mathsf{Post}(x,u)\subseteq \mathbb{Q}_k\} , \\
    && \mathbb{Q}_0=\Omega.
    \end{eqnarray*}
Then,
 $\mathcal{RCI}(\Omega )=\lim_{k\rightarrow \infty}\mathbb{Q}_{k}$.
\end{lemma}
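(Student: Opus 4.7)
The plan is to mirror the argument used for Lemma~\ref{Lem:RobInv} almost verbatim, swapping the unconditional successor relation $\mathsf{Post}(x)$ for the controlled version $\exists u\in\mathbb{U}(x),\ \mathsf{Post}(x,u)\subseteq\cdot$. Concretely, I would introduce the operator
\begin{equation*}
F(\mathbb{P})=\{x\in \mathbb{P}\mid \exists u\in \mathbb{U}(x),\ \mathsf{Post}(x,u)\subseteq \mathbb{P}\},
\end{equation*}
so that the recursion in the lemma is exactly $\mathbb{Q}_{k+1}=F(\mathbb{Q}_k)$ with $\mathbb{Q}_0=\Omega$. Observing that $F(\mathbb{P})\subseteq \mathbb{P}$ by construction gives a non-increasing chain $\Omega=\mathbb{Q}_0\supseteq\mathbb{Q}_1\supseteq\mathbb{Q}_2\supseteq\cdots$, so $\mathbb{Q}_\infty:=\lim_{k\to\infty}\mathbb{Q}_k=\bigcap_{k\in\mathbb{N}}\mathbb{Q}_k$ is well-defined in the complete lattice $(2^{\mathbb{S}},\subseteq)$.

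Next I would verify that $F$ is monotone: if $\mathbb{P}_1\subseteq \mathbb{P}_2$, then any witness $u\in\mathbb{U}(x)$ with $\mathsf{Post}(x,u)\subseteq \mathbb{P}_1$ also shows $\mathsf{Post}(x,u)\subseteq \mathbb{P}_2$, so $F(\mathbb{P}_1)\subseteq F(\mathbb{P}_2)$. By the Knaster--Tarski Theorem~\cite{Tarski1955}, the greatest fixed point of $F$ below $\Omega$ exists, and the descending iteration starting from the top element $\Omega$ converges to it; hence $\mathbb{Q}_\infty$ is a fixed point of $F$, i.e.\ $F(\mathbb{Q}_\infty)=\mathbb{Q}_\infty$. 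This fixed-point equation is precisely the definition of an RCIS contained in $\Omega$, so $\mathbb{Q}_\infty$ is itself an RCIS with $\mathbb{Q}_\infty\subseteq\Omega$, which gives $\mathbb{Q}_\infty\subseteq \mathcal{RCI}(\Omega)$.

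For the reverse inclusion, I would take any RCIS $\Omega_f\subseteq\Omega$ and show by induction that $\Omega_f\subseteq \mathbb{Q}_k$ for every $k$. The base case $\Omega_f\subseteq \Omega=\mathbb{Q}_0$ is immediate; for the inductive step, if $x\in\Omega_f$, the RCIS property yields some $u\in\mathbb{U}(x)$ with $\mathsf{Post}(x,u)\subseteq\Omega_f\subseteq\mathbb{Q}_k$, and combined with $x\in\Omega_f\subseteq\mathbb{Q}_k$ this places $x$ in $F(\mathbb{Q}_k)=\mathbb{Q}_{k+1}$. Passing to the limit gives $\Omega_f\subseteq\mathbb{Q}_\infty$. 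Taking the union over all such $\Omega_f$ and using the definition of $\mathcal{RCI}(\Omega)$ yields $\mathcal{RCI}(\Omega)\subseteq\mathbb{Q}_\infty$, completing the proof.

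The only genuinely delicate point — and the step I expect to be the main obstacle to write rigorously — is the convergence claim $\mathbb{Q}_\infty=F(\mathbb{Q}_\infty)$ when $\mathbb{S}$ (and hence the chain) is not finite: one direction, $F(\mathbb{Q}_\infty)\subseteq\mathbb{Q}_\infty$, follows from monotonicity applied to $\mathbb{Q}_\infty\subseteq\mathbb{Q}_k$; the other direction, $\mathbb{Q}_\infty\subseteq F(\mathbb{Q}_\infty)$, requires that a single control input $u\in\mathbb{U}(x)$ can be chosen which witnesses $\mathsf{Post}(x,u)\subseteq \mathbb{Q}_k$ for every $k$, not just a potentially different $u_k$ at each level. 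This is where some structural assumption on $\mathbb{U}(x)$ (e.g.\ finiteness, or compactness combined with upper semicontinuity of $\mathsf{Post}(x,\cdot)$) would typically be invoked; I would note this caveat explicitly and, in line with the style of Lemma~\ref{Lem:RobInv}, refer to Knaster--Tarski in its abstract lattice form, which is enough to guarantee the existence of the greatest fixed point even if the transfinite ordinal at which convergence occurs exceeds $\omega$.
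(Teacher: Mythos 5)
Your proof takes essentially the same route as the paper's, which disposes of this lemma with the single line ``similar to the proof of Lemma~\ref{Lem:RobInv}'' --- i.e.\ an appeal to the Knaster--Tarski theorem applied to the monotone operator $F(\mathbb{P})=\{x\in\mathbb{P}\mid \exists u\in\mathbb{U}(x),\ \mathsf{Post}(x,u)\subseteq\mathbb{P}\}$, exactly as you set it up; your explicit two-inclusion argument simply fills in the details the paper omits. The convergence caveat you flag is real and is \emph{not} addressed by the paper: in the uncontrolled Lemma~\ref{Lem:RobInv} the inclusion $\bigcap_k\mathbb{Q}_k\subseteq F(\bigcap_k\mathbb{Q}_k)$ is automatic because there is no existential quantifier, whereas here one needs a single $u\in\mathbb{U}(x)$ witnessing $\mathsf{Post}(x,u)\subseteq\mathbb{Q}_k$ for all $k$ simultaneously, which for infinite $\mathbb{U}(x)$ requires an extra compactness or finiteness hypothesis (or transfinite iteration) that the paper's one-line proof silently assumes.
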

\begin{proof}
   Similar to the proof of Lemma~\ref{Lem:RobInv}.
\end{proof}

 The definitions of controlled reachable sets and RCISs provide us a way to synthesize the feasible control set, which is detailed in Algorithm 4. In the following, we treat the maps $\mathcal{R}^{\rm{c}}: 2^{\mathbb{S}}\times 2^{\mathbb{S}}\rightarrow 2^{\mathbb{S}}$ and $\mathcal{RCI}: 2^{\mathbb{S}}\rightarrow 2^{\mathbb{S}}$ as the reachability operators.

\subsection{Construction and Approximation of TLT}
The next theorem shows how to construct a TLT from an LTL formula for a controlled  transition system and discusses its approximation relation.

\begin{theorem}\label{The:ConTLTLTL}
  For a controlled transition system $\mathsf{CTS}$ and any LTL formula $\varphi$, the following holds:
    \begin{itemize}
    \item[(i)]  a TLT  can be constructed from the formula $\varphi$ through  the reachability operators $\mathcal{R}^{\rm{c}}$ and $\mathcal{RCI}$;
    \item[(ii)] given an initial state $x_0$, if there exists a policy $\bm{\mu}$ such that $\bm{p}$ satisfies the constructed TLT, $\forall \bm{p} \in  \mathsf{Trajs}(x_0,\bm{\mu})$, then $\bm{p} \vDash \varphi$, $\forall \bm{p} \in  \mathsf{Trajs}(x_0,\bm{\mu})$.
  \end{itemize}
\end{theorem}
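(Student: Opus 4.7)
The plan is to mirror the proof of Theorem~\ref{The:UniTLTLTL}, replacing the uncontrolled reachability operators $\mathcal{R}^{\rm{m}}$ and $\mathcal{RI}$ by their controlled counterparts $\mathcal{R}^{\rm{c}}$ and $\mathcal{RCI}$, and then verifying that the existence of a witnessing policy propagates correctly through the induction.

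For part~(i), I would first rewrite $\varphi$ in weak-until positive normal form so that negations appear only in front of atomic propositions. For the base cases, the TLT of an atomic proposition $a$ (respectively $\neg a$) is the single set node $\{x\in\mathbb{S}\mid a\in L(x)\}$ (respectively $\{x\in\mathbb{S}\mid a\notin L(x)\}$). The inductive step handles each constructor: for $\varphi_1\wedge\varphi_2$ and $\varphi_1\vee\varphi_2$, the root is obtained by intersecting or unioning the root nodes $\mathbb{X}_{\varphi_1}$, $\mathbb{X}_{\varphi_2}$ of the children's TLTs and attaching the corresponding Boolean operator node; for $\bigcirc\varphi$, the root becomes $\mathcal{R}^{\rm{c}}(\mathbb{S},\mathbb{X}_\varphi,1)$; for $\varphi_1\mathsf{U}\varphi_2$, the root becomes $\mathcal{R}^{\rm{c}}(\mathbb{X}_{\varphi_1},\mathbb{X}_{\varphi_2})$ with an $\mathsf{U}$ operator node linking to the two children; and for $\varphi_1\mathsf{W}\varphi_2$, I would combine $\mathcal{R}^{\rm{c}}(\mathbb{X}_{\varphi_1},\mathbb{X}_{\varphi_2})$ with $\mathcal{RCI}(\mathbb{X}_{\varphi_1})$ via a $\vee$ Boolean node. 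The claim is then that each such constructor respects the TLT syntax from Section~\ref{Sec:TLT} and uses only the operators $\mathcal{R}^{\rm{c}}$ and $\mathcal{RCI}$.

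For part~(ii), I would again proceed by structural induction. The base cases are immediate from the labelling. For the temporal cases, Lemmas~\ref{Lemma:ConReaSet} and~\ref{Lemma:RCIS} guarantee that whenever $x_0$ lies in $\mathcal{R}^{\rm{c}}(\mathbb{X}_{\varphi_1},\mathbb{X}_{\varphi_2})$ (resp.\ $\mathcal{RCI}(\mathbb{X}_\varphi)$), there is a policy under which every trajectory reaches $\mathbb{X}_{\varphi_2}$ through $\mathbb{X}_{\varphi_1}$ (resp.\ stays in $\mathbb{X}_\varphi$ forever). Combined with Definition~\ref{Def:PathSaf}, satisfaction of the TLT along a trajectory therefore implies satisfaction of $\varphi$ after invoking the induction hypothesis at each child. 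The Boolean conjunction case produces the under-approximation phenomenon already observed in Theorem~\ref{The:UniTLTLTL}: we obtain the implication but not its converse.

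The main obstacle is the interaction between conjunction and branching under a single policy. Syntactically, the induction hypothesis yields witnessing policies $\bm{\mu}_1$ and $\bm{\mu}_2$ for $\varphi_1$ and $\varphi_2$ separately, but the theorem demands a single $\bm{\mu}$ whose trajectories all satisfy the TLT of $\varphi_1\wedge\varphi_2$. The key observation is that the TLT construction sidesteps the problem by intersecting the root sets: if a single $\bm{\mu}$ keeps every trajectory from $x_0$ inside $\mathbb{X}_{\varphi_1}\cap\mathbb{X}_{\varphi_2}$ at the moments prescribed by each sub-TLT, then $\bm{\mu}$ simultaneously witnesses both conjuncts. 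This is precisely why only the sufficient direction holds, and why the controlled TLT is, in general, only an under-approximation of the control-synthesis problem for $\varphi$.
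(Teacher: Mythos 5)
Your overall strategy is exactly the paper's: the published proof of Theorem~\ref{The:ConTLTLTL} consists of two sentences deferring to the proof of Theorem~\ref{The:UniTLTLTL} (Appendix~A) with $\mathcal{R}^{\rm{m}}$, $\mathcal{RI}$ replaced by $\mathcal{R}^{\rm{c}}$, $\mathcal{RCI}$, and your normal-form rewriting, base cases, Boolean and next cases, and the path-wise ``satisfies the TLT $\Rightarrow$ satisfies $\varphi$'' argument for part~(ii) all match that template. There are, however, two points where your rendition diverges from what the deferred-to construction actually does. First, for $\varphi_1\mathsf{U}\varphi_2$ you place $\mathcal{R}^{\rm{c}}(\mathbb{X}_{\varphi_1},\mathbb{X}_{\varphi_2})$ at the root, i.e.\ you apply the reachability operator to the \emph{root} of the $\varphi_1$-tree. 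The paper instead rewrites each \emph{leaf} $\mathbb{Y}^{\varphi_1}_i$ of the $\varphi_1$-TLT as $\mathcal{R}(\mathbb{Y}^{\varphi_1}_i,\mathbb{X}^{\varphi_2})\setminus\mathbb{X}^{\varphi_2}$, re-propagates the operators up to the root, grafts a copy of the $\varphi_2$-TLT under each new leaf via a $\mathsf{U}$ node, and finally disjoins one extra copy of the $\varphi_2$-TLT. The two recipes coincide when the $\varphi_1$-TLT is a single set node, but not otherwise: with $\varphi_1=\Box a$, a trajectory that enters $\mathcal{R}^{\rm{c}}(\mathcal{RCI}(L^{-1}(a)),L^{-1}(b))$ and remains in $\mathcal{RCI}(L^{-1}(a))$ only until it first reaches $L^{-1}(b)$ need not satisfy $(\Box a)\,\mathsf{U}\,b$, since the semantics require $\bm{p}[i..]\vDash\Box a$, i.e.\ $a$ holding forever from every $i<j$; so the under-approximation claim of part~(ii) can fail for your version of the tree. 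Second, your closing discussion of merging the witnessing policies $\bm{\mu}_1$ and $\bm{\mu}_2$ for a conjunction addresses a difficulty that part~(ii) does not pose: the policy $\bm{\mu}$ is \emph{given} in the hypothesis, and all that must be shown is the trajectory-wise implication from TLT-satisfaction (Definition~\ref{Def:PathSaf}) to $\varphi$-satisfaction; the existence or synthesis of such a policy is handled separately by Algorithm~3 and Theorem~\ref{The:RecFea}.
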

\begin{proof}
  The proof of the first part is similar to that of Theorem~\ref{The:UniTLTLTL} by replacing the reachability operators $\mathcal{R}^{\rm{m}}(\cdot)$ and  $\mathcal{RI}(\cdot)$ with $\mathcal{R}^{\rm{c}}(\cdot)$ and $\mathcal{RCI}(\cdot)$, respectively.

Similar to the  under-approximation of the universal TLT in  Theorem~\ref{The:UniTLTLTL}, we can show that the path satisfying the constructed TLT in the first part also satisfies the LTL formula.   Then, we can directly prove the second result.
\end{proof}

Let us call the constructed TLT of $\varphi$ in Theorem~\ref{The:ConTLTLTL}  the controlled TLT of $\varphi$.

\begin{remark}
Checking whether there exists a policy, such that all the resulting trajectories satisfy \emph{the obtained controlled TLT}, is in general a hard problem. A straightforward necessary condition is that $x_0$  belongs to the root node of the controlled TLT: however, this is neither a necessary nor a sufficient condition on the existence of a policy such that all the resulting trajectories satisfy \emph{the given LTL formula}. A (rather conservative) necessary condition for the latter case can be obtained by regarding the controlled TS as a non-deterministic transition system, and then applying Thereom~\ref{The:NecMC}.  \qed
\end{remark}

Next we will show how to construct the  controlled TLT through an example.


\begin{example}\label{Exa:TLTCon}
Consider the controlled  transition system in Example~\ref{Exa:CTL}.
  For an LTL formula $\varphi= \diamondsuit  \Box  o_2$, we can follow the steps in the proof of Theorem~\ref{The:ConTLTLTL} to construct the controlled TLT of $\varphi$, as shown in Fig.~\ref{Fig:controlexample}(b). \qed
\end{example}

\subsection{Control Synthesis Algorithm}
In this subsection, we solve  the following control synthesis problem.

\begin{problem}\label{Pro:ConSyn}
  Consider a controlled transition system $\mathsf{CTS}$ and an LTL formula $\varphi$. For an initial state $x_0\in \mathbb{S}_0$, find, whenever existing, a sequence of feedback control inputs $\bm{u}=u_0(x_0)u_1(x_1)\ldots u_k(x_k)\ldots$ such that the resulting trajectory $\bm{p}=x_0x_1\ldots x_k\ldots$ satisfies $\varphi$.
\end{problem}

\begin{remark}
Note that the objective of the above problem is not to find a policy $\bm{\mu}$, but a sequence of control inputs that depend on the measured state.  In general, synthesizing a policy $\bm{\mu}$ such that \textit{each trajectory} $\bm{p}\in \mathsf{Trajs}(x_0,\bm{\mu})$ satisfies $\varphi$ is computationally intractable for infinite systems. Instead, here we seek to find online a feasible control input at each time step, in a similar spirit to constrained control or receding horizon control. \qed
\end{remark}

Instead of directly solving Problem~\ref{Pro:ConSyn}, we consider the following related task, whose solution  is also a solution to Problem~\ref{Pro:ConSyn}, thanks to Theorem~\ref{The:ConTLTLTL}.

\begin{problem}\label{Pro:ConSynTLT}
Consider a controlled transition system $\mathsf{CTS}$ and an LTL formula $\varphi$. For an initial state $x_0\in \mathbb{S}_0$,  find, whenever existing, a sequence of control inputs $\bm{u}=u_0(x_0)u_1(x_1)\ldots u_k(x_k)\ldots$ such that  the resulting trajectory $\bm{p}=x_0x_1\ldots x_k\ldots$ satisfies the controlled TLT constructed from $\varphi$.
\end{problem}

 Algorithm 3 provides a solution to Problem~\ref{Pro:ConSynTLT}. In particular,  Algorithm 3 presents an online feedback control synthesis procedure, which consists of three steps: (1) control tree: replace each set node of the TLT with a corresponding control set candidate (Algorithm 4); (2) compressed control tree: compress the control tree as a new tree whose set nodes are control sets and whose operator nodes are conjunctions and disjunctions (Algorithm 2); (3) backtrack on the control sets through a bottom-up traversal over the compressed control tree (Algorithm 5). If the output of Algorithm 5 is  ${\rm NExis}$, there does not exist a feasible solution to Problem~\ref{Pro:ConSynTLT}. We remark that Algorithm 3 is implemented in a similar way to receding horizon control with the prediction horizon being one.

\begin{algorithm}
\caption{Online Feedback Control Synthesis via TLT}
\hspace*{\algorithmicindent} \textbf{Input:} an initial state $x_0\in \mathbb{S}_0$ and the controlled  TLT of an LTL formula $\varphi$   \\
 \hspace*{\algorithmicindent} \textbf{Output:}  ${\rm NExis}$ or $(\bm{u},\bm{p})$ with $\bm{u}=u_0u_1\ldots u_k\ldots$ and $\bm{p}=x_0x_1\ldots x_k\ldots$

\begin{algorithmic}[1]
\State initialize $k=0$;
\State construct a control tree via Algorithm 4, with  inputs $\bm{p}[..k]=x_0\ldots x_k$ and the TLT;
\State construct a compressed tree via Algorithm 2, with input the control tree;
\State synthesize a control set $\mathbb{U}^{\varphi}_k(x_k)$ via Algorithm 5, with  input the compressed tree;
\If{$\mathbb{U}^{\varphi}_k(x_k)= \emptyset$}
\State stop and \textbf{return} ${\rm NExis}$;
\Else
\State choose $u_k\in \mathbb{U}^{\varphi}_k(x_k)$;
 \State implement $u_k$ and measure $x_{k+1}\in \mathsf{Post}(x_k,u_k)$;
\State update $k=k+1$ and go to Step 2;
 \EndIf
\end{algorithmic}
\end{algorithm}

\begin{algorithm}
\caption{Control Tree}
\hspace*{\algorithmicindent} \textbf{Input:}  $\bm{p}[..k]=x_0\ldots x_k$ and a TLT \\
 \hspace*{\algorithmicindent} \textbf{Output:} a control tree
\begin{algorithmic}[1]
\If{$k=0$ and $x_0\notin$ the root node of TLT}
\State \textbf{return} $\emptyset$
\Else
\For {each set node $\mathbb{X}$ of  TLT through a bottom-up traversal}
\If{$\bm{p}[..k]$ does not satisfy the fragment from the root node to $\mathbb{X}$}
\State $\rhd$ \emph{see Definition~\ref{Def:PathSaf}};
\State replace $\mathbb{X}$ with $\emptyset$;
\Else
\If{$\mathbb{X}$ is leaf node}
\If{the parent of $\mathbb{X}$ is $\Box$}
\State replace $\mathbb{X}$ with  $\mathbb{U}_{\mathbb{X}}=\{u\in \mathbb{U}\mid \mathsf{Post}(x_k,u) \subseteq \mathcal{RCI}(\mathbb{X})\}$;
\Else
\State replace $\mathbb{X}$ with $\mathbb{U}$;
 \EndIf
\Else
\Switch{the child of  $\mathbb{X}$}
\Case{$\wedge$ (or $\vee$)}
\State replace $\mathbb{X}$ with $\mathbb{U}_{\mathbb{X}}=\cap_{i\in{\rm{CH}}} \mathbb{U}_{{\rm CH}, i}$ (or $\mathbb{U}_{\mathbb{X}}=\cup_{i\in{\rm{CH}}} \mathbb{U}_{{\rm CH}, i}$)
 \State $\rhd$ \emph{for each Boolean operator node,  ${\rm{CH}}$ collects its children and $\mathbb{U}_{{\rm CH}, i}$ is the corresponding control set for each child;}
\EndCase
\Case{$\bigcirc$}
\State replace $\mathbb{X}$ with  $\mathbb{U}_{\mathbb{X}}=\{u\in \mathbb{U}\mid \mathsf{Post}(x_k,u)\subseteq \mathbb{Y}\}$
\State $\rhd$ \emph{$\mathbb{Y}$ is the child of $\bigcirc$;}
\EndCase
\Case{$\mathsf{U}$ or $\Box$} 
\State replace $\mathbb{X}$ with  $\mathbb{U}_{\mathbb{X}}=\{u\in \mathbb{U}\mid \mathsf{Post}(x_k,u)\subseteq \mathbb{X}\}$;
\EndCase
 \EndSwitch
 \EndIf
 \EndIf
 \EndFor
  \State \textbf{return} the updated tree as the control tree.
 \EndIf
\end{algorithmic}
\end{algorithm}

 Algorithm 4 aims to construct a control tree that enjoys the same tree structure as the input TLT. The difference is that all the  set nodes are replaced with the corresponding control set nodes. The correspondence is established as follows: (1) whether the initial state $x_0$ belongs to the  root node or not (lines 1--3); (2) whether the prefix  $\bm{p}[..k]$ satisfy the fragment from the root node to the set node (lines 5--7); 
 (4) whether or not the  set node is a leaf node (lines 9--14); (5) which operator the child of the  set node is (lines 16--24).

Algorithm 5 aims to backtrack a set by using the compressed tree. We update the parent of each Boolean operator node through a bottom-up traversal.
\begin{algorithm}
\caption{Set Backtracking}
\hspace*{\algorithmicindent} \textbf{Input:} a compressed tree  \\
 \hspace*{\algorithmicindent} \textbf{Output:} a set $\mathbb{U}_k^{\varphi}$
\begin{algorithmic}[1]
\For {each Boolean operator node of the compressed tree through a bottom-up traversal}
\Switch{Boolean operator}
\Case{$\wedge$}
\State replace its parent with  $\mathbb{Y}_{{\rm P}} \cup (\cap_{i\in{\rm{CH}}} \mathbb{Y}_{{\rm CH}, i})$;
\EndCase
\Case{$\vee$}
\State replace its parent with  $\mathbb{Y}_{{\rm P}} \cup (\cup_{i\in{\rm{CH}}} \mathbb{Y}_{{\rm CH}, i})$;
\EndCase
\EndSwitch
 \EndFor
 \State $\rhd$ \emph{for each Boolean operator node, $\mathbb{Y}_{{\rm P}}$ denotes its parent,  ${\rm{CH}}$ collects its children, and $\mathbb{Y}_{{\rm CH}, i}$ is the corresponding control set for each child;}
\State \textbf{return} the root node.
\end{algorithmic}
\end{algorithm}

Note that the construction of a control tree in Algorithm~4 is closely related to the controlled reachability analysis in Section~\ref{Subsec:ConReaAna}. In lines 12--13, the computation of control set follows from the definition of RCIS. In lines 22--23, the definition of one-step controlled reachable set is used to compute the control set. In lines 24--26, the control set is synthesized  from the definition of a controlled reachable set.

The following theorem shows that Algorithm 3 is  recursively feasible. This means that initial feasibility implies  future feasibility. This is an important property, particularly used  in receding horizon control.
\begin{theorem}\label{The:RecFea}
  Consider a controlled transition system $\mathsf{CTS}$, an LTL formula $\varphi$, and an initial state $x_0\in \mathbb{S}_0$. Let $x_0$ and the controlled TLT of $\varphi$ be the inputs of Algorithm 3.
If there exists a policy $\bm{\mu}$ such that $\bm{p}$ satisfies the controlled TLT of $\varphi$, $\forall \bm{p} \in  \mathsf{Trajs}(x_0,\bm{\mu})$, then
  \begin{itemize}
    \item[(i)]  the control set $\mathbb{U}^{\varphi}_k(x_k)$ (line 8 of  Algorithm 3)  is nonempty for all $k\in \mathbb{N}$;
    \item[(ii)]  at each time step $k$,  there exists at least one trajectory $\bm{p}$ with  prefix $\bm{p}[..k+1]=x_0\ldots x_kx_{k+1}$      under some policy such that  $\bm{p}$ satisfies the controlled TLT of $\varphi$,  $\forall u_k\in \mathbb{U}^{\varphi}_k(x_k)$ and $\forall x_{k+1}\in  \mathsf{Post}(x_k,u_k)$.
  \end{itemize}
\end{theorem}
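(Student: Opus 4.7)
The plan is to establish (i) and (ii) jointly by induction on $k$, exploiting the fact that Algorithm~4 builds the control set $\mathbb{U}^{\varphi}_k(x_k)$ by applying, at each TLT node, exactly the one-step operators underlying $\mathcal{R}^{\rm{c}}$ and $\mathcal{RCI}$ given by Lemmas~\ref{Lemma:ConReaSet} and \ref{Lemma:RCIS}. The induction hypothesis I would carry along is precisely (ii) at step $k$: there exists a policy $\bm{\mu}^{(k)}$ such that every trajectory in $\mathsf{Trajs}(x_0,\bm{\mu}^{(k)})$ with prefix $x_0\ldots x_k$ satisfies the controlled TLT. The base case $k=0$ is just the hypothesis of the theorem, with $\bm{\mu}^{(0)}=\bm{\mu}$.

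For the inductive step, I would first show (i) at step $k$ from the hypothesis at step $k$. Fix the policy $\bm{\mu}^{(k)}$ and let $u_k^\star = \mu_k^{(k)}(x_k)$. Traversing the TLT bottom-up as Algorithm~4 does, I would show by structural induction on the tree that $u_k^\star$ lies in the control set $\mathbb{U}_{\mathbb{X}}$ attached to every set node $\mathbb{X}$ whose fragment from the root is satisfied by $x_0\ldots x_k$: at $\Box$-leaves this uses $\mathsf{Post}(x_k,u_k^\star)\subseteq\mathcal{RCI}(\mathbb{X})$, which follows from the fact that the suffix of $\bm{\mu}^{(k)}$ keeps \emph{every} branching trajectory inside $\mathbb{X}$ (hence inside the largest RCIS in $\mathbb{X}$ by Lemma~\ref{Lemma:RCIS}); the $\bigcirc$, $\mathsf{U}$, and non-leaf $\Box$ cases are analogous using Lemma~\ref{Lemma:ConReaSet}; and the Boolean cases propagate through the $\cap$/$\cup$ aggregation of Algorithms~4 and~5. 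Hence $u_k^\star\in\mathbb{U}^{\varphi}_k(x_k)$, giving (i).

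To close the induction I would establish (ii) at step $k$, i.e.\ recursive feasibility for every admissible choice in $\mathbb{U}^{\varphi}_k(x_k)$ and every nondeterministic successor. Given any $u_k\in\mathbb{U}^{\varphi}_k(x_k)$ and any $x_{k+1}\in\mathsf{Post}(x_k,u_k)$, I need to exhibit a policy $\bm{\mu}^{(k+1)}$ that agrees with past measurements on $x_0\ldots x_k x_{k+1}$ and under which every completion satisfies the TLT. Using the one-step characterisations of $\mathcal{R}^{\rm{c}}$ and $\mathcal{RCI}$, each active fragment of the TLT reduces at $x_{k+1}$ to a (possibly shortened) fragment whose root set contains $x_{k+1}$; hence I can define $\mu_{k+1}^{(k+1)}(x_{k+1})$ by applying the same Algorithm~4 argument one step later, and iterate to build $\bm{\mu}^{(k+1)}$ by selection from the nonempty control sets produced at each future step. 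This yields both the policy needed for (ii) and, applied at step $k+1$, the nonemptiness of $\mathbb{U}^{\varphi}_{k+1}(x_{k+1})$, closing the induction.

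The main obstacle I expect is the bookkeeping for disjunctions. Algorithm~5 aggregates sibling control sets by union at $\vee$-nodes and intersection at $\wedge$-nodes, so a given $u_k\in\mathbb{U}^{\varphi}_k(x_k)$ may arise from different disjuncts for different branches of $\mathsf{Post}(x_k,u_k)$. The careful step is to argue that this choice can still be extended to a single coherent policy $\bm{\mu}^{(k+1)}$: one has to commit, at each realised $x_{k+1}$, to the disjunct whose corresponding controlled reachability witness applies, and then apply the inductive construction along that branch. Because $\bm{\mu}^{(k)}$ already supplies a witnessing continuation for every $x_{k+1}\in\mathsf{Post}(x_k,\mu_k^{(k)}(x_k))$, the required per-branch selection for arbitrary $u_k\in\mathbb{U}^{\varphi}_k(x_k)$ reduces to the same construction applied to the shifted TLT from $x_{k+1}$, which is precisely what Algorithm~4 realises.
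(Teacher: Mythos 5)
Your proposal is correct and follows essentially the same route as the paper's own (much terser) proof: the paper likewise argues that existence of a TLT-satisfying policy implies feasibility of Algorithm~3 at every step, and that the construction of $\mathbb{U}^{\varphi}_k(x_k)$ in Algorithm~4 (via the one-step characterisations of $\mathcal{R}^{\rm{c}}$ and $\mathcal{RCI}$) guarantees one-step-ahead feasibility for every realization of $x_{k+1}\in\mathsf{Post}(x_k,u_k)$. Your explicit induction on $k$ and your treatment of the disjunction bookkeeping make precise what the paper leaves implicit, but no new idea is introduced.
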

\begin{proof}
The proof follows from the construction of the set $\mathbb{U}^{\varphi}_k(x_k)$ in Algorithm~4 and the operations in Algorithms~2 and~5, and the definitions of controlled reachable sets and RCIS.  If there exists a policy $\bm{\mu}$ such that $\bm{p}$ satisfies the controlled TLT of $\varphi$, $\forall \bm{p} \in  \mathsf{Trajs}(x_0,\bm{\mu})$, we have that Algorithm 3 is feasible at each time step $k$, which implies that $\mathbb{U}^{\varphi}_k(x_k)\neq \emptyset$. Furthermore, from Algorithm 4, each element in $\mathbb{U}^{\varphi}_k(x_k)$ guarantees the one-step ahead feasibility for all realizations of  $x_{k+1}\in  \mathsf{Post}(x_k,u_k)$, which implies the result~(ii).
\end{proof}

Theorem~\ref{The:RecFea} implies that if there exists a policy  such that all the resulting trajectories satisfy  the controlled TLT built from $\varphi$, then Algorithm 3 is always feasible at each time step in two senses: (1) the control set $\mathbb{U}^{\varphi}_k(x_k)$ is nonempty; and (2) there always exists a feasible policy such that the trajectories with the realized prefix satisfy the controlled TLT.


\begin{remark}
In Algorithm~3, the integration of Algorithms 2, 4, and 5 can be interpreted as a feedback control law. This control law is a set-valued map $\mathbb{S}^{k+1}\rightarrow 2^{\mathbb{U}}$ at time step $k$.
Given the prefix $\bm{p}[..k]=x_0\ldots x_k$, this map collects all the feasible control inputs such that the state can move along the TLT from $\bm{p}[..k]$.
 \qed
\end{remark}

\begin{remark}
Note that  to implement Algorithm 3, we do not need to first check for the existence of a policy  for  the  controlled  TLT. The fact that a non-empty control set is  synthesized by Algorithm 3 at each time step is necessary for the existence of the policy for the controlled TLT. We use the existence of the policy as a-priori condition for proving the recursive feasibility of Algorithm 3 in Theorem~\ref{The:ConTLTLTL}.  \qed
\end{remark}

\begin{example}
Let us continue to consider the controlled  transition system in Example~\ref{Exa:CTL} and the LTL formula $\varphi= \diamondsuit  \Box  o_2$ in Example~\ref{Exa:TLTCon}. Implementing Algorithm 3, we obtain Table~I. We can see that at each time step, we can synthesize a nonempty feedback control set.  One  realization is $s_1 \xrightarrow{a_1} s_3 \xrightarrow{a_2} s_3 \xrightarrow{a_1} s_2 \xrightarrow{a_1} s_3\xrightarrow{a_1} s_2 \xrightarrow{a_2} s_4 \xrightarrow{a_1} s_2 \xrightarrow{a_2} s_4 \cdots$, of which the trajectory $\bm{p}=s_1s_3s_3s_2s_3s_2(s_4s_2)^{\omega}$ satisfies both the controlled TLT and the formula $\varphi$.

In this example, Algorithm 3 is recursively feasible since we can verify that the condition in Theorem~\ref{The:RecFea} holds. That is, there exists a policy such that all the  resulting  trajectories  satisfy  the  controlled  TLT: a feasible stationary  policy is  $\bm{\mu}=uu\cdots$, where $u: \mathbb{S}\rightarrow \mathbb{U}$ with $u(s_1)=a_1, u(s_2)=a_2, u(s_3)=a_1$, and $u(s_4)=a_1$. Under this policy, there are two possible trajectories, $\bm{p}=s_1s_3(s_2s_4)^{\omega}$ and $\bm{p}=s_1(s_2s_4)^{\omega}$, both of which satisfy the controlled TLT and the LTL formula $\varphi$.
 \qed
\end{example}

\begin{table}[t]
 \label{Table:ImpAlg5}
 \centering
   \caption{Online implementation under Algorithm 3}
 \begin{tabular}{c|c|c|c}
Time $k$ & State $x_k$ & Control set $\mathbb{U}^{\varphi}_k(x_k)$ & Control input $u_k$ \\
 \hline
$0$ &  $s_1$ & $\{a_1\}$ & $a_1$  \\
$1$ &  $s_3$ & $\{a_1,a_2\}$ & $a_2$  \\
$2$ &  $s_3$ & $\{a_1,a_2\}$ & $a_1$  \\
$3$ &  $s_2$ & $\{a_1,a_2\}$ & $a_1$  \\
$4$ &  $s_3$ & $\{a_1,a_2\}$ & $a_1$  \\
$5$ &  $s_2$ & $\{a_1,a_2\}$ & $a_2$  \\
$6$ &  $s_4$ & $\{a_1\}$ & $a_1$  \\
$7$ &  $s_2$ & $\{a_1,a_2\}$ & $a_2$  \\
$\vdots$ &  $\vdots$ & $\vdots$ & $\vdots$
\end{tabular}
 \end{table}

\section{Examples}\label{Sec:Example}
\subsection{Obstacle Avoidance}
Following the example of obstacle avoidance for double integrator  in \cite{Gol2014}, we consider the following dynamical system
\begin{eqnarray}\label{Eq:doubleintegrator}
x_{k+1}=\left[\begin{array}{ccccccc}
1 & 0.2 \\
0 & 1
\end{array}\right]x_k+ \left[\begin{array}{ccccccc}
0.1 \\
0.2
\end{array}\right]u_k+w_k.
\end{eqnarray}
Different from \cite{Gol2014}, we choose the smaller sampling time of $0.2$ second and take into account the disturbance $w_k$. We consider the same scenario as in \cite{Gol2014}, as shown in Fig.~\ref{Fig:example1_scenariopdf}(a). The working space is $\mathbb{X}=\{z\in \mathbb{R}^2\mid [-10, -10]^T \leq z\leq [2, 2]^T\}$, the control  constraint set is $\mathbb{U}=\{z\in \mathbb{R}\mid -2 \leq z\leq 2\}$, and the disturbance set is  $\mathbb{W}=\{z\in \mathbb{R}^2\mid [-0.05, -0.05]^T \leq z\leq [0.05, 0.05]^T\}$. In Fig.~\ref{Fig:example1_scenariopdf}(a), the obstacle regions are $\mathbb{O}_1=\{z\in \mathbb{R}^2\mid [-10, -10]^T \leq z\leq [-5, -5]^T\}$ and $\mathbb{O}_2=\{z\in \mathbb{R}^2\mid [-5, -4]^T \leq z\leq [2, -3]^T\}$, the target region is  $\mathbb{T}=\{z\in \mathbb{R}^2\mid [-0.5, -0.5]^T \leq z\leq [-0.5, -0.5]^T\}$, and two visiting regions are $\mathbb{A}=\{z\in \mathbb{R}^2\mid [-6, 1]^T \leq z\leq [-5, 2]^T\}$ and $\mathbb{B}=\{z\in \mathbb{R}^2\mid [-5, -3]^T \leq z\leq [-4, -2]^T\}$.

\begin{figure}
\centering
\begin{subfigure}[t]{.23\textwidth}
	\includegraphics[width=\linewidth]{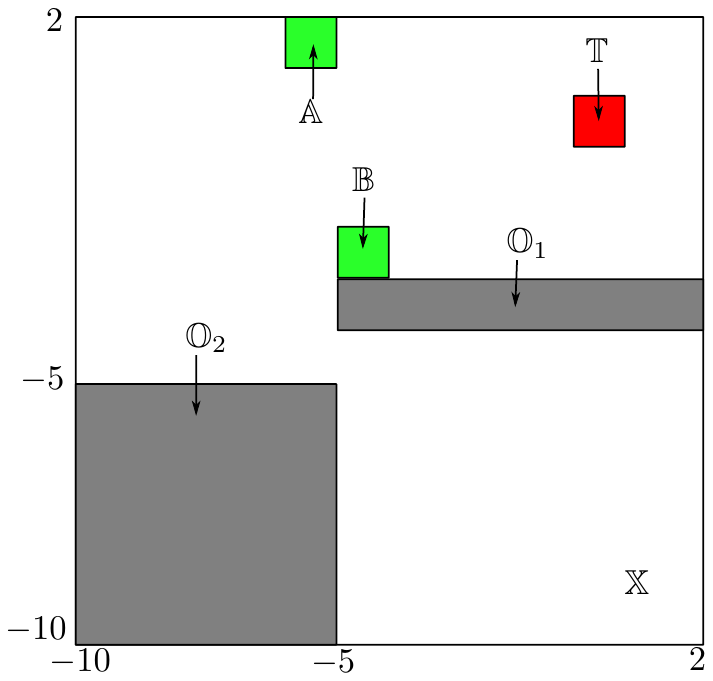}
\caption{}
\end{subfigure}
\begin{subfigure}[t]{.22\textwidth}
	\includegraphics[width=\linewidth]{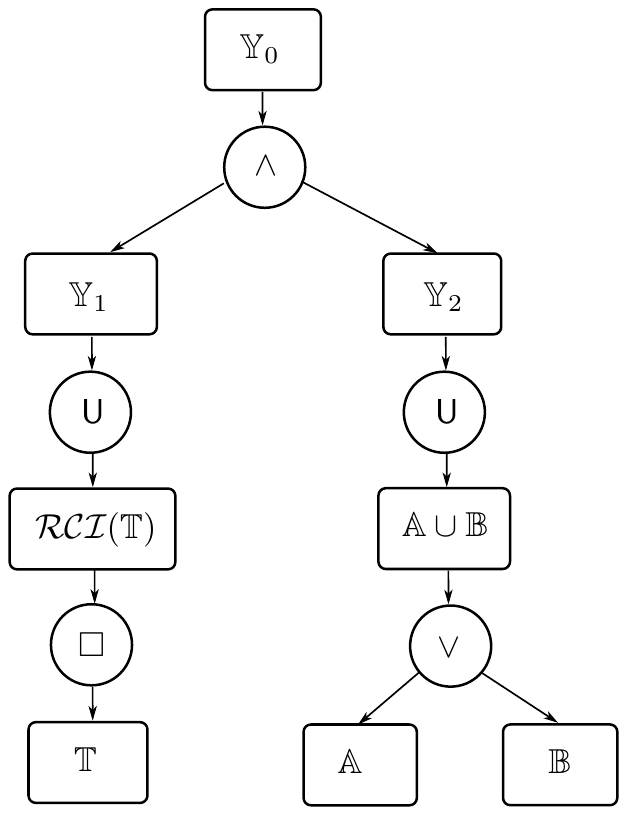}
\caption{}
\end{subfigure}
	\caption{\footnotesize  (a) Scenario of Example 1; (b) The controlled TLT for the LTL formula $\varphi= ((a_1 \wedge \neg a_2 \wedge \neg a_3) \mathsf{U} \Box a_6) \wedge (\neg a_6 \mathsf{U} (a_4 \vee a_5) )$ in Example 1, where $\mathbb{Y}_0=\mathbb{Y}_1\cap \mathbb{Y}_2$, $\mathbb{Y}_1=\mathcal{R}^{\rm c}(\mathbb{X}\setminus(\mathbb{O}_1\cup\mathbb{O}_2),\mathcal{RCI}(\mathbb{T}))$, and $\mathbb{Y}_2=\mathcal{R}^{\rm c}(\mathbb{X}\setminus \mathbb{T},\mathbb{A}\cup \mathbb{B})$.}
	\label{Fig:example1_scenariopdf}
\end{figure}

\begin{figure}
\hspace{-0.5cm}
	\includegraphics[width=0.55\textwidth]{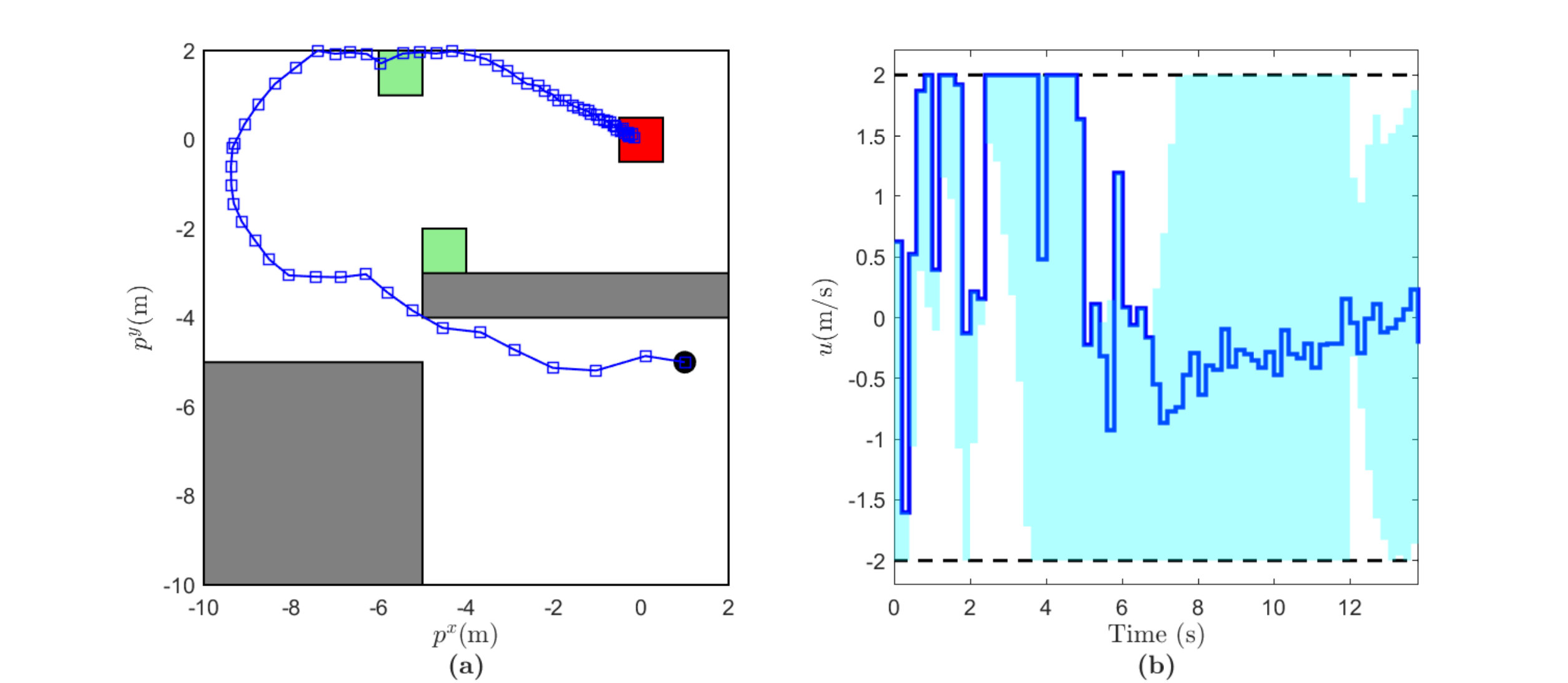}
	\caption{\footnotesize  Trajectories starting from the initial states $[1,-5]^T$: (a) state trajectory; (b) control trajectory together with control sets.}
	\label{Fig:Exam1_case1}
\end{figure}

\begin{figure}
\hspace{-0.5cm}
	\includegraphics[width=0.55\textwidth]{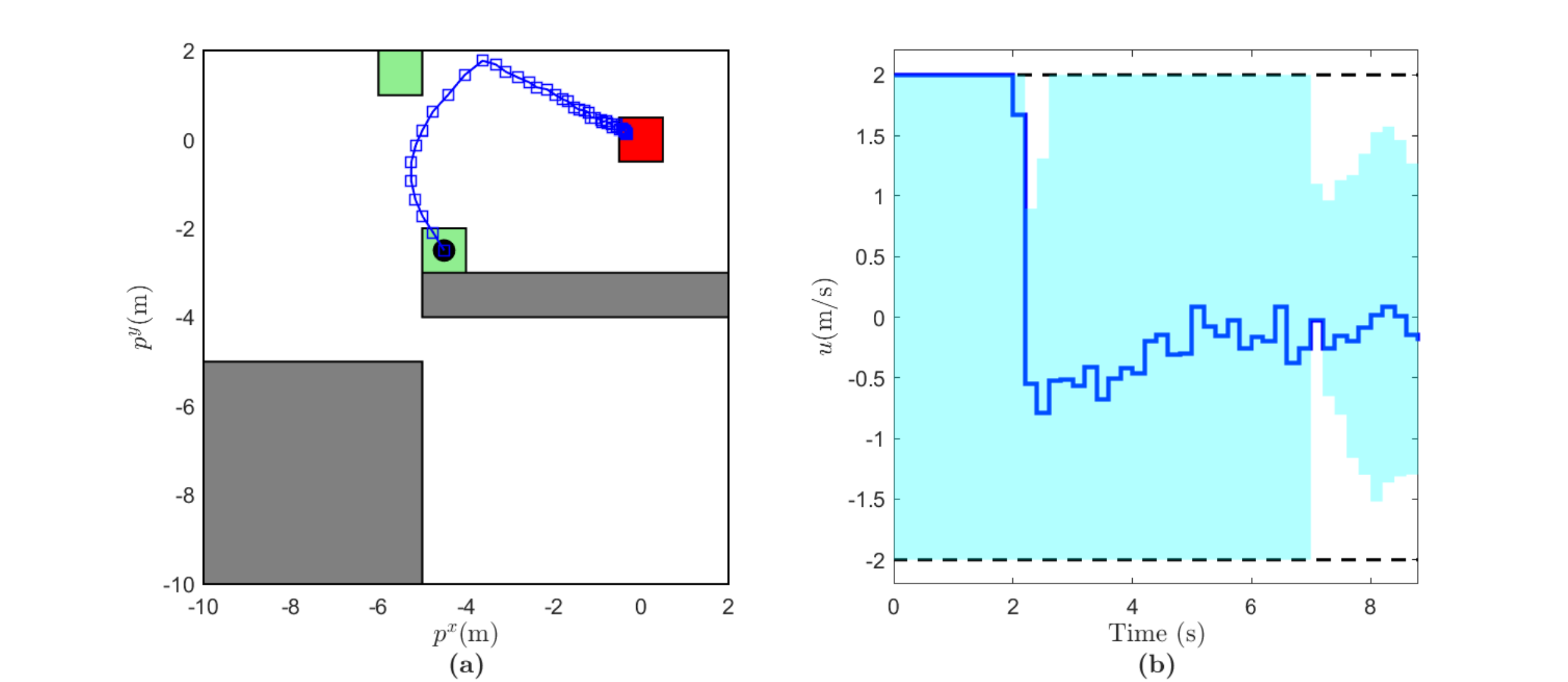}
	\caption{\footnotesize Trajectories starting from the initial states $[-4.5,-2.5]^T$: (a) state trajectory; (b) control trajectory.}
	\label{Fig:Exam1_case2}
\end{figure}

\begin{figure}
\hspace{-0.5cm}
	\includegraphics[width=0.55\textwidth]{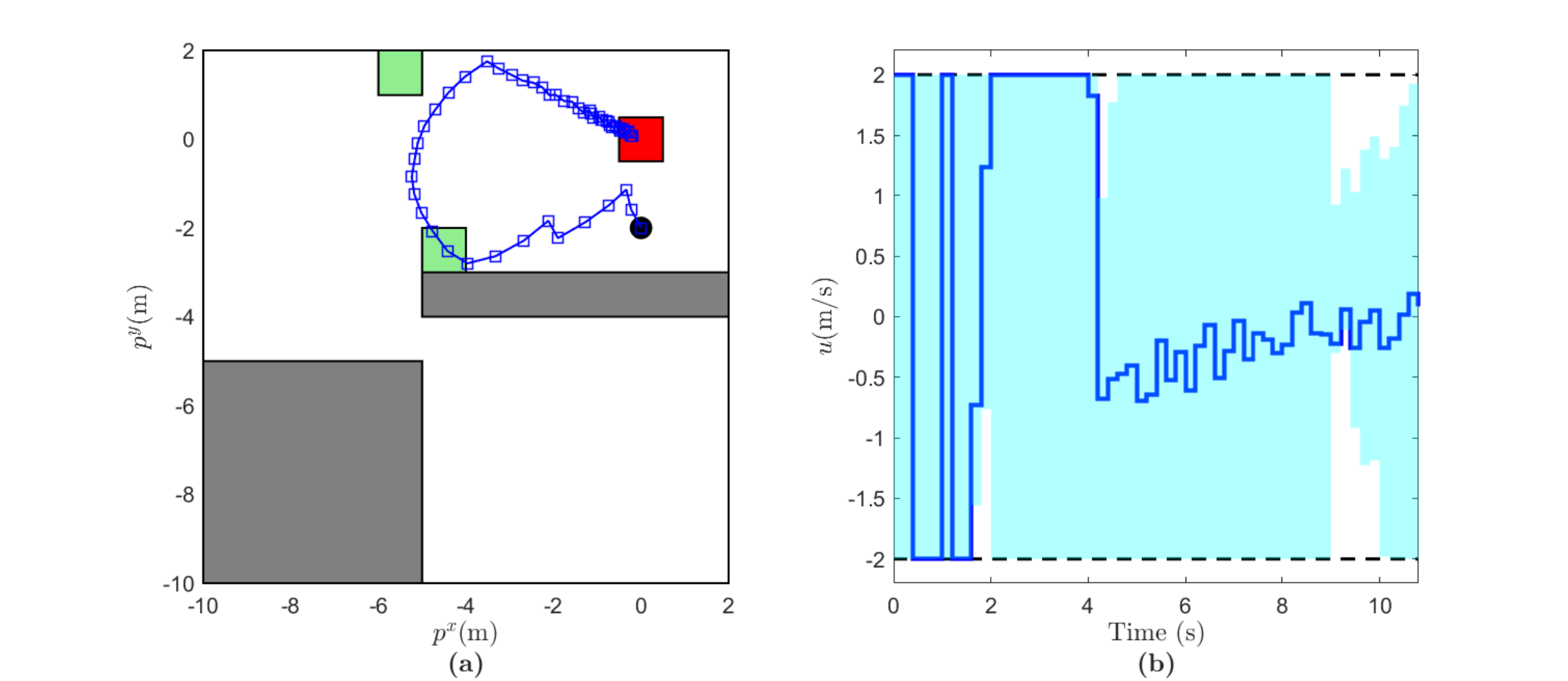}
	\caption{\footnotesize Trajectories starting from the initial states $[0,-2]^T$: (a) state trajectory; (b) control trajectory.}
	\label{Fig:Exam1_case3}
\end{figure}

\begin{figure*}
\centering
\includegraphics[width=0.81\textwidth]{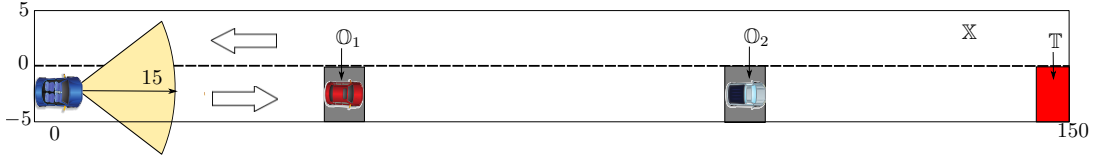}
	\caption{\footnotesize  Scenario illustration of Example 2: an automated vehicle plans to reach a target set $\mathbb{T}$ but with some unknown broken vehicles on the road.}
	\label{Fig:example2_scenariopdf}
\end{figure*}

Recall the system  $\mathsf{CS}$ (\ref{Eq:disConSys}).  Let the set of the observations be $\mathcal{O}=\{a_1,a_2,a_3,a_4,a_5,a_6\}$ and, if $x\in \mathbb{X}$, we define the observation function as
\begin{eqnarray}\label{Eq:Obdoubleintegrator}
g(x)=\begin{cases}
       \{a_1,a_2\}, & \mbox{if } x\in \mathbb{X} \cap \mathbb{O}_1, \\
       \{a_1,a_3\}, & \mbox{if } x\in \mathbb{X} \cap \mathbb{O}_2, \\
       \{a_1,a_4\}, & \mbox{if } x\in \mathbb{X} \cap \mathbb{A}, \\
       \{a_1,a_5\}, & \mbox{if } x\in \mathbb{X} \cap \mathbb{B}, \\
       \{a_1,a_6\}, & \mbox{if } x\in \mathbb{X} \cap \mathbb{T} \\
       \{a_1\}, & \mbox{otherwise}.
     \end{cases}
\end{eqnarray}

As shown in Remark~\ref{SSec:disConSys}, we can rewrite the system (\ref{Eq:doubleintegrator}) with the observation function (\ref{Eq:Obdoubleintegrator}) as a controlled transition system with the set of atomic propositions $\mathcal{AP}=\mathcal{O}$ and the labelling function $L=g$.

In \cite{Gol2014}, the specification is to visit the region $\mathbb{A}$ or region $\mathbb{B}$,
and then the target region $\mathbb{T}$, while always avoiding obstacles $\mathbb{O}_1$
and $\mathbb{O}_2$, and staying inside the working space $\mathbb{X}$. This specification can be expressed as a co-safe LTL formula $\varphi'= ((a_1 \wedge \neg a_2 \wedge \neg a_3) \mathsf{U} a_6) \wedge (\neg a_6 \mathsf{U} (a_4 \vee a_5) )$. Here, we extend the specification  to be to visit region $\mathbb{A}$ or region $\mathbb{B}$,
and then visit and \emph{always} stay inside the target region $\mathbb{T}$, while always avoiding obstacles $\mathbb{O}_1$
and $\mathbb{O}_2$, and staying inside working space $\mathbb{X}$. Obviously, this specification cannot be expressed as a  co-safe LTL formula, and thus cannot be handled by the approach in \cite{Gol2014}. We instead express this specification as the LTL formula $\varphi= ((a_1 \wedge \neg a_2 \wedge \neg a_3) \mathsf{U} \Box a_6) \wedge (\neg a_6 \mathsf{U} (a_4 \vee a_5) )$.  We will show that our approach can handle such non-co-safe LTL formula. By computing inner approximations of the controlled reachable sets, we can construct the controlled TLT of $\varphi$ and then use Algorithm 3 to synthesize controllers online. The constructed controlled TLT for $\varphi$  is shown in Fig.~\ref{Fig:example1_scenariopdf}(b). Similar to \cite{Gol2014}, we choose three different initial states, for each of which the state trajectories and the control trajectories are shown in Figs.~\ref{Fig:Exam1_case1}--\ref{Fig:Exam1_case3}. We can see that in Figs.~\ref{Fig:Exam1_case1}--\ref{Fig:Exam1_case3} (a), all state trajectories satisfy the required specification $\varphi$. The black dots are the initial state. In this example, the target region $\mathbb{T}$ is a RCIS.   After entering $\mathbb{T}$, the states stay there by using the controllers that ensure  robust invariance. In Figs.~\ref{Fig:Exam1_case1}--\ref{Fig:Exam1_case3} (b),  the dashed lines denote the control bounds, the cyan regions represent the synthesized control sets  in Algorithm 3, and the blue lines are the implemented control inputs.

\subsection{Online Specification Update}
This example will show how the specification can be updated online when using our approach. As shown in Fig.~\ref{Fig:example2_scenariopdf}, we consider a scenario where an automated vehicle plans to move to a target set $\mathbb{T}$ but with some unknown obstacles on the road. The sensing region of the vehicle is limited. We use a single integrator model with a sample period of $1$ second to model the dynamics of the vehicle:
\begin{eqnarray}\label{Eq:singleintegrator}
x_{k+1}=\left[\begin{array}{ccccccc}
1 & 0 \\
0 & 1
\end{array}\right]x_k+ \left[\begin{array}{ccccccc}
1  & 0 \\
0  & 1
\end{array}\right]u_k+w_k.
\end{eqnarray}
The working space is $\mathbb{X}=\{z\in \mathbb{R}^2\mid [0, -5]^T \leq z\leq [150, 5]^T\}$, the control  constraint set is $\mathbb{U}=\{z\in \mathbb{R}^2\mid [-2, -0.5]^T \leq z\leq [2, 0.5]^T \}$, the disturbance set is  $\mathbb{W}=\{z\in \mathbb{R}^2\mid [-0.1, -0.1]^T \leq z\leq [0.1, 0.1]^T\}$, and the target region is  $\mathbb{T}=\{z\in \mathbb{R}^2\mid [145, -5]^T \leq z\leq [150, 0]^T\}$. We assume that $\mathbb{X}$, $\mathbb{U}$, and $\mathbb{W}$ are known \emph{a priori} to the vehicle and the vehicle should move along the lane with the right direction unless lane change is necessary.
In Fig.~\ref{Fig:example2_scenariopdf}, there are two broken vehicles in the sets $\mathbb{O}_1=\{z\in \mathbb{R}^2\mid [40, -5]^T \leq z\leq [45, 0]^T\}$ and $\mathbb{O}_2=\{z\in \mathbb{R}^2\mid [100, -5]^T \leq z\leq [105, 0]^T\}$.   We assume that $\mathbb{O}_1$ and $\mathbb{O}_2$ are unknown to the vehicle at the beginning. As long as the vehicle can sense them, they are known to  the vehicle.

\begin{figure}
\centering
 \includegraphics[width=0.31\textwidth]{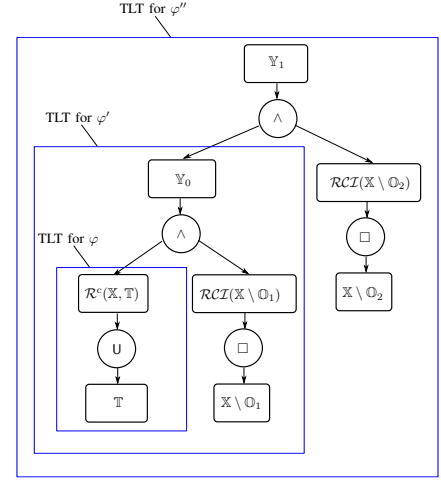}
	\caption{\footnotesize The controlled TLT for the LTL formulae  in Example 2, where $\varphi= a_1 \mathsf{U} a_2$, $\varphi'= \varphi \wedge (\Box \neg a_3)$, $\varphi''= \varphi' \wedge (\Box \neg a_4)$, $\mathbb{Y}_0=\mathcal{R}^{\rm c}(\mathbb{X},\mathbb{T})\cap \mathcal{RCI}(\mathbb{X}\setminus\mathbb{O}_1)$, and $\mathbb{Y}_1=\mathcal{R}^{\rm c}(\mathbb{X},\mathbb{T})\cap \mathcal{RCI}(\mathbb{X}\setminus\mathbb{O}_1)\cap \mathcal{RCI}(\mathbb{X}\setminus\mathbb{O}_2)$.}
	\label{Fig:example2_TLTpdf}
\end{figure}

\begin{figure*}
\hspace{-0.3cm}
	\includegraphics[width=0.95\textwidth]{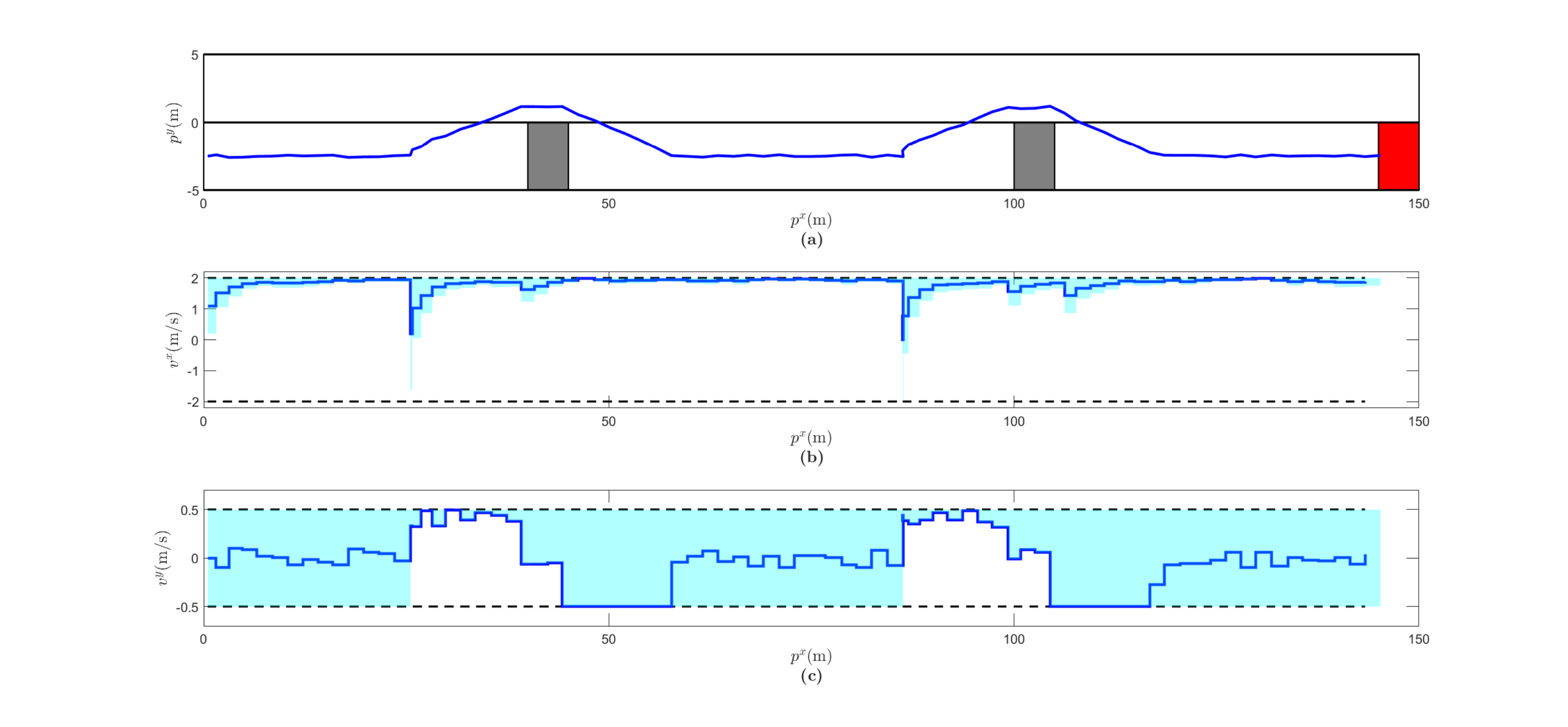}
	\caption{\footnotesize  Trajectories for 1 realization of disturbance trajectories:  (a) state trajectories; (b) control trajectories of $x$-axis; (c) control trajectories of $y$-axis.}
	\label{Fig:Exam2_onerealization}
\end{figure*}

\begin{figure*}
\hspace{-0.3cm}
	\includegraphics[width=0.95\textwidth]{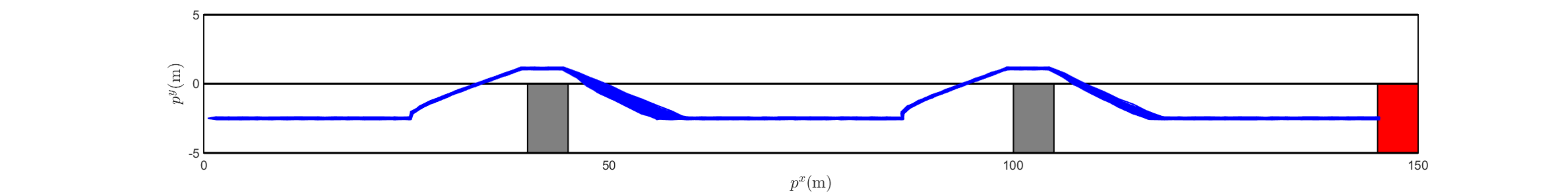}
	\caption{\footnotesize  State trajectories for 100 realizations of disturbance trajectories.}
	\label{Fig:Exam2_100realization}
\end{figure*}

Let the initial state be $x_0=[0.5, -2.5]^T$ and the sensing limitation is $15$. At time step $k=0$, the set of observations is $\mathcal{O}=\{a_1,a_2\}$ and if $x\in \mathbb{X}$, we define the observation function as
\begin{eqnarray*}\label{Eq:Obsingleintegrator1}
g(x)=\begin{cases}
       \{a_1,a_2\}, & \mbox{if } x\in \mathbb{X} \cap \mathbb{T}, \\
       \{a_1\}, & \mbox{otherwise}.
     \end{cases}
\end{eqnarray*}
The initial specification can be expressed as an LTL $\varphi= a_1 \mathsf{U} a_2$. By constructing the controlled TLT of $\varphi$ shown in Fig.~\ref{Fig:example2_TLTpdf} and implementing Algorithm 3, we obtain one realization as shown in Fig.~\ref{Fig:Exam2_onerealization}. We can see that the vehicle keeps moving straightforward until it senses the obstacle $\mathbb{O}_1$ at $[25.5, -2.4]^T$.

When the vehicle can sense $\mathbb{O}_1$, a new observation $a_3$ with $a_3\neq a_1$ and  $a_3\neq a_2$ is added to the set $\mathcal{O}$, which becomes $\mathcal{O}=\{a_1,a_2,a_3\}$.  If $x\in \mathbb{X}$, we update the observation function as
\begin{eqnarray*}\label{Eq:Obsingleintegrator2}
g(x)=\begin{cases}
       \{a_1,a_2\}, & \mbox{if } x\in \mathbb{X} \cap \mathbb{T}, \\
       \{a_1,a_3\}, & \mbox{if } x\in \mathbb{X} \cap \mathbb{O}_1, \\
       \{a_1\}, & \mbox{otherwise}.
     \end{cases}
\end{eqnarray*}
To avoid $\mathbb{O}_1$, the new specification is changed to be $\varphi'= \varphi \wedge (\Box \neg a_3)$. We can construct the TLT of $\varphi'$ based on that of $\varphi$, which is shown in Fig.~\ref{Fig:example2_TLTpdf}, and then continue to implement Algorithm 3. We can see that the vehicle changes lane from $[25.5, -2.4]^T$ and quickly merges back after overtaking $\mathbb{O}_1$. The trajectories are shown in Fig.~\ref{Fig:Exam2_onerealization}. The vehicle is under the control with respect to $\varphi'$ until it can sense  $\mathbb{O}_2$ at $[86.3, -2.5]^T$.

Similarly, when the vehicle can sense $\mathbb{O}_2$, we update $\mathcal{O}=\{a_1,a_2,a_3,a_4\}$ and the observation function as  if $x\in \mathbb{X}$,
\begin{eqnarray*}\label{Eq:Obsingleintegrator3}
g(x)=\begin{cases}
       \{a_1,a_2\}, & \mbox{if } x\in \mathbb{X} \cap \mathbb{T}, \\
       \{a_1,a_3\}, & \mbox{if } x\in \mathbb{X} \cap \mathbb{O}_1, \\
       \{a_1,a_4\}, & \mbox{if } x\in \mathbb{X} \cap \mathbb{O}_2, \\
       \{a_1\}, & \mbox{otherwise}.
     \end{cases}
\end{eqnarray*}
To avoid $\mathbb{O}_2$, the new specification is changed  to be $\varphi''= \varphi'\wedge (\Box \neg a_4)$. We can construct the TLT of $\varphi''$ based on that of $\varphi'$, which is shown in Fig.~\ref{Fig:example2_TLTpdf},  and then continue to implement Algorithm 3. We can see that the vehicle changes lane from $[86.3, -2.5]^T$ and quickly merges back after overtaking $\mathbb{O}_2$. Under the control with respect to $\varphi''$, the vehicle finally reaches the target set $\mathbb{T}$.

Fig.~\ref{Fig:Exam2_onerealization} (a) shows the state trajectories, from which we can see that the whole specification is completed.  Fig.~\ref{Fig:Exam2_onerealization} (b)--(c) show the corresponding control inputs, where the dashed lines denote the control bounds.
The cyan regions represent the synthesized control
sets and the blue lines are the control trajectories. Furthermore, we repeat the above process for  100 realizations of the disturbance trajectories. The state trajectories for such 100 realizations are shown in Fig.~\ref{Fig:Exam2_100realization}.

We remark that in this example,  the  control inputs are chosen to push the state to move down along the TLT as fast as possible. In detail, if the state $x_k$ is the $i$-step reachable set in the set node $\mathcal{R}^{\rm c}(\mathbb{X},\mathbb{T})$, we can generate a smaller control set from which the control input can push the state to the $(i-1)$-step reachable set. That is what we can see from Fig.~\ref{Fig:Exam2_onerealization}, where almost all control inputs in the synthesized control sets along $x$-axis are positive.

\section{Conclusions and Future Work}\label{Sec:Conclusion}

We have studied  LTL model checking and control synthesis for discrete-time uncertain systems. Quite unlike automaton-based methods, our solutions build on the connection between LTL formulae and TLT structures via reachability analysis. For a transition system and an LTL formula, we have proved that the TLTs provide an
underapproximation or overapproximation for the LTL via minimal and maximal reachability analysis, respectively.  We have provided  sufficient
conditions and  necessary conditions to the model checking problem.
   For a controlled transition system and an LTL formula, we have shown
that the TLT is an underapproximation for the LTL
formula and thereby proposed an online control synthesis
algorithm, under which a set of feasible control inputs is
generated at each time step. We have proved
that this algorithm is recursively feasible. We have also illustrated the effectiveness of the proposed methods  through several examples.

Future work includes the extension of TLTs to handle other
general  specifications (e.g., CTL$^*$) and a broad experimental
evaluation of our approach.


\section*{Acknowledgements}
The authors are grateful to Dr. Xiaoqiang Ren (Shanghai University) and Mr. Hosein Hasanbeig (University of Oxford) for helpful discussions and feedback.

\begin{figure*}
\centering
\begin{subfigure}{.34\textwidth}
\includegraphics[width=5.3cm,height=3.6cm,keepaspectratio]{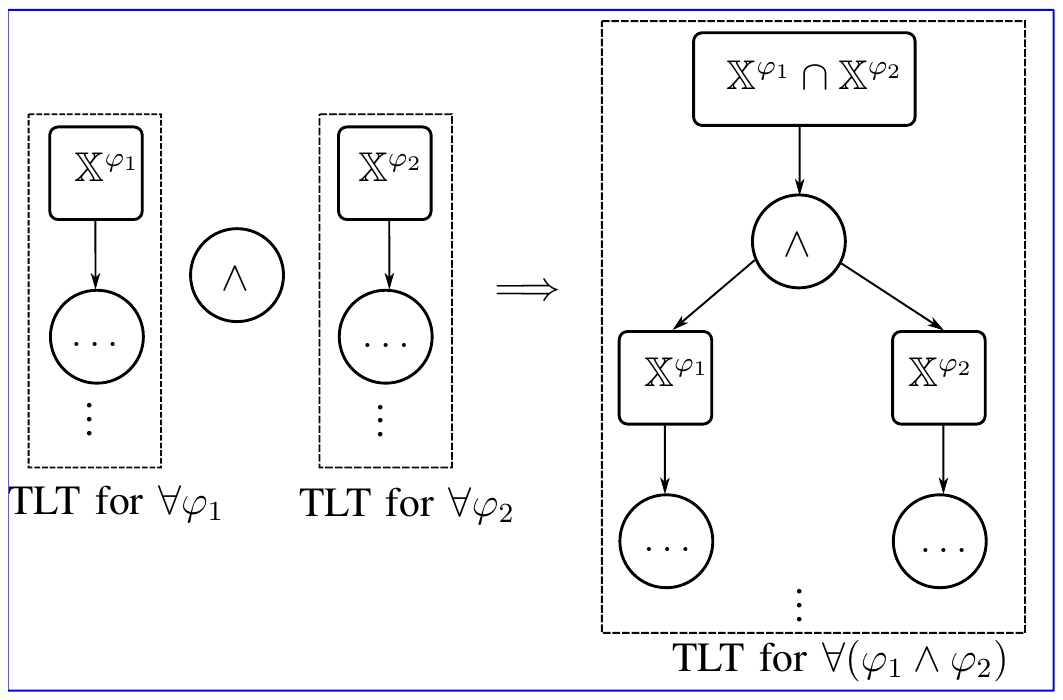}
  \caption{}
\end{subfigure}
\begin{subfigure}{.34\textwidth}
\includegraphics[width=5.3cm,height=3.6cm,keepaspectratio]{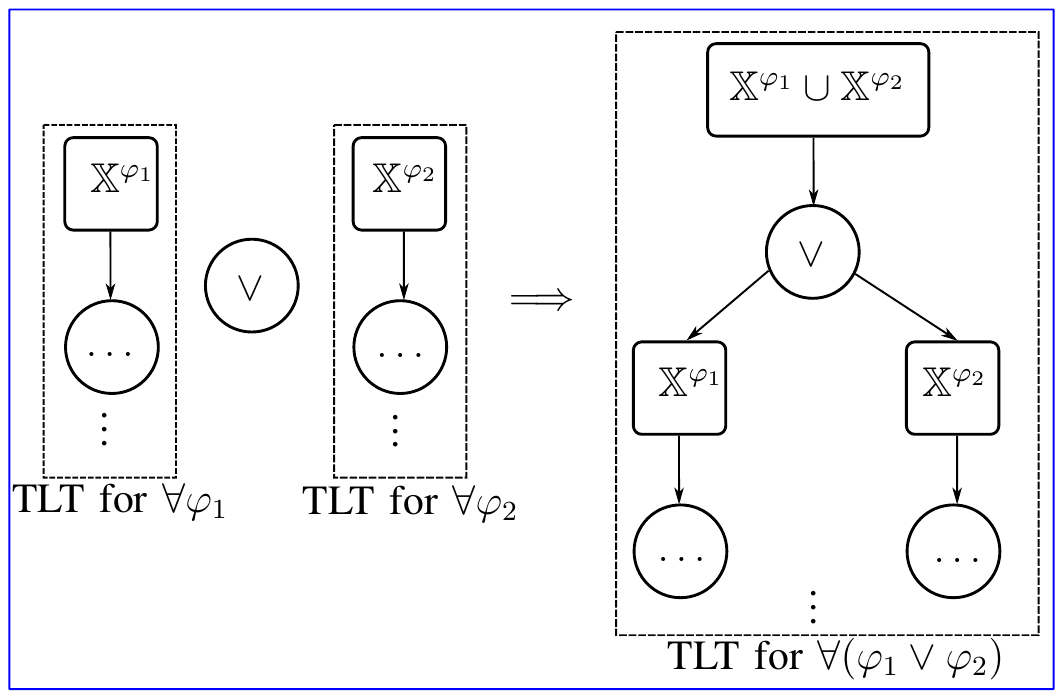}
  \caption{}
\end{subfigure}
\begin{subfigure}{.28\textwidth}
\includegraphics[width=5.3cm,height=3.6cm,keepaspectratio]{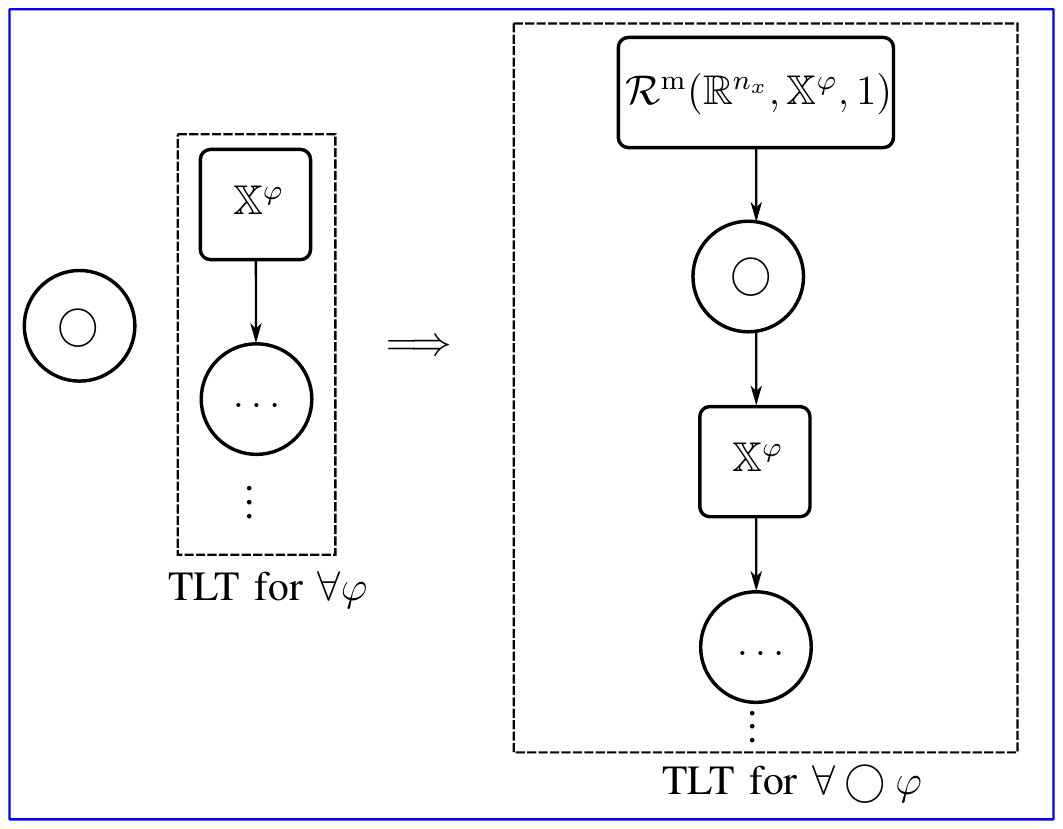}
  \caption{}
\end{subfigure}

\hspace{-0.6cm}
\begin{subfigure}{.45\textwidth}
\includegraphics[width=10cm,height=7cm,keepaspectratio]{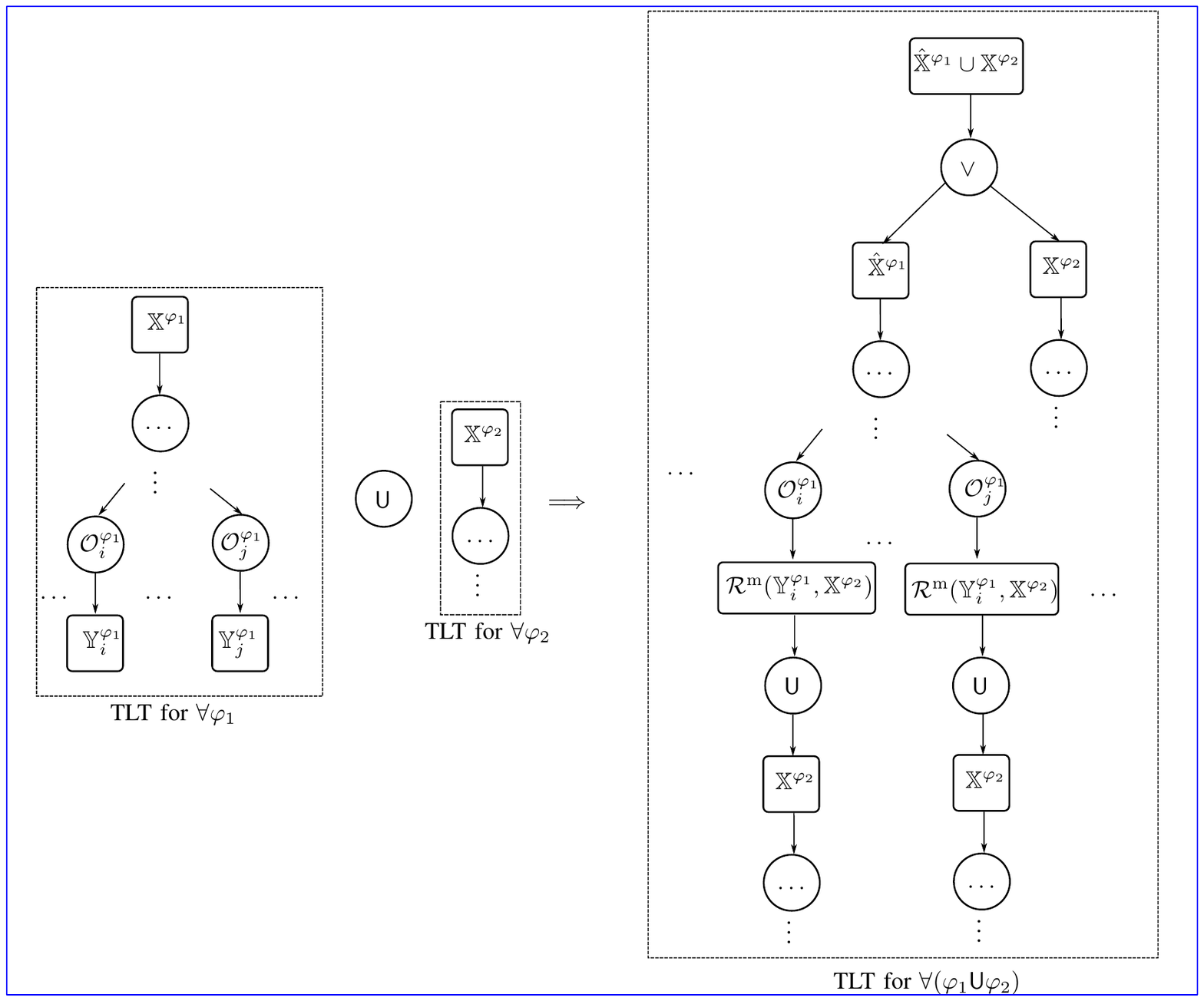}
  \caption{}
\end{subfigure}
\hspace{0.6cm}
\begin{subfigure}{0.42\textwidth}
\includegraphics[width=10cm,height=7cm,keepaspectratio]{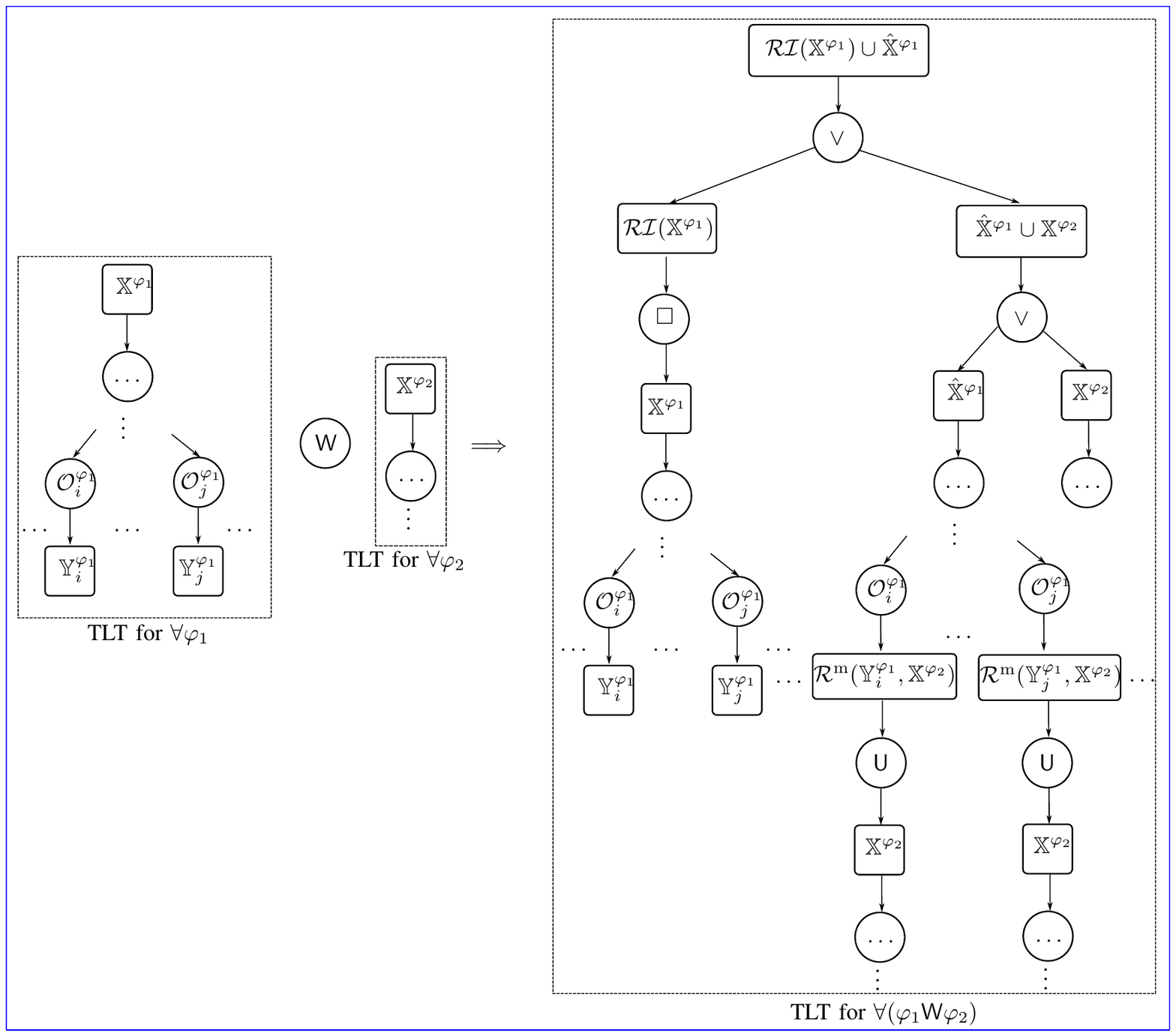}
  \caption{}
\end{subfigure}
	\caption{\footnotesize The TLT construction: (a) $\forall (\varphi_1 \wedge \varphi_2)$; (b) $\forall (\varphi_1 \vee \varphi_2)$; (c) $\forall \bigcirc \varphi$; (d) $\forall (\varphi_1 \mathsf{U} \varphi_2)$; (e) $\forall (\varphi_1 \mathsf{W} \varphi_2)$. Here, the circles denote the operator nodes and the rectangles denote the set nodes.}
	\label{Fig:TLTCons}
\end{figure*}

\section*{Appendix A. Proof of Theorem~\ref{The:UniTLTLTL}}
The whole proof is divided into two parts: the first part shows how to construct  a TLT from the formula $\forall \varphi$ by means of the reachability  operators $\mathcal{R}^{\rm m}$ and $\mathcal{RI}$, while the second part shows that such TLT is an underapproximation for $\varphi$.

\textbf{Construction}: We follow three steps to construct a TLT.

\emph{Step 1: rewrite the given LTL in the weak-until positive normal form.}
From \cite{Baier2008}, each LTL formula has an equivalent LTL formula in the weak-until positive normal form,
which can be inductively defined as
\begin{eqnarray}\label{weakuntilLTLdef}
&&\varphi ::= {\rm{true}} \mid {\rm{false}} \mid a \mid \neg a \mid \varphi_1 \wedge \varphi_2 \mid \varphi_1\vee \varphi_2 \mid \bigcirc \varphi \nonumber \\
&&\hspace{1cm} \mid  \varphi_1 \mathsf{U} \varphi_2 \mid  \varphi_1 \mathsf{W} \varphi_2.
\end{eqnarray}

\emph{Step 2:  for each atomic proposition $a\in\mathcal{AP}$, construct the TLT with only a single set node from $\forall a$ or $\forall \neg a$.} In detail, the set node for $\forall a$ is $L^{-1}(a)=\{x\in \mathbb{S}\mid a\in L(x)\}$ while the set node for $\forall \neg a$ is $\mathbb{S}\setminus L^{-1}(a)$. In addition, the TLT for $\forall {\rm{true}}$ (or $\forall{\rm{false}}$) also has a single set node, which is $\mathbb{S}$ (or $\emptyset$).

\emph{Step 3: based on Step 2, follow the induction rule  to construct the TLT for any LTL formula in the weak-until positive normal form.} More specifically, we will show that given the LTL formulae $\varphi$, $\varphi_1$, and $\varphi_2$ in the weak-until positive normal form, if the  TLTs can be constructed from $\forall \varphi$, $\forall \varphi_1$, and $\forall \varphi_2$, respectively, then the TLTs can be thereby constructed from the formulae $\forall (\varphi_1 \wedge \varphi_2)$, $\forall (\varphi_1\vee \varphi_2)$, $\forall \bigcirc \varphi$, $\forall (\varphi_1 \mathsf{U} \varphi_2)$, and $\forall (\varphi_1 \mathsf{W} \varphi_2)$, respectively.

For $\forall (\varphi_1 \wedge \varphi_2)$ (or $\forall(\varphi_1\vee \varphi_2)$), we construct the TLT  by connecting the root nodes of the TLTs for  $\forall \varphi_1$ and $\forall \varphi_2$ through the operator  $\wedge$ (or $\vee$) and taking the intersection (or union) of two root nodes, as shown in Fig.~\ref{Fig:TLTCons}(a)--(b).  For $\forall \bigcirc \varphi$, we denote by $\mathbb{X}^{\varphi}$ the root node of the TLT for $\forall \varphi$ and then construct the TLT by adding a new set node $\mathcal{R}^{\rm{m}}(\mathbb{S},\mathbb{X}^{\varphi},1)$ to be the parent of $\mathbb{X}^{\varphi}$ and connecting them through the operator $\bigcirc$, as shown in Fig.~\ref{Fig:TLTCons}(c).

For $\forall (\varphi_1 \mathsf{U} \varphi_2)$, the TLT construction is as follows.   Denote by $\{(\mathbb{Y}^{\varphi_1}_i, \mathcal{O}^{\varphi_1}_i)\}_{i=1}^{N^{\varphi_1}}$ all the pairs comprising a leaf node and its corresponding parent in the TLT of $\forall \varphi_1$, where $N^{\varphi_1}$ is the number of the leaf nodes. Here, $\mathbb{Y}^{\varphi_1}_i$ is the $i$th leaf node and $\mathcal{O}^{\varphi_1}_i$ is its parent.   Denote by  $\mathbb{X}^{\varphi_2}$ the root node of TLT for $\forall\varphi_2$.
 We first change each leaf node $\mathbb{Y}^{\varphi_1}_i$ to $\mathcal{R}^{\rm{m}}(\mathbb{Y}^{\varphi_1}_i,\mathbb{X}^{\varphi_2})\setminus \mathbb{X}^{\varphi_2} $. We then update the new tree for $\forall \varphi_1$ from the leaf node to the root node according to the definition of the operators. After that, we take $N^{\varphi_1}$ copies of the TLT of $\varphi_2$. We set the root node of each copy as the child of each new leaf node, respectively, and connect them trough the operator $\mathsf{U}$. Finally, we have one more copy of the TLT of $\forall \varphi_2$ and connect this copy and the new tree  trough the disjunction $\vee$. An illustrative diagram is given in Fig.~\ref{Fig:TLTCons}(d).

For the fragment $\forall (\varphi_1 \mathsf{W} \varphi_2)$, we first recall that $\varphi_1 \mathsf{W}\varphi_2 = \varphi_1 \mathsf{U} \varphi_2 \vee \Box \varphi_1$. Let $\varphi'= \varphi_1 \mathsf{U} \varphi_2$ and $\varphi''=\Box \varphi_1$. Denote by  $\mathbb{X}^{\varphi_1}$ the root node of the TLT for $\forall \varphi_1$.
We first  construct the  TLT of $\forall \varphi'$ as described above.  Second, we further construct the TLT of $\forall  \varphi''$ with by adding a new node $\mathcal{RI}(\mathbb{X}^{\varphi_1})$ as the parent of $\mathbb{X}^{\varphi_1}$ and connecting them through $\Box$. Then, we
 construct the TLT of $\forall (\varphi'\vee\varphi'')$. An illustrative diagram is given in Fig.~\ref{Fig:TLTCons}(e).

\textbf{Underapproximation}:  First, it is very easy to verify that the constructed TLT above with a single set node $L^{-1}(a)$ (or $\mathbb{S}\setminus L^{-1}(a)$ or ) for $\forall a$ (or $\forall \neg a$ or $\mathbb{S}$ or $\emptyset$) is an underapproximation for $a\in\mathcal{AP}$ (or $\neg a$ or $\forall {\rm{true}}$ or $\forall{\rm{false}}$) and the underapproximation relation in these cases  is also tight.

Next we also follow the induction rule to show that the constructed TLT from $\forall \varphi$ is an underapproximation for  $\varphi$. Consider LTL formulae $\varphi$, $\varphi_1$, and $\varphi_2$. We will show that  if  the constructed TLTs of $\forall \varphi$, $\forall \varphi_1$, and $\forall\varphi_2$ are the underapproximations of $\varphi$, $\varphi_1$, and $\varphi_2$, respectively, then the  TLTs constructed above  for the formulae $\forall (\varphi_1 \wedge \varphi_2)$, $\forall (\varphi_1\vee \varphi_2)$, $\forall \bigcirc \varphi$, $\forall (\varphi_1 \mathsf{U} \varphi_2)$, and $\forall (\varphi_1 \mathsf{W} \varphi_2)$  are the underapproximations of $\varphi_1 \wedge \varphi_2$, $\varphi_1\vee \varphi_2$, $\bigcirc \varphi$, $\varphi_1 \mathsf{U} \varphi_2$, and $\varphi_1 \mathsf{W} \varphi_2$, respectively.

According to the set operation (intersection or union) or the definition of one-step minimal reachable set, it is easy to verify that the constructed TLT for  $\forall (\varphi_1 \wedge \varphi_2)$ (or $\forall (\varphi_1\vee \varphi_2)$)  or $\forall \bigcirc \varphi$ is an underapproximation for $\varphi_1 \wedge \varphi_2$ (or $\varphi_1\vee \varphi_2$) or  $\bigcirc \varphi$ if the TLTs of  $\forall \varphi_1$ and $\forall\varphi_2$, and $\forall \varphi$ are underapproximations $\varphi$, $\varphi_1$, and $\varphi_2$, respectively.

Let us consider $\varphi_1\mathsf{U} \varphi_2$.  Assume that  a trajectory $\bm{p}$ satisfies  the TLT of  $\forall (\varphi_1\mathsf{U} \varphi_2)$. Recall the construction of the TLT of  $\forall (\varphi_1\mathsf{U} \varphi_2)$ from  $\forall \varphi_1$ and $\forall \varphi_2$. According to the definition of minimal reachable set, we have (1) $\bm{p}$ satisfies  the TLT of $\forall \varphi_2$; \emph{or} (2) there exists that $j\in \mathbb{N}$ such that
$\bm{p}[j..]$ satisfies  the TLT of $\forall \varphi_2$ and for all $i\in \mathbb{N}_{[0,j-1]}$, the trajectory $p[i..]$ satisfies the the TLT of $\forall \varphi_1$. Under the assumption that the TLTs of $\forall \varphi_1$ and $\forall \varphi_2$ are the underapproximations of $\varphi_1$ and $\varphi_2$, respectively, we have that there exists $j\in \mathbb{N}$ such that $\bm{p}[j..] \vDash \varphi_2$ and for all $i\in \mathbb{N}_{[0,j-1]}$, $p[i..]\vDash \varphi_1$, which implies that $\bm{p}\vDash \varphi_1\mathsf{U} \varphi_2$. Thus, the TLT of  $\forall (\varphi_1\mathsf{U} \varphi_2)$ is an approximation of $\varphi_1\mathsf{U} \varphi_2$.

Recall that $\varphi_1 \mathsf{W}\varphi_2 = \varphi_1 \mathsf{U} \varphi_2 \vee \Box \varphi_1$. Following the proofs for until operator $\mathsf{U}$ and the disjunction $\vee$ and the definition of the robust invariant set, it yields that the constructed TLT of  $\forall (\varphi_1\mathsf{W} \varphi_2)$ is an underapproximation of $\varphi_1\mathsf{W} \varphi_2$.

The proof is complete. \qed


\begin{thebibliography}{10}
\providecommand{\url}[1]{#1}
\csname url@samestyle\endcsname
\providecommand{\newblock}{\relax}
\providecommand{\bibinfo}[2]{#2}
\providecommand{\BIBentrySTDinterwordspacing}{\spaceskip=0pt\relax}
\providecommand{\BIBentryALTinterwordstretchfactor}{4}
\providecommand{\BIBentryALTinterwordspacing}{\spaceskip=\fontdimen2\font plus
\BIBentryALTinterwordstretchfactor\fontdimen3\font minus
  \fontdimen4\font\relax}
\providecommand{\BIBforeignlanguage}[2]{{%
\expandafter\ifx\csname l@#1\endcsname\relax
\typeout{** WARNING: IEEEtran.bst: No hyphenation pattern has been}%
\typeout{** loaded for the language `#1'. Using the pattern for}%
\typeout{** the default language instead.}%
\else
\language=\csname l@#1\endcsname
\fi
#2}}
\providecommand{\BIBdecl}{\relax}
\BIBdecl

\bibitem{Astrom2013}
K.~J. {\AA}str\"{o}m and B.~Wittenmark, \emph{{Computer-controlled Systems:
  Theory and Design}}.\hskip 1em plus 0.5em minus 0.4em\relax Courier
  Corporation, 2013.

\bibitem{Tabuada2009book}
P.~Tabuada, \emph{{Verification and Control of Hybrid Systems: A Symbolic
  Approach}}.\hskip 1em plus 0.5em minus 0.4em\relax Springer, 2009.

\bibitem{Alurbook2015}
R.~Alur, \emph{{Principles of Cyber-Physical Systems}}.\hskip 1em plus 0.5em
  minus 0.4em\relax MIT Press, 2015.

\bibitem{Baier2008}
C.~Baier and J.-P. Katoen, \emph{{Principles of Model Checking}}.\hskip 1em
  plus 0.5em minus 0.4em\relax MIT press, 2008.

\bibitem{Vardi1996}
M.~Y. Vardi, ``{An automata-theoretic approach to linear temporal logic},'' in
  \emph{{Logics for Concurrency}}, F.~Moller and G.~Birtwistle, Eds.\hskip 1em
  plus 0.5em minus 0.4em\relax Springer, 1996, pp. 238--266.

\bibitem{Rabin1969}
M.~O. Rabin, ``{Decidability of second-order theories and automata on infinite
  trees},'' \emph{Transactions of the American Mathematical Society}, no. 141,
  pp. 1--35, 1969.

\bibitem{Emerson1985}
E.~A. Emerson, ``{Automata, tableaux, and temporal logics},'' in
  \emph{Proceedings of Workshop on Logic of Programs}, 1985, pp. 79--88.

\bibitem{Piterman2006}
N.~Piterman and A.~Pnueli, ``{Faster solutions of rabin and streett games},''
  in \emph{Proceedings of 21st Annual IEEE Symposium on Logic in Computer
  Science}, 2006, pp. 275--284.

\bibitem{Horn2005}
F.~Horn, ``{Streett games on finite graphs},'' in \emph{Proceedings of 2nd
  Workshop on Games in Design and Verification}, 2005.

\bibitem{Girard2007}
A.~Girard and G.~J. Pappas, ``{Approximation metrics for discrete and
  continuous systems},'' \emph{IEEE Transactions on Automatic Control}, vol.~5,
  no.~52, pp. 782--798, 2007.

\bibitem{Girard2010}
A.~Girard, G.~Pola, and G.~J. Pappas, ``{Approximately bisimilar symbolic
  models for incrementally stable switched systems},'' \emph{IEEE Transactions
  on Automatic Control}, vol.~1, no.~55, pp. 116--126, 2010.

\bibitem{Zamani2014}
M.~Zamani, P.~M. Esfahani, R.~Majumdar, A.~Abate, and J.~Lygeros, ``{Symbolic
  control of stochastic systems via approximately bisimilar finite
  abstractions},'' \emph{IEEE Transactions on Automatic Control}, vol.~12,
  no.~59, pp. 3135--3150, 2014.

\bibitem{Zamani2012}
M.~Zamani, G.~Pola, M.~Mazo, and P.~Tabuada, ``{Symbolic models for nonlinear
  control systems without stability assumptions},'' \emph{IEEE Transactions on
  Automatic Control}, vol.~7, no.~57, pp. 1804--1809, 2012.

\bibitem{Yu2019}
P.~Yu and D.~V. Dimarogonas, ``{Approximately symbolic models for a class of
  continuous-time nonlinear systems},'' in \emph{Proceedings of 58th IEEE
  Conference on Decision and Control}, 2019, pp. 4349--4354.

\bibitem{TabuadaHSCC2003}
P.~Tabuada and G.~J. Pappas, ``{Model checking LTL over controllable linear
  systems is decidable},'' in \emph{Proceedings of ACM International Conference
  on Hybrid Systems: Computation and Control}, 2003, pp. 498--513.

\bibitem{YordanovAuto2013}
B.~Yordanov, J.~Tumov\'{a}, I.~\v{C}ern\'{a}, J.~Barnat, and C.~Belta,
  ``{Formal analysis of piecewise affine systems through formula guided
  refinement},'' \emph{Automatica}, vol.~1, no.~49, pp. 261--266, 2013.

\bibitem{YordanovTAC2012}
------, ``{Temporal logic control of discrete-time piecewise affine systems},''
  \emph{IEEE Transactions on Automatic Control}, vol.~6, no.~57, pp.
  1491--1504, 2012.

\bibitem{Meyer2019}
P.-J. Meyer and D.~V. Dimarogonas, ``{Hierarchical decomposition of LTL
  synthesis problem for nonlinear control systems},'' \emph{IEEE Transactions
  on Automatic Control}, vol.~11, no.~64, pp. 4676--4683, 2019.

\bibitem{Haesaert2018}
\BIBentryALTinterwordspacing
S.~Haesaert and S.~Soudjani, ``{Robust dynamic programming for temporal logic
  control of stochastic systems},'' 2018. [Online]. Available:
  \url{arXiv:1811.11445}
\BIBentrySTDinterwordspacing

\bibitem{Karaman2008}
S.~Karaman, R.~G. Sanfelice, and E.~Frazzoli, ``{Optimal control of mixed
  logical dynamical systems with linear temporal logic specifications},'' in
  \emph{Proceedings of 47th IEEE Conference on Decision and Control}, 2008, pp.
  2117--2122.

\bibitem{Cauchi2019}
N.~Cauchi and A.~Abate, ``{StocHy-automated verification and synthesis of
  stochastic processes},'' in \emph{Proceedings of ACM International Conference
  on Hybrid Systems: Computation and Control}, 2019, pp. 258--259.

\bibitem{Ulusoy2014}
A.~Ulusoy and C.~Belta, ``{Receding horizon temporal logic control in dynamic
  environments},'' \emph{The International Journal of Robotics Research},
  vol.~12, no.~33, pp. 1593--1607, 2014.

\bibitem{Guo2016}
M.~Guo, J.~Tumov\'{a}, and D.~V. Dimarogonas, ``{Communication-free multi-agent
  control under local temporal tasks and relative-distance constraints},''
  \emph{IEEE Transactions on Automatic Control}, vol.~12, no.~61, pp.
  3948--3962, 2016.

\bibitem{Coogan2016}
S.~Coogan, E.~A. Gol, M.~Arcak, and C.~Belta, ``{Traffic network control from
  temporal logic specifications},'' \emph{IEEE Transactions on Control of
  Network Systems}, vol.~2, no.~3, pp. 162--171, 2016.

\bibitem{Ding2014}
X.~Ding, M.~Lazar, and C.~Belta, ``{LTL receding horizon control for finite
  deterministic systems},'' \emph{Automatica}, vol.~2, no.~50, pp. 399--408,
  2014.

\bibitem{Wongpiromsarn2012}
T.~Wongpiromsarn, U.~Topcu, and R.~M. Murray, ``{Receding horizon temporal
  logic planning},'' \emph{IEEE Transactions on Automatic Control}, vol.~11,
  no.~57, pp. 2817--2830, 2012.

\bibitem{Schillinger2019}
\BIBentryALTinterwordspacing
P.~Schillinger, M.~B\"{u}rger, and D.~V. Dimarogonas, ``{Hierarchical LTL-task
  MDPs for multi-agent coordination through auctioning and learning},'' 2019.
  [Online]. Available: \url{http://kth.diva-portal.org}
\BIBentrySTDinterwordspacing

\bibitem{Lindemann2019}
L.~Lindemann and D.~V. Dimarogonas, ``{Feedback control strategies for
  multi-agent systems under a fragment of signal temporal logic tasks},''
  \emph{Automatica}, vol. 106, pp. 284--293, 2019.

\bibitem{Kwiatkowska2011}
M.~Kwiatkowska, G.~Norman, and D.~Parker, ``{PRISM 4.0: verification of
  probabilistic real-time systems},'' in \emph{Proceedings of 23rd
  International Conference on Computer Aided Verification}, 2011, pp. 585--591.

\bibitem{BeltaCDC2016}
C.~Belta, ``{Formal synthesis of control strategies for dynamical systems},''
  in \emph{Proceedings of 55th IEEE Conference on Decision and Control}, 2016,
  pp. 3407--3431.

\bibitem{Beltabook2017}
C.~Belta, B.~Yordanov, and E.~A. Gol, \emph{{Formal Methods for Discrete-time
  Dynamical Systems}}.\hskip 1em plus 0.5em minus 0.4em\relax Springer, 2017.

\bibitem{Sessa2018}
P.~G. Sessa, D.~Frick, T.~A. Wood, and M.~Kamgarpour, ``{From uncertainty data
  to robust policies for temporal logic planning},'' in \emph{Proceedings of
  ACM International Conference on Hybrid Systems: Computation and Control},
  2018, pp. 157--166.

\bibitem{Hashimoto2019}
K.~Hashimoto and D.~V. Dimarogonas, ``{Resource-aware networked control systems
  under temporal logic specifications},'' \emph{Discrete Event Dynamic
  Systems}, 2019.

\bibitem{Inoue2018}
M.~Inoue and V.~Gupta, ``{`Weak' control for human-in-the-loop systems},''
  \emph{IEEE Control Systems Letters}, vol.~3, no.~2, pp. 440--445, 2018.

\bibitem{GaoIFAC2020}
Y.~Gao, F.~J. Jiang, X.~Ren, L.~Xie, and K.~H. Johansson, ``{Reachability-based
  human-in-the-loop control with uncertain specifications},'' in
  \emph{Proceedings of 21st IFAC World Congress}, 2020, to appear.

\bibitem{BlanchiniBook2007}
F.~Blanchini and S.~Miani, \emph{{Set-theoretic Methods in Control}}.\hskip 1em
  plus 0.5em minus 0.4em\relax Springer, 2007.

\bibitem{LygerosAuto1999}
J.~Lygeros, C.~Tomlin, and S.~Sastry, ``{Controllers for reachability
  specifications for hybrid systems},'' \emph{Automatica}, vol.~3, no.~35, pp.
  349--370, 1999.

\bibitem{ChenTAC2018}
M.~Chen, S.~L. Herbert, M.~S. Vashishtha, S.~Bansal, and C.~J. Tomlin,
  ``{Decomposition of reachable sets and tubes for a class of nonlinear
  systems},'' \emph{IEEE Transactions on Automatic Control}, vol.~11, no.~63,
  pp. 3675--3688, 2018.

\bibitem{AlthoffTAC2014}
M.~Althoff and B.~H. Krogh, ``{Reachability analysis of nonlinear
  differential-algebraic systems},'' \emph{IEEE Transactions on Automatic
  Control}, vol.~2, no.~59, pp. 371--383, 2014.

\bibitem{Mitchell2011}
I.~M. Mitchell, ``{Scalable calculation of reach sets and tubes for nonlinear
  systems with terminal integrators: a mixed implicit explicit formulation},''
  in \emph{Proceedings of ACM International Conference on Hybrid Systems:
  Computation and Control}, 2011, pp. 103--112.

\bibitem{Chen2018}
\BIBentryALTinterwordspacing
M.~Chen, Q.~Tam, S.~C. Livingston, and M.~Pavone, ``{Signal temporal logic
  meets Hamilton-Jacobi reachability: connections and applications},'' in
  \emph{Proceedings of International Workshop on the Algorithmic Foundations of
  Robotics}, 2018. [Online]. Available:
  \url{http://asl.stanford.edu/wp-content/papercite-data/pdf/Chen.Tam.Livingston.Pavone.WAFR18.pdf}
\BIBentrySTDinterwordspacing

\bibitem{Tarski1955}
A.~Tarski, ``{A lattice-theoretical fixpoint theorem and its application},''
  \emph{Pacific Journal of Mathematics}, no.~5, pp. 285--309, 1955.

\bibitem{MPT3}
M.~Herceg, M.~Kvasnica, C.~N. Jones, and M.~Morari, ``{Multi-parametric toolbox
  3.0},'' in \emph{Proceedings of European Control Conference}, 2013, pp.
  502--510.

\bibitem{Mitchell2005}
I.~M. Mitchell and J.~A. Templeton, ``{A toolbox of Hamilton-Jacobi solvers for
  analysis of nondeterministic continuous and hybrid systems},'' in
  \emph{Proceedings of ACM International Conference on Hybrid Systems:
  Computation and Control}, 2005, pp. 480--494.

\bibitem{Gol2014}
E.~A. Gol, M.~Lazar, and C.~Belta, ``{Language-guided controller synthesis for
  linear systems},'' \emph{IEEE Transactions on Automatic Control}, vol.~5,
  no.~59, pp. 1163 -- 1176, 2014.

\end{thebibliography}
\end{document}